\newtheorem{assumption}{Assumption}
\newtheorem{thm}{Theorem}%[section]
\newtheorem{lemma}{Lemma}%[subsection]
\newtheorem{remark}{Remark}
\newtheorem{prop}{Proposition} %%%%% \newtheorem{algorithm}{Algorithm}
\renewcommand{\baselinestretch}{1}
\def\IR{\rm I \kern-0.20em R}
\renewcommand{\baselinestretch}{1.5}
\begin{document}
\def\spacingset#1{\renewcommand{\baselinestretch}%
{#1}\small\normalsize} \spacingset{1}
\title{\bf Robust Bond Risk Premia Predictability Test in the Quantiles 
}
\author{Xiaosai Liao\hspace{.2cm}\\
Department of Economics, The Chinese University of Hong Kong \\
and \\
Xinjue Li \\
School of Economics, Yunnan University\\
and \\
Qingliang Fan \thanks{{\it Corresponding author:} Q. Fan \textcolor{blue}{(E-mail: michaelqfan@gmail.com).}}\\
Department of Economics, The Chinese University of Hong Kong \\
}
\maketitle
\vspace{-0.5cm}
% \bigskip
\begin{abstract} 
Different from existing literature on testing the macro-spanning hypothesis of bond risk premia, which only considers mean regressions, this paper investigates whether the yield curve represented by CP factor \citep{CochranePiazzesi2005} contains all available information about future bond returns in a predictive quantile regression with many other macroeconomic variables. In this study, we introduce the Trend in Debt Holding (TDH) as a novel predictor, testing it alongside established macro indicators such as Trend Inflation (TI) \citep{CieslakPovala2015}, and macro factors from \cite{ludvigson2009macro}. A significant challenge in this study is the invalidity of traditional quantile model inference approaches, given the high persistence of many macro variables involved. Furthermore, the existing methods addressing this issue do not perform well in the marginal test with many highly persistent predictors. Thus, we suggest a robust inference approach, whose size and power performance are shown to be better than existing tests. Using data from 1980--2022, the macro-spanning hypothesis is strongly supported at center quantiles by the empirical finding that the CP factor has predictive power while all other macro variables have negligible predictive power in this case. On the other hand, the evidence against the macro-spanning hypothesis is found at tail quantiles, in which TDH has predictive power at right tail quantiles while TI has predictive power at both tails quantiles. Finally, we show the performance of in-sample and out-of-sample predictions implemented by the proposed method are better than existing methods. 
\end{abstract}
\noindent%
{\it Keywords:} Macro-spanning Hypothesis, Highly Persistent Predictors, Size Control, Predictive Quantile Regression, Bond Risk Premia.
\vfill
\newpage
\spacingset{1.5} % 
\section{Introduction}
\label{4}
Treasury bonds are a crucial component in numerous investment portfolios. Therefore, understanding the risk and return dynamics of this asset class is of fundamental importance from an economic perspective. One focal topic in the study of U.S. Treasury bond returns is the famous macro-spanning hypothesis/puzzle on whether the yield curve contains all available information about future bond risk premia. \footnote{See the beginning of Section \ref{Model Framework} for the mathematical representation of the hypothesis. } Numerous studies have explored the predictive power of macro variables on bond risk premiums while controlling for the CP factor \citep{CochranePiazzesi2005} representing the information of the yield curve. These studies include 
\cite{CooperPriestley2008}, \cite{ludvigson2009macro}, 
\citet{Bansal2013}, \cite{Joslin2014}, \cite{GreenwoodVayanos2014}, \cite{CieslakPovala2015}, \cite{Ghysels2018}, \cite{Bauer2018}, \cite{ZhaoZhouguofu2021}.
The literature mentioned above employs mean regression to assess the macro-spanning hypothesis, which suggests that there is no predictability of bond returns beyond the information provided by the yield curve. However, it is uncertain whether these results are consistent across the entire distribution or mainly pertain to the mean return. This paper studies the predictability of bond risk premia in a predictive quantile regression framework. First, the quantile model explores the heterogeneous bond return predictability, which enrichs the study of macro-spanning hypothesis. In the literature, \cite{AndreasenEngsted2021} and \cite{Borup2024} have examined the heterogeneity of bond risk premia predictability, taking into account the business cycle and economic uncertainty. However, the heterogeneous predictability of bond risk premia at various quantile levels remains unexplored. Second, quantile model helps predict bond risk premia in different scenarios, e.g., investors often pay more attention to the tail risk of bonds. 
Third, the estimation and inference in mean regression are significantly affected by outliers, whereas the quantile regression is robust to the outliers.
There are notable challenges to conduct the test of spanning hypothesis in a quantile model. First, \cite{Bauer2018} noted that the predictors commonly employed in the literature exhibit high persistence, leading to size distortion in predictive quantile regression test statistics and potentially causing spurious findings. Second, the conventional test statistics suffer size distortion at tail quantiles due to inaccurate density estimator of quantile regression errors. Third, when investigating the macro-spanning hypothesis, it is common for previous empirical studies to utilize four or more predictors. These predictors typically include the CP factor, two LN factors \citep{ludvigson2009macro}, and at least one additional macro predictor. \footnote{See the details of the variables description in Section \ref{sectionData}.} In summary, testing the spanning hypothesis at the quantiles needs a robust marginal inference approach (meaning testing the predictive power of one variable while controlling for others, as opposed to conducting a joint test) in a predictive quantile regression framework that includes a large number of highly persistent predictors.
Unfortunately, existing inference methods suffer size distortion in marginal tests with many highly persistent predictors. In the literature, \citet{Lee2016} extended the IVX method \citep{PhillipsMagdalinos2009} in mean regression to the quantile regression framework (IVX-QR) to test the predictability at various quantile levels, with a focus on joint tests but suffers severe size distortion at tail quantiles due to inaccuracy of nonparametric density estimator. To avoid size distortions at tail quantiles, \citet{FanLee2019} combined the moving block bootstrap technique and IVX-QR approach for the joint test, which circumvents estimating this density. \citet{LiuYangetal2023} developed a unified predictability test for univariate predictive quantile regression with highly persistent predictors. All three methods above 
are unsuitable for the marginal tests in multivariate models.
\citet{CaiChenLiao2023} (hereafter referred to as CCL2023) constructed the instrumental variable (IV) estimator based on a double-weighted method for predictive QR (DW-QR) to conduct the joint and marginal tests in multivariate models. However, the size performance of CCL2023 is still not very satisfactory at tail quantiles and with many predictors.
\subsection{The Testing Procedure and Contributions}
Our new inference procedure fits the task of testing the macro-spanning hypothesis for bond risk premia. First, we construct a consistent IV estimator under both the null and the alternative hypotheses by a two-step quantile regression. Nonetheless, the test statistics constructed by the IV estimator above continue to experience size distortions with many highly persistent predictors, which is induced by two higher-order terms detailed in Section \ref{section3}. Second, we improve the size and power performance of the test. To address the size distortion issue, we adopt the sample splitting method proposed by \citet{liao2024robust} to eliminate one higher-order term. We select a conservative tuning parameter in constructing instrumental variables to reduce the size distortion induced by another higher-order term. To amend the power loss due to the conservative tuning parameter choice, we enhance the power of the aforementioned test by adding a modified test statistic from the conventional test, which converges to zero at the rate $\sqrt{T}$ under the null hypothesis. This way, we fully utilize the good power performance of the traditional test without the bundling of its size distortion effect. Third, when constructing the final test statistic, we introduce a new and accurate density estimator of the error term, which utilizes a simulation-based estimator instead of the inefficient nonparametric estimator. \footnote{See CCL2023 for more details about the procedure using an irrelevant (but suitably picked) auxiliary variable, such that the resulting test do not depend on regressor persistence under the null.
}
Next, we apply the above procedure for the macro-spanning hypothesis by testing the predictive power of the LN factors, TDH, and TI at various quantiles while controlling for the CP factor. The main empirical findings are summarized as follows. \footnote{In our empirical study, the variable has predictive power if its rejection rate is less than 1\%.} On the one hand, the CP factor has significant predictive power, while all macro variables, including the LN factors, have negligible predictive power at center quantiles, which strongly supports the macro-spanning hypothesis. On the other hand, strong empirical evidence against the macro-spanning hypothesis is found in tail quantiles. In particular, TDH has predictive power for bonds of all maturities at right tail quantiles, while it only has predictive power for 2- and 3-year bonds at left tail quantiles. 
Moreover, for bonds of all maturities, TI has significant predictive power at all quantiles except for center quantiles. Potential economic explanations for the findings are discussed in detail in Section \ref{poexpr}.
Our contributions are threefold. 
\begin{enumerate}
\item We find evidence supporting the macro-spanning hypothesis at center quantiles of future bond risk premia and evidence not found in the literature against the macro-spanning hypothesis at tail quantiles. In particular, we examine the tail risk of future bond risk premia and test whether it could be predicted by the CP factor only. We have two new interesting findings on the predictive power of TDH at right tail quantiles and TI at both tails quantiles.
\item We provide a novel and reliable inference approach whose size and power performance are significantly better than the literature for marginal tests in predictive quantile regression with many highly persistent predictors. As a result, our approach is more suitable to conduct macro-spanning hypothesis than those in existing literature.
\item We demonstrate that our in-sample and out-of-sample predictions outperform existing method across various quantiles, which arise from the new discovery of the prediction power of TDH and TI at tail quantiles. 
\end{enumerate}
The rest of this paper is organized as follows. Section \ref{sectionData} describes the data characteristics of bond risk premia and its predictors. Section \ref{sectionEconometric} introduces our inference approach in the predictive quantile regression with highly persistent predictors. Section \ref{section4} presents the heterogeneous predictability of bond risk premia at various quantiles. Section \ref{appinf1} shows the excellent in-sample and out-of-sample performance compared with CCL2023. Section \ref{section7} concludes the paper. The online appendix includes an algorithm for the inference procedure, additional theoretical and numerical results, and an application on the left and the right tail risk indicators. 
Throughout this paper, the standard notations $\Rightarrow$, $\xrightarrow{d}$, $\xrightarrow{p}$ and $\overset{d}{=}$ are used to represent weak convergence, convergence in distribution and in probability, and equivalence in distribution, respectively. All limits are for $T\rightarrow \infty$ in all limit theories, and $O_p (1)$ is asymptotically bounded while $o_p(1)$ is asymptotically negligible. 
\section{Data Characteristics}\label{sectionData}
In this section, we illustrate the variables that are used to test the macro-spanning hypothesis. Following \cite{CochranePiazzesi2005}, we refer to the difference of the yield-to-maturity (YTM) of the n-year and 1-year maturity of U.S. discount bond as the dependent variable, {\bf{bond risk premia}} rx(n), where $n=2,3,4,5$. 
The five {\bf{predictors}} are CP factor, LN factors (LN1 and LN2), and TI, all from previous literature, and trend in debt holding (TDH), which is new in the literature.
The n-year U.S discount bond data rx(n) is available on Center for Research in Security Prices (CRSP) and the macroeconomic factors are from the paper of \cite{ludvigson2009macro}. We compute the CP factor following \cite{CochranePiazzesi2005}. The consumer price index (CPI) and the debt holding data are from the CEIC database. \footnote{See more details in https://fiscaldata.treasury.gov/americas-finance-guide/national-debt/.} The sample used in this study consists of monthly data from January 1980 to December 2022. We refer to quantile levels close to 0.5 as center quantiles and quantile levels much less (greater) than 0.5 as left (right) tail quantiles. \footnote{We give specific definitions of the left (right) tail quantiles in the illustrative Figure \ref{TDHTIRX}.}
Next, we explain the predictors in detail. First, the CP factor \citep{CochranePiazzesi2005} represents the information of the yield curve, which is the linear combination of the forward rate $F_{i,t-1}$, $i=1,2,3,4,5$. Second, the LN factors, LN1 and LN2 constructed by \cite{ludvigson2009macro} are fitted values of $\frac{1}{4} \sum_{n=2}^5 r x(n)_t$ regressing on macroeconomic factors which are the first eight principal components from a large dateset of 132 macroeconomic indicators. Third, the trend inflation (TI) is proposed by \cite{CieslakPovala2015} to test the predictive power of highly persistent expected inflation dynamics on bond excess returns. Specifically, TI at period (t-1) is equal to $(1-\check{w}) \sum_{i=1}^{t-1} \check{w}^i \check{\pi}_{t-i}$, where $\check{\pi}_{t-i}$ is inflation rate in the core CPI and $\check{w}$ is 0.9 in this paper, $\check{w}^i$ is the $i$th power of $\check{w}$. \footnote{We also set $\check{w}$ to other values such as 0.88 and 0.92, and obtain similar simulation and empirical results. To save space, these results are not shown here but are available upon request. }
Besides the above popular predictors in existing literature, we introduce the trend in debt holding (TDH), defined as the trend in U.S. federal debt held by the public and by various government agencies (inter-governmental holdings). TDH represents the rate at which the federal government's borrowed funds increase in order to address the outstanding balance of expenses over time. It contains the information of the demand shift for the national bonds such as treasury bonds, treasury inflation-protected securities (TIPS) and discount bonds. Specifically, we construct TDH at period (t-1) in the same manner as TI, which is equal to $(1-\check{w}) \sum_{i=1}^{t-1} \check{w}^i \check{d}_{t-i}$ and $\check{d}_{t-i}$ is the increment of debt holding at time $t-i$. 
Figure \ref{TDHTIRX} shows that the demand shift for U.S. treasury bonds represented by TDH frequently precedes fluctuations in bond returns, which indicates the potential predictability. The statistical and economic significance of the TDH predictor is further explained in later sections. 
\vspace{-0.35cm}
\begin{figure}[htbp]
\centering
\includegraphics[width=0.48\linewidth]{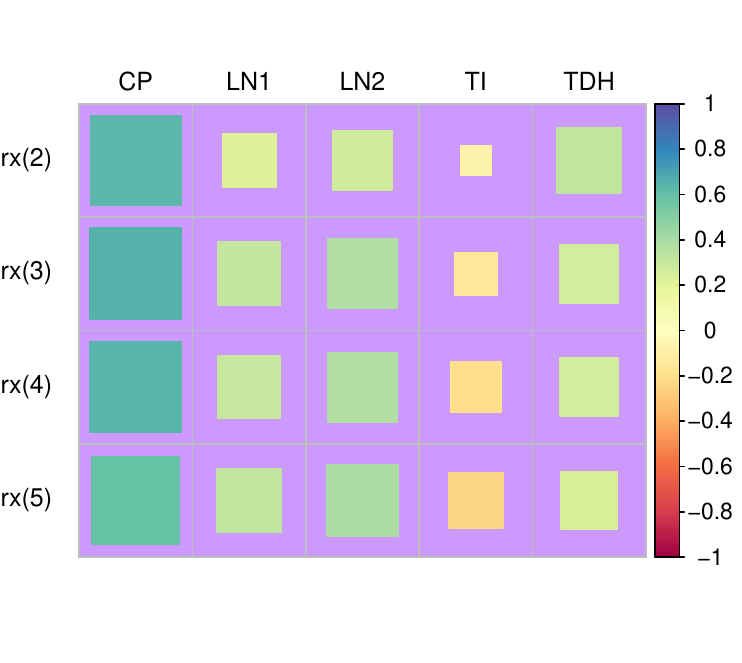}
\vspace{-1cm}
\caption{Correlation Coefficients between rx(n) and One-period Lagged Predictors}
\label{correCoef1}
\vspace{-0.5cm}
\end{figure} 
We first demonstrate the correlation coefficients between bond risk premia rx(n) and one-period lagged predictors in Figure \ref{correCoef1}. The size of the colored square (maximum=1, colors indicate the values of positive/negative correlations) means the absolute value of the correlation coefficient. It shows that the one-period lagged CP factor, which only contains the information of yield curve, has the strongest correlations with bond risk premia while the other macroeconomic predictors have much weaker correlations with bond risk premia. 
The correlation coefficients only reveal linear relationship at the center of the distribution. Next, we demonstrate the relationship between bond risk premia and one-period lagged predictors, specifically TDH and TI, at different quantiles in their time series plot in Figure \ref{TDHTIRX}. For convenience of discussion, we first define the left tail risk period in Figure \ref{TDHTIRX} as the period in which one of bond risk premia rx(n) is less than or equal to its unconditional quantile at 0.05 level, \footnote{This criterion of 0.05 is not essential for our test in the following sections.} for n=2,3,4,5. E.g., Jul. 1980--Sep. 1981 is the left tail risk period by above definition. During periods of left tail risk, the difference between YTM of 1-year and that of n-year bonds (for n=2,3,4,5) is significantly smaller compared to normal conditions. In the extreme instance of the left tail risk period, such as Aug. 2022--Dec. 2022, the inverted yield curve (yields on short-term bonds are higher than those on long-term bonds) occurs. In the left tail quantiles of risk premia for bonds with maturities of two to five years, the relatively high YTM of 1-year bonds suggests that investors are concerned about short-term risks in the bond market. This perception likely stems from current economic indicators or market volatility, which typically leads investors to demand higher yields for assuming additional risk over the near term. Consequently, in such market conditions, investors tend to favor bonds with longer maturities (two to five years) over 1-year bonds. They perceive these longer-term bonds as offering a more attractive balance of risk and return, particularly when short-term forecasts appear unstable. This preference is especially pronounced at the left tail quantiles, where risk sensitivity is heightened. Likewise, we define the right tail risk period as the period in which one of bond risk premia rx(n) is greater than or equal to its unconditional quantile at 0.95 level, for n=2,3,4,5. E.g., Jun. 1982--Dec. 1982, is the right tail risk period. Correspondingly, YTM of n-year bonds (for n=2,3,4,5) is much bigger than that of 1-year bonds in the right tail risk period. 
In the right tail quantiles of YTM of n-year bonds (for n=2,3,4,5), there is an indication that market risks are expected to emerge over the medium term rather than the immediate future. This higher YTM reflects a risk premium that investors demand due to anticipated economic uncertainties or rising interest rates affecting these longer maturities. Consequently, in these scenarios, investors tend to prefer 1-year bonds over those with longer durations (two to five years). This preference is driven by the perceived safety and greater liquidity of shorter-term bonds, especially when facing potential medium-term market fluctuations. 
Some interesting patterns are observable in Figure \ref{TDHTIRX}. First, in the left tail risk periods, e.g., Jul. 1980--Sept. 1981, Dec. 1981--Mar. 1982 and Aug. 2022--Dec. 2022, one-period lagged TI and TDH have an opposite trend of n-year bond risk premia, for n=2,3,4,5. Second, in the right tail risk periods, e.g., Dec. 2001--Sept. 2002 and Oct. 2008--Dec. 2008, one-period lagged TI and TDH have the same ascending trend of n-year bond risk premia, for n=2,3,4,5. This suggests that TDH and TI may have predictive power at tail quantiles. Granted, the correlations of risk premia with predictors shown in Figures \ref{correCoef1} and \ref{TDHTIRX} are insufficient to infer the macro-spanning hypothesis, it motivates us to test the predictability of bond returns using CP factor, LN1, LN2, TDH and TI, not only at center quantiles but also at tail quantiles. 
\begin{figure}[htbp]%%%%[hpt]
\centering
\includegraphics[width=1\linewidth]{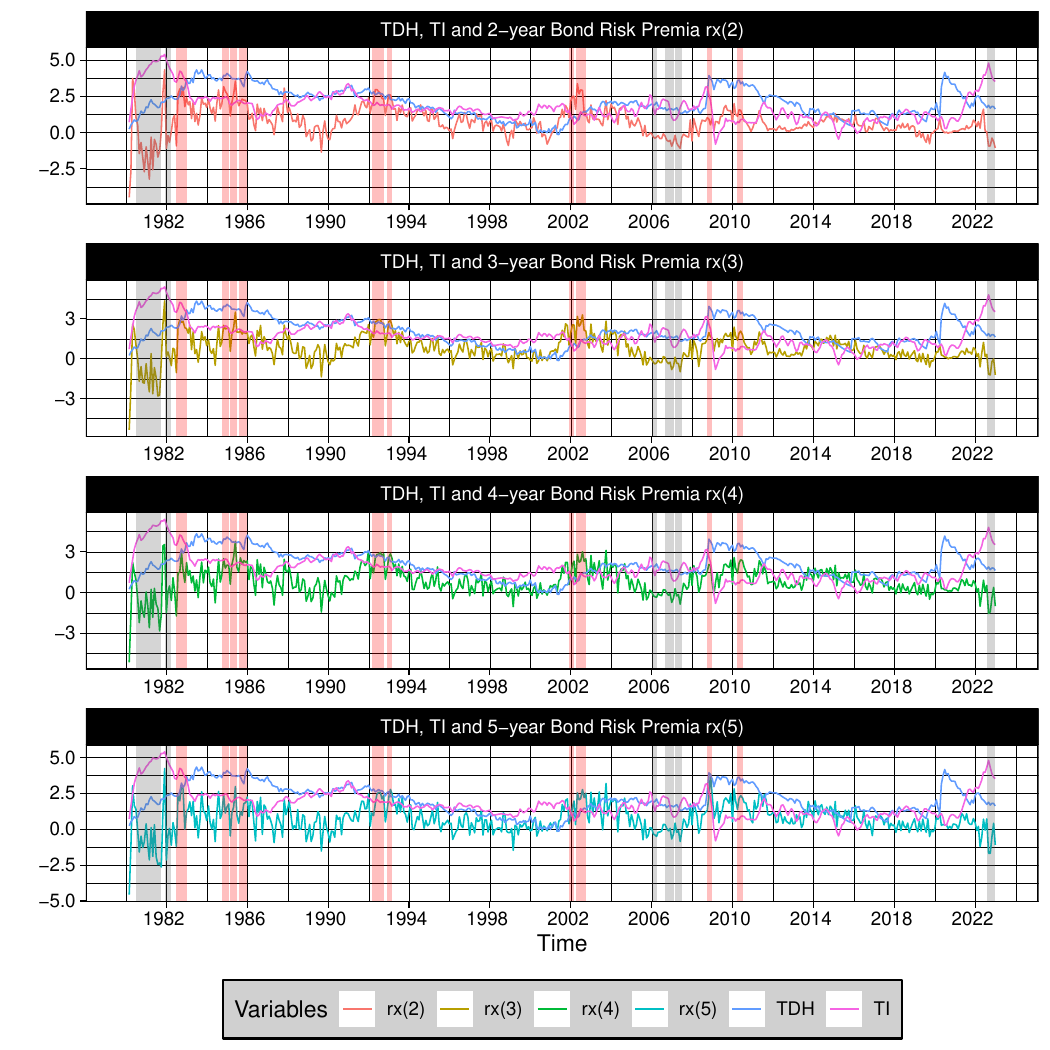}
\caption{TDH, TI and Bond Risk Premia. Gray/Pink area: left/right tail risk period}
\label{TDHTIRX}
\end{figure}
Table \ref{AR1coef} shows that the AR(1) coefficients of CP factor, LN1, LN2, TDH and TI are close to 1, indicating that these predictors are highly persistent. To obtain a reliable test about the macro-spanning hypothesis, which requires marginal test in a multivariate model with many highly persistent predictors, we present the new procedure in the following section. 
\begin{table}[htbp]
\centering
\caption{AR(1) Coefficients of CP, LN1, LN2, TDH and TI}
\begin{tabular}{c|ccccc}
\hline
Predictors & CP & LN1 & LN2 & TDH & TI \\
\hline
AR(1) & 0.974 & 0.881 & 0.937 & 0.994 & 0.994 \\
\hline
\end{tabular}%
\label{AR1coef}%
\end{table}%
\section{Statistical Model}\label{sectionEconometric}
\subsection{Predictive Quantile Regression Model}\label{Model Framework}
Denote the bond risk premia at period t as $y_t$ and its $\tau$th conditional quantile is $Q_{y_{t}}\left(\tau \mid \mathcal{F}_{t-1}\right)$ such that $\operatorname{Pr}\left[y_{t} \leq Q_{y_{t}}\left(\tau \mid \mathcal{F}_{t-1}\right) \mid \mathcal{F}_{t-1}\right]=\tau \in(0,1)$, where $\mathcal{F}_{t-1}$ is the information set available at time $t-1$. For simplicity, a linear conditional quantile of $y_t$ is imposed as follows.
\begin{equation}
\label{eq1}
Q_{y_t}(\tau\,|\mathcal{F}_{t-1})=Q_{y_t}(\tau|x_{t-1})=\mu_{\tau}+ x_{t-1}^\top \beta_{\tau},\; 0<\tau<1, \; t=1,2,\cdots,T,
\end{equation}
where $x_{t-1}=\left(x_{1,t-1},x_{2,t-1} \cdots,x_{K,t-1}\right)^\top$ is the vector of predictors ($K=5$ in this study) including CP factor, LN factors (LN1 and LN2), TDH and TI, representing $\mathcal{F}_{t-1}$. And $\beta_{\tau} = (\beta_{1\tau},\beta_{2\tau},\cdots,\beta_{K\tau})^\top$ is a $K$-dimensional coefficient vector given any $\tau$. The macro-spanning hypothesis essentially means that only the $\beta_{1\tau}$ of the CP factor is nonzero, and the $\beta_{i\tau}$'s of all other macro variables, for $i=2,3,4,5$, are zeroes at all quantile levels $\tau$. We define the quantile measurement error $u_{t\tau}\equiv y_t-Q_{y_t}(\tau|\mathcal{F}_{t-1})$ and the quantile score function $\psi_{\tau} (u_{t\tau})\equiv\tau-1(u_{t\tau}<0)$. 
{Since the AR(1) coefficients of predictors in Table \ref{AR1coef} are close to one (highly persistent), it is common in the literature to model these predictors as follows.} 
\begin{align}\label{mulgtuA1}
x_{i,t} = \rho_i x_{i,t-1} +v_{i,t},
\end{align}
where $x_{i,0}=o_p(\sqrt{T})$, $\rho_i=1+c_i/T^\alpha$,
$t=1,\cdots,T$, $i=1,\cdots,K$, $v_t=(v_{1,t},v_{2,t},\cdots,v_{K,t})^\top$,
$\alpha=0$ or 1, $-2<c_i<0$ and $c=\operatorname{diag}(c_1,c_2,\cdots,c_K)$. We assume no cointegration relationship exist among predictors $x_{t-1}$.
Two types of persistency with different values of $c_i$ and $\alpha$ are considered for theoretical purpose: (1) Strong Dependence [SD]: $\alpha=1$ and $c_i$ is a constant; (2) Weak dependence [WD]: $\alpha=0$ and $|1+c_i|<1$. \footnote{ We also permit $x_{i,t-1}$ with $i=1,2,\cdots,K$ to have mixed types of persistence, that is to say, $\alpha$ could be a function of $i$. In this setting, the theoretical results for the test statistics are almost the same.}
\begin{remark}\label{remaDGP}
DGP of innovations of bond risk premia $y_t$ in (\ref{eq1}) includes the conditional heteroscedasticity case such as GARCH process, which is also described by \cite{LiuYangetal2023}. For instance, $y_t=\mu_{\tau}+ x_{t-1}^\top \beta_{\tau} + u_{t \tau}$, $u_{t \tau}=u_t-\sigma_t Q_\zeta(\tau)$, $u_t=\sigma_t \zeta_t$ and $\sigma_t^2=\mu_\sigma+\sum_{i=1}^q a_i u_{t-i}^2+\sum_{j=1}^r b_j \sigma_{t-j}^2$, 
where $\zeta_t$ is a sequence of independent and identically distributed random vectors with means zero and variances one. $P\left[\zeta_t \leq Q_\zeta(\tau)\right]=\tau$ implies $P\left(u_{t\tau} \leq 0|\mathcal{F}_{t-1}\right)=\tau$, thus $Q_{u_{t\tau}}(\tau\,|\mathcal{F}_{t-1})=0$ and $Q_{y_t}(\tau\,|\mathcal{F}_{t-1}) =\mu_{\tau}+ x_{t-1}^\top \beta_{\tau}$. 
\end{remark}
Following \cite{Lee2016} and CCL2023, we impose the following general weakly dependent structure for innovation $\{v_t\}$ in (\ref{mulgtuA1}).
\begin{assumption}
\label{Assumption A.1}
Assume that $v_t$ follows a linear process given by $ v_{t}=\sum_{j=0}^\infty F_{xj} \varepsilon_{t-j}$, where $\varepsilon_t$ is a martingale difference sequence (MDS) with $E(\varepsilon_t|\mathcal{F}_{t-1})=0$ and $var(\varepsilon_t\varepsilon_t'|\mathcal{F}_{t-1})=\Sigma_{\varepsilon}$ for $\Sigma_{\varepsilon}>0$ and $E\|\varepsilon_t\|^{2+\nu}<\infty$ for some $\nu>0$ and $ E[ \psi_{\tau}(u_{t\tau})\varepsilon_{t}]$ is a constant. Here, $F_{x0}=I_K$, and $\sum_{j=0}^\infty j\|F_{xj}\|<\infty$ and $F_x(1)=\sum_{j=0}^\infty F_{xj}>0$, where $F_x(z)=\sum_{j=0}^\infty F_{xj}z^j$. The covariance matrix of $v_t$ can be expressed as $\Omega_{vv}=\sum_{h=-\infty}^\infty E(v_{t}v_{t-h}^\top)=F_x(1)\Sigma_{\varepsilon} F_x(1)^\top$.
\end{assumption}
\cite{Lee2016} and CCL2023 state the functional central limit theorem (FCLT) for $\{\psi_\tau(u_{t\tau}), v_t\}$ as follows.
\begin{equation}
\label{eq3}
\frac{1}{\sqrt{T}} \sum_{t=1}^{\lfloor rT \rfloor}
\left(\begin{array}{c}
\psi_{\tau}(u_{t\tau})\\
v_{t}
\end{array}
\right)
\Rightarrow
\left(\begin{array}{c}
B_{\psi_\tau}(r)\\
B_v(r)
\end{array}
\right)
=BM \left(\begin{array}{cc}
\tau(1-\tau) & \Sigma_{\psi_\tau v} \\
\Sigma_{\psi_\tau v} & \Omega_{vv}
\end{array}
\right),
\end{equation}
where $[B_{\psi_\tau}(r), B_v(r)]^\top$ is a vector of Brownian motions. Additionally, SD predictors $x_{\lfloor rT \rfloor}/{\sqrt{T}} \Rightarrow J_x^c(r)$ for $0\le r\le 1$, where $J_x^c(r)=\int_0^r e^{(r-s)c}d B_v(s)$ is Ornstein-Uhlenbeck process \citep{Phillips1987}.
Next, we impose some regularity assumptions on the conditional density of $u_{t\tau}$, which are also adopted by \cite{Xiao2009} and CCL2023.
\begin{assumption}\label{Assumption A.2}
(i) The sequence of conditional stationary probability density functions $\{f_{u_{t\tau},t-1}(\cdot)\}$ of $\{u_{t\tau}\}$ given $\mathcal{F}_{t-1}$ evaluated at zero satisfies a moment condition with a non-degenerate mean $f_{u_{\tau}}(0)=E(f_{u_{t\tau},t-1}(0))>0$ and $E(f_{u_{t\tau},t-1}^\vartheta(0))<\infty$ for some $\vartheta>1$. \\
(ii) For each $t$ and $\tau\in(0,1)$, $f'_{u_{t\tau},t-1}(x)$ is bounded with probability one around zero, i.e., $f'_{u_{t\tau},t-1}(\epsilon)<\infty$ and $f_{u_{t\tau},t-1}(\epsilon)<\infty$ almost surely for all $|\epsilon|<\eta$ for some $\eta>0$.
\end{assumption}
The conventional test statistics tend to find spurious predictability with SD predictors and non-zero contemporary correlation between the error term of bond risk premia and predictors. This occurs because their asymptotic distribution contains nuisance parameters that cannot be estimated consistently, which is shown in \cite{Lee2016} and CCL2023. 
\subsection{The Inference Procedure and Theoretical Results}\label{section3}
We improve IVX-QR \citep{Lee2016} to reduce the size distortions arising from the following sources: 1, the inconsistency of IV estimator under alternative hypothesis; 2, density function estimation for the measurement error; 3, the higher-order terms in the test statistic causing the bias. In the following subsection \ref{subsection3.1}, we develop the consistent estimator. In subsection \ref{subsection3.2}, we address the other two issues of size distortions.
\subsubsection{A Two-step Regression for IV Estimator Construction}\label{subsection3.1}
To offer an IV estimator that is consistent under both the null and the alternative hypothesis, we extend the two-step mean regression to quantile regression. We follow \cite{Lee2016} and define the instrumental variable $z_t=(z_{1,t},z_{2,t}\cdots,z_{K,t})^\top$ as follows,
\begin{align}\label{mulivz}
z_{i,t} = \rho_z z_{i,t-1} + \Delta x_{i,t}, \quad i=1,2,\cdots,K,\quad 
t=1,2,\cdots,T,
\end{align}
where $z_{i,0}= x_{i,0} - x_{i,-1}$, $\Delta x_{i,t} = x_{i,t} - x_{i,t-1}$, $\rho_z = 1+ c_z/T^\delta $, $c_z<0$ and $1/2<\delta<1$ (here we set $\delta=0.95$). Next,
the new two-step regression is conducted as follows.
\begin{description}
\item[\textcolor{blue}{Step 1:}] Run the following OLS regression.
\begin{align}\label{firstep}
(\hat{\mu}_x ,\hat{\theta} ) = \arg \; \min_{\mu_x,\theta} \sum_{t=1}^T \left(x_{t-1}- \mu_x - \theta z_{t-1} \right)^\top \left(x_{t-1}- \mu_x - \theta z_{t-1} \right).
\end{align}
By this approach, we decompose predictors $x_{t-1}$ into two orthogonal parts: the fitted value of (\ref{firstep}) $\tilde{x}_{t-1}= \hat{\mu}_x+\hat{\theta} z_{t-1}$, and its residual $\tilde{v}_{t-1} = x_{t-1}-\tilde{x}_{t-1}$.
\item[\textcolor{blue}{Step 2:}] Run the following quantile regression.
\begin{align}\label{firstep2}
\left(\hat\mu_\tau,\hat\beta_\tau^\top,\hat\gamma_\tau^\top\right)^\top =\arg \,\min_{\mu_\tau, \beta_\tau,\gamma_\tau} \sum_{t=1}^T \rho_{\tau}\left(y_{t}-\mu_\tau-\beta_\tau^\top \tilde{x}_{t-1} -\gamma_\tau^\top \tilde{v}_{t-1}\right),
\end{align}
where $\rho_\tau(u)=u[\tau-1(u<0)]$ is referred to as check function in the literature.
\end{description}
The intuition of the above two-step IV estimator is the following. First,
by the equations $x_{t-1}= \tilde{x}_{t-1}+\tilde{v}_{t-1} $ and (\ref{eq1}), it follows that
\begin{align}\label{eqjg3}
Q_{y_t}(\tau\,|\mathcal{F}_{t-1}) ={\mu}_\tau + \tilde{x}_{t-1}^\top \beta_{\tau}+ \tilde{v}_{t-1}^\top \beta_{\tau}.
\end{align}
{It is clear that both $\hat{\beta}_\tau$ and $\hat{\gamma}_\tau$ in (\ref{firstep2}) are consistent estimators of $\beta_\tau$ under both the null and the alternative hypothesis, since they are the estimates of the same true $\beta_{\tau}$ in \eqref{eqjg3}}. Second, the OLS property guarantees that $ \tilde{v}_{t-1}$ is orthogonal to $\tilde{x}_{t-1}$ and the intercept term. The distribution of $\hat{\beta}_\tau$ dependents on $\tilde{x}_{t-1}$ rather than $\tilde{v}_{t-1}$, and thus, it only dependents on $z_{t-1}$ since $\tilde{x}_{t-1}= \hat{\mu}_x+\hat{\theta} z_{t-1}$. Since $z_{t-1}$ is mildly integrated with SD predictors, the asymptotic mixture normal distribution of $\hat{\beta}_\tau$ is guaranteed. 
To obtain the asymptotic distribution of $\hat{\beta}_\tau$, we first establish the Bahadur representation in the following theorem. 
\begin{prop}(Bahadur Representation)
\label{thm1}
Under Assumptions \ref{Assumption A.1} and \ref{Assumption A.2}, it follows that
\begin{small}
\begin{align*}
\left[
\begin{array}{c}
\sqrt{T}\left(\hat\mu_\tau-{\mu}_{\tau} \right) \\
D_T\left(\hat{\beta}_\tau -\beta_\tau\right)\\
\dot{D}_T\left(\hat\gamma_\tau -\beta_\tau \right)
\end{array}
\right] =\frac{1}{f_{u_\tau}(0)}
\left[
\begin{array}{ccc}
1 & \frac{\sum\limits_{t=1}^T \tilde{x}_{t-1}^\top}{\sqrt{T}D_T} & \frac{\sum\limits_{t=1}^T \tilde{v}_{t-1}^\top}{\sqrt{T}\dot{D}_T} \\
\frac{\sum\limits_{t=1}^T \tilde{x}_{t-1}}{\sqrt{T}D_T} & \frac{\sum\limits_{t=1}^T \tilde{x}_{t-1} \tilde{x}_{t-1}^\top }{ D_T^2 } & \frac{\sum\limits_{t=1}^T \tilde{x}_{t-1} \tilde{v}_{t-1}^\top}{ D_T \dot{D}_T } \\
\frac{\sum\limits_{t=1}^T \tilde{v}_{t-1}}{\sqrt{T}\dot{D}_T} & \frac{\sum\limits_{t=1}^T \tilde{v}_{t-1} \tilde{x}_{t-1}^\top }{ D_T \dot{D}_T }& \frac{\sum\limits_{t=1}^T \tilde{v}_{t-1} \tilde{v}_{t-1}^\top}{ \dot{D}_T^2 }
\end{array}
\right]^{-1}
\left[
\begin{array}{c}
\sum\limits_{t=1}^T \frac{\psi_\tau (u_{t\tau})}{\sqrt{T}}\\
\sum\limits_{t=1}^T \frac{\tilde{x}_{t-1}\psi_\tau (u_{t\tau})}{D_T}\\
\sum\limits_{t=1}^T \frac{\tilde{v}_{t-1}\psi_\tau (u_{t\tau})}{\dot{D}_T}
\end{array}
\right] +o_p(1).
\end{align*}
\end{small}
where $D_T= T^{(1+\delta)/2}$ for SD predictors and $D_T= \sqrt{T}$ for WD predictors, $\dot{D}_T= T$ for SD predictors and $D_T= \sqrt{T}$ for WD predictors, and $f_{u_\tau}(0)$ is defined in Assumption \ref{Assumption A.2}.
\end{prop}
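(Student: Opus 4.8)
\textbf{Proof proposal for Proposition \ref{thm1} (Bahadur representation).}

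The plan is to adapt the standard convexity/empirical-process argument for the Bahadur representation of quantile-regression estimators (as in Knight's and Pollard's approach, and as used in \citealp{Xiao2009} and CCL2023) to the present two-step setting, where the regressors are the orthogonalized pair $(\tilde{x}_{t-1},\tilde{v}_{t-1})$. First I would set up the reparametrization: define the scaled parameter deviation $\delta_T=(\sqrt{T}(\hat\mu_\tau-\mu_\tau),\,D_T(\hat\beta_\tau-\beta_\tau),\,\dot{D}_T(\hat\gamma_\tau-\beta_\tau))^\top$ and, using \eqref{eqjg3}, rewrite the residual inside the check function in \eqref{firstep2} as $y_t-\mu_\tau-\beta_\tau^\top\tilde{x}_{t-1}-\gamma_\tau^\top\tilde{v}_{t-1}=u_{t\tau}-w_{t-1}^\top A_T^{-1}\delta_T$, where $w_{t-1}=(1,\tilde{x}_{t-1}^\top,\tilde{v}_{t-1}^\top)^\top$ and $A_T=\operatorname{diag}(\sqrt{T},D_T I_K,\dot{D}_T I_K)$ is the appropriate normalizing matrix that makes the design matrix $\sum_t A_T^{-1}w_{t-1}w_{t-1}^\top A_T^{-1}$ converge to a (random, for SD) nondegenerate limit. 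The key is that $z_{t-1}$ is mildly integrated under SD, so $\tilde{x}_{t-1}=\hat\mu_x+\hat\theta z_{t-1}$ behaves like an $O_p(T^{(1+\delta)/2})$-scaled object, while $\tilde{v}_{t-1}$ inherits the $O_p(\sqrt{T})$ partial-sum behavior of the stationary innovations; this justifies the stated forms of $D_T$ and $\dot D_T$.

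Next I would analyze the recentered objective function $\mathbb{G}_T(\delta)=\sum_{t=1}^T\big[\rho_\tau(u_{t\tau}-w_{t-1}^\top A_T^{-1}\delta)-\rho_\tau(u_{t\tau})\big]$, which is convex in $\delta$ and minimized at $\delta_T$. Using Knight's identity $\rho_\tau(u-v)-\rho_\tau(u)=-v\psi_\tau(u)+\int_0^v[\mathbf{1}(u\le s)-\mathbf{1}(u\le 0)]ds$, I would split $\mathbb{G}_T(\delta)$ into a linear term $-\delta^\top\sum_t A_T^{-1}w_{t-1}\psi_\tau(u_{t\tau})$ — whose coefficient vector is exactly the right-hand-side score vector in the statement — plus a nonlinear remainder. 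For the remainder, I would take conditional expectations given $\mathcal{F}_{t-1}$, use a first-order Taylor expansion of the conditional CDF $F_{u_{t\tau},t-1}$ around zero (Assumption \ref{Assumption A.2}(ii) controls the derivative), and show the conditional mean converges to the quadratic form $\tfrac12 f_{u_\tau}(0)\,\delta^\top\big(\sum_t A_T^{-1}w_{t-1}w_{t-1}^\top A_T^{-1}\big)\delta$; Assumption \ref{Assumption A.2}(i) with the moment bound gives the law-of-large-numbers replacement of $f_{u_{t\tau},t-1}(0)$ by its mean $f_{u_\tau}(0)$. The martingale-difference structure in Assumption \ref{Assumption A.1}, together with the FCLT \eqref{eq3} and the Ornstein--Uhlenbeck limit for the SD component, is then used to show the centered remainder is $o_p(1)$ uniformly on compact sets of $\delta$ (a stochastic-equicontinuity step). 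By the convexity lemma of \citet{PollardConvexity} (or Knight), the minimizer of $\mathbb{G}_T$ converges to the minimizer of the limiting quadratic-plus-linear form, which is precisely the asserted representation after inverting the normalized design matrix.

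The main obstacle — and where I would spend most of the effort — is the \emph{heterogeneous normalization} across blocks: the design matrix $\sum_t A_T^{-1}w_{t-1}w_{t-1}^\top A_T^{-1}$ mixes an $O_p(T^{(1+\delta)/2})$-scaled mildly-integrated block ($\tilde{x}$), an $O_p(\sqrt T)$-scaled stationary-type block ($\tilde{v}$), and the intercept, and one must verify that the off-diagonal cross-blocks do not degenerate the inverse. The orthogonality built into Step 1 (the OLS normal equations force $\sum_t\tilde{v}_{t-1}\tilde{x}_{t-1}^\top=0$ and $\sum_t\tilde{v}_{t-1}=0$ \emph{exactly in sample}) is what saves the argument: it block-diagonalizes the design matrix, so the $\tilde{x}$-block limit (a functional of $J_x^c$ under SD, via $z_{t-1}$) and the $\tilde v$-block limit ($\Omega_{vv}$-type, via a LLN for $\tilde v_{t-1}\tilde v_{t-1}^\top$) can be handled separately, and the cross term $\sum_t\tilde x_{t-1}\tilde v_{t-1}^\top/(D_T\dot D_T)$ needs only to be shown $o_p(1)$ rather than characterized. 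A secondary technical point is the uniform-in-$\delta$ control of the remainder when $\tilde x_{t-1}$ is persistent: here I would follow the chaining/maximal-inequality device already used in \citet{Lee2016} and CCL2023 for IVX-type quantile regressions, invoking the $2+\nu$ moment condition in Assumption \ref{Assumption A.1}. Once these pieces are in place, assembling them into the stated matrix identity is routine bookkeeping.
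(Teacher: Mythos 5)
Your proposal is correct and follows essentially the same route as the paper: the paper's own proof invokes exactly the ingredients you identify — the limit of the normalized design matrix (Lemma \ref{lem.ap1}, which is block-diagonalized by the exact Step-1 orthogonality $\sum_t \tilde{x}_{t-1}\tilde{v}_{t-1}^\top=0$, $\sum_t\tilde{v}_{t-1}=0$), the limit of the score vector (Lemma \ref{lem.ap2}), and a convexity argument, all following the proof of Theorem 1 in CCL2023. The only presentational difference is that the paper runs the convexity step through the subgradient (its Lemma \ref{keylemma2}, a uniform linearization of the first-order condition $V_T(v)\approx A_T-Nv$) rather than through Knight's identity on the recentered objective plus Pollard's convexity lemma, but these are interchangeable versions of the same argument and your Taylor-expansion treatment of the remainder, yielding the factor $f_{u_\tau}(0)$, matches the paper's.
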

Since the first step decomposes predictors $x_{t-1}$ to orthogonal components, i.e., $\sum_{t=1}^T \tilde{x}_{t-1}\tilde{v}_{t-1}^\top =0$ and $\sum_{t=1}^T \tilde{v}_{t-1} =0$, the following theorem holds by Proposition \ref{thm1} and the asymptotic property of instrumental variable $z_{t-1}$ \citep{Kostakisetal2015}.
\begin{thm}
\label{thm2}
Under Assumptions \ref{Assumption A.1} and \ref{Assumption A.2}, the following result holds.
\begin{align}
D_T\left(\hat{\beta}_\tau -\beta_\tau\right)&=\frac{1}{f_{u_\tau}(0)} \left( D_T^{-2} \sum\limits_{t=1}^T \overline{\tilde{x}}_{t-1} \overline{\tilde{x}}_{t-1}^\top \right)^{-1} D_T^{-1} \sum\limits_{t=1}^T \overline{\tilde{x}}_{t-1} \psi_\tau (u_{t\tau}) + o_p(1) \nonumber \\
\label{thm3eg}
&=\frac{1}{f_{u_\tau}(0)} \left( D_T^{-2} \sum\limits_{t=1}^T \bar{z}_{t-1} x_{t-1}^\top \right)^{-1} D_T^{-1} \sum\limits_{t=1}^T \bar{z}_{t-1} \psi_\tau (u_{t\tau}) + o_p(1)\\
& \xrightarrow{d}
\begin{cases}
\operatorname{MN}\left[0_K,\frac{1}{f_{u_\tau}(0)^2 } \Omega_{zx}^{-1}\Omega_{zz}\left( \Omega_{zx}^{-1}\right)^\top\right],\quad SD;\\
\operatorname{N}\left[0_K,\frac{1}{f_{u_\tau}(0)^2 } \Omega_{zx}^{-1}\Omega_{zz}\left( \Omega_{zx}^{-1}\right)^\top\right],\quad WD;
\end{cases}. \nonumber
\end{align}
where $\Omega_{zz}= \tau(1-\tau) \Omega_{vv}/(-2c_z)$ and $\Omega_{zx}=
-c_z^{-1} \operatorname{E}(v_tv_t^\top)-c_z^{-1} \int_0^1 dJ_x^c(r)\, J_x^c(r)\, ^\top$ for SD predictors and $\Omega_{zz}= \tau(1-\tau) \operatorname{E}(x_{t-1}x_{t-1}^\top )$ and $\Omega_{zx}= \operatorname{E}\left( x_{t-1}x_{t-1}^\top\right)$ for WD predictors. And $\bar{z}_{t-1}=z_{t-1}- \frac{1}{T}\sum\limits_{t=1}^{T} z_{t-1}$ and $\overline{\tilde{x}}_{t-1}=\tilde{x}_{t-1}- \frac{1}{T}\sum\limits_{t=1}^{T} \tilde{x}_{t-1}$ and $0_K$ is a $K$-dimensional zero vector.
\end{thm}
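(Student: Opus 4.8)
The plan is to derive Theorem \ref{thm2} from the Bahadur representation of Proposition \ref{thm1}, exploiting the orthogonality that the first-step OLS \eqref{firstep} forces into the system, then to re-express everything through the instrument $z_{t-1}$, and finally to apply the mildly integrated instrument asymptotics of \citet{Kostakisetal2015} together with a martingale CLT. For the first step, note that the OLS normal equations of \eqref{firstep} give $\sum_{t=1}^T\tilde v_{t-1}=0_K$ and $\sum_{t=1}^T z_{t-1}\tilde v_{t-1}^\top=0$, hence also $\sum_{t=1}^T\tilde x_{t-1}\tilde v_{t-1}^\top=\hat\mu_x\sum_{t=1}^T\tilde v_{t-1}^\top+\hat\theta\sum_{t=1}^T z_{t-1}\tilde v_{t-1}^\top=0$. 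Consequently the $(1,3)$, $(3,1)$, $(2,3)$ and $(3,2)$ blocks of the matrix in Proposition \ref{thm1} are exactly zero, so that matrix is block diagonal with a $(1+K)\times(1+K)$ upper-left block (intercept and $\beta_\tau$) and a $K\times K$ lower-right block ($\gamma_\tau$). Inverting it blockwise, $D_T(\hat\beta_\tau-\beta_\tau)$ depends only on the upper-left block and on the first two sub-vectors of the score. Setting $A=(\sqrt T D_T)^{-1}\sum_{t=1}^T\tilde x_{t-1}$, $B=D_T^{-2}\sum_{t=1}^T\tilde x_{t-1}\tilde x_{t-1}^\top$, $p=T^{-1/2}\sum_{t=1}^T\psi_\tau(u_{t\tau})$ and $q=D_T^{-1}\sum_{t=1}^T\tilde x_{t-1}\psi_\tau(u_{t\tau})$, the partitioned-inverse (Frisch--Waugh) formula gives the $\beta_\tau$-block as $(B-AA^\top)^{-1}(q-Ap)/f_{u_\tau}(0)$, and one checks directly that $B-AA^\top=D_T^{-2}\sum_{t=1}^T\overline{\tilde x}_{t-1}\overline{\tilde x}_{t-1}^\top$ and $q-Ap=D_T^{-1}\sum_{t=1}^T\overline{\tilde x}_{t-1}\psi_\tau(u_{t\tau})$, which is the first line of \eqref{thm3eg} up to the $o_p(1)$ already present in Proposition \ref{thm1}.

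Next, to pass from $\overline{\tilde x}_{t-1}$ to $\bar z_{t-1}$ and then to $x_{t-1}$, I would use $\tilde x_{t-1}=\hat\mu_x+\hat\theta z_{t-1}$, so that $\overline{\tilde x}_{t-1}=\hat\theta\bar z_{t-1}$ and hence $D_T^{-2}\sum\overline{\tilde x}_{t-1}\overline{\tilde x}_{t-1}^\top=\hat\theta\bigl(D_T^{-2}\sum\bar z_{t-1}\bar z_{t-1}^\top\bigr)\hat\theta^\top$, $D_T^{-1}\sum\overline{\tilde x}_{t-1}\psi_\tau(u_{t\tau})=\hat\theta\,D_T^{-1}\sum\bar z_{t-1}\psi_\tau(u_{t\tau})$. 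Substituting and cancelling a factor of $\hat\theta$ reduces the estimator to $\hat\theta^{-\top}\bigl(D_T^{-2}\sum\bar z_{t-1}\bar z_{t-1}^\top\bigr)^{-1}D_T^{-1}\sum\bar z_{t-1}\psi_\tau(u_{t\tau})/f_{u_\tau}(0)$. Then I would use the algebraic identity $\sum\bar z_{t-1}x_{t-1}^\top=\bigl(\sum\bar z_{t-1}\bar z_{t-1}^\top\bigr)\hat\theta^\top$: writing $x_{t-1}=\hat\mu_x+\hat\theta z_{t-1}+\tilde v_{t-1}$, the $\hat\mu_x$ term drops since $\sum\bar z_{t-1}=0$, the $\tilde v_{t-1}$ term drops since $\sum z_{t-1}\tilde v_{t-1}^\top=0$ and $\sum\tilde v_{t-1}=0$, and $\sum\bar z_{t-1}z_{t-1}^\top=\sum\bar z_{t-1}\bar z_{t-1}^\top$. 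Thus $\hat\theta^{-\top}\bigl(D_T^{-2}\sum\bar z_{t-1}\bar z_{t-1}^\top\bigr)^{-1}=\bigl(D_T^{-2}\sum\bar z_{t-1}x_{t-1}^\top\bigr)^{-1}$ on the event, of probability tending to one under Assumption \ref{Assumption A.1} and the no-cointegration condition, that $\hat\theta$ and $\sum\bar z_{t-1}\bar z_{t-1}^\top$ are nonsingular, which gives the second line of \eqref{thm3eg}.

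For the limit, I would invoke the asymptotics of the mildly integrated instrument $z_{t-1}$ \citep{Kostakisetal2015} with the FCLT \eqref{eq3}. Since $z_{t-1}$ is $\mathcal F_{t-1}$-measurable and, by \eqref{eq1} and Assumption \ref{Assumption A.2}, $\psi_\tau(u_{t\tau})$ is a martingale difference sequence with $E[\psi_\tau(u_{t\tau})^2\mid\mathcal F_{t-1}]=\tau(1-\tau)$, the array $D_T^{-1}\sum z_{t-1}\psi_\tau(u_{t\tau})$ is a martingale whose predictable quadratic variation is $\tau(1-\tau)\,D_T^{-2}\sum z_{t-1}z_{t-1}^\top\xrightarrow{p}\Omega_{zz}$; a stable martingale CLT gives $D_T^{-1}\sum z_{t-1}\psi_\tau(u_{t\tau})\xrightarrow{d}\operatorname{MN}(0_K,\Omega_{zz})$ in the SD case and $\operatorname{N}(0_K,\Omega_{zz})$ in the WD case, jointly with $D_T^{-2}\sum z_{t-1}x_{t-1}^\top\to\Omega_{zx}$ (in distribution, with $\Omega_{zx}$ the stated $J_x^c$-measurable limit, for SD; in probability, to $E(x_{t-1}x_{t-1}^\top)$, for WD). Recentering by $\bar z=T^{-1}\sum z_{s-1}$ perturbs both sums by $o_p(1)$ because $\delta<1$ keeps $\bar z$ negligible relative to $D_T$. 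Applying the continuous mapping theorem to this joint limit, with $f_{u_\tau}(0)>0$ a deterministic constant (Assumption \ref{Assumption A.2}), yields $D_T(\hat\beta_\tau-\beta_\tau)\xrightarrow{d}f_{u_\tau}(0)^{-1}\Omega_{zx}^{-1}\operatorname{MN}(0_K,\Omega_{zz})$, which equals $\operatorname{MN}\bigl[0_K,f_{u_\tau}(0)^{-2}\Omega_{zx}^{-1}\Omega_{zz}(\Omega_{zx}^{-1})^\top\bigr]$ for SD predictors (conditionally Gaussian given the $\sigma$-field carrying $\Omega_{zx}$ and $\Omega_{zz}$) and $\operatorname{N}[\cdot]$ for WD predictors.

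I expect the main obstacle to be this last step: establishing the \emph{joint} stable convergence of the martingale numerator $D_T^{-1}\sum z_{t-1}\psi_\tau(u_{t\tau})$ and the random-limit denominator $D_T^{-2}\sum z_{t-1}x_{t-1}^\top$ in the SD case, so that their ratio is genuinely mixed normal, and verifying that both the demeaning terms and the $o_p(1)$ remainder from Proposition \ref{thm1} remain negligible at the $D_T$ and $D_T^2$ normalizations. Steps 1 and 2 are essentially algebra once the OLS normal equations are used.
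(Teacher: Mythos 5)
Your proposal is correct and follows essentially the same route as the paper: the orthogonality from the first-step OLS normal equations makes the Bahadur-representation matrix block diagonal, the substitution $\overline{\tilde{x}}_{t-1}=\hat{\theta}\bar{z}_{t-1}$ together with the identity $\sum_t \bar{z}_{t-1}x_{t-1}^\top=\bigl(\sum_t \bar{z}_{t-1}\bar{z}_{t-1}^\top\bigr)\hat{\theta}^\top$ yields the second line, and the limit follows from the \citet{Kostakisetal2015}/\citet{Lee2016} asymptotics plus Slutsky. Your treatment is somewhat more explicit than the paper's (spelling out the Frisch--Waugh inversion and the stable martingale CLT that the paper simply cites), but the argument is the same.
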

Equation (\ref{thm3eg}) in Theorem \ref{thm2} reveals that $\hat{\beta}_\tau$ constructed by the new two-step regression is the instrumental variable estimator in the quantile regression.
It is straightforward to construct the Wald type test statistic $Q_{ivx-qr}$ by self-normalization for the null hypothesis $H_0:R\beta_\tau =r_\tau$ where $R$ is a $J\times K$ predetermined matrix with rank $J$, $r_\tau$ is a predetermined vector with dimension $K$.
\begin{align}\label{dgjh3e1}
Q_{ivx-qr} = \left(R\hat{\beta}_\tau- r_\tau \right)^\top \left[ R \operatorname{\widehat{Avar}}(\hat{\beta}_\tau) R^\top \right]^{-1}\left(R\hat{\beta}_\tau- r_\tau \right),
\end{align}
where $\operatorname{\widehat{Avar}}( \hat{\beta}_\tau)=\frac{1}{\tilde{f}_{u_\tau}(0)^2 } \hat{\Omega}_{zx}^{-1}\hat{\Omega}_{zz}\left( \hat{\Omega}_{zx}^{-1}\right)^\top$, $\hat{\Omega}_{zx}=\sum_{t=1}^T \bar{z}_{t-1} x_{t-1}^\top$ and $\hat{\Omega}_{zz}=\tau(1-\tau) \sum_{t=1}^T \bar{z}_{t-1} \\ \bar{z}_{t-1}^\top$. $\tilde{f}_{u_\tau}(0)$ is some consistent estimator of $f_{u_\tau}(0)$, e.g., the one obtained by the nonparametric method of \cite{Lee2016}. Moreover, we define the t-test statistic for the \emph{one-sided} marginal test with $J=1$ such as $H_0:\beta_{i\tau}=0$, for $i=1,2,...K$.
\begin{align}\label{dgjh3e2}
\check{Q}_{ivx-qr} = \frac{ R\hat{\beta}_\tau- r_\tau }{\left[R \operatorname{\widehat{Avar}}(\hat{\beta}_\tau) R^\top \right]^{1/2} }.
\end{align}
\begin{prop}\label{thm3}
Under Assumptions \ref{Assumption A.1} and \ref{Assumption A.2} and the null hypothesis $H_0:R\beta_\tau=r_\tau$, one can show that the limiting distribution of the t-test statistic $\check{Q}_{ivx-qr}$ with $J=1$ and those of Wald type test statistic $Q_{ivx-qr}$ are the standard normal distribution and the $\chi^2$-distribution with $J$ degrees of freedom, respectively.
\end{prop}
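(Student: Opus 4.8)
The plan is to reduce both test statistics to self-normalized versions of $D_T(\hat\beta_\tau-\beta_\tau)$ and then combine Theorem~\ref{thm2} with the consistency of the plug-in quantities entering $\operatorname{\widehat{Avar}}(\hat\beta_\tau)$. First, under $H_0:R\beta_\tau=r_\tau$ the exact identity $R\hat\beta_\tau-r_\tau=R(\hat\beta_\tau-\beta_\tau)$ holds. Multiplying numerator and denominator by the normalizer $D_T$ (a scalar here, or the relevant block-diagonal matrix under mixed persistence) and noting that $D_T$ cancels, I would write
\begin{align*}
\check Q_{ivx-qr}&=\frac{R\,D_T(\hat\beta_\tau-\beta_\tau)}{\left[R\left(D_T^{2}\operatorname{\widehat{Avar}}(\hat\beta_\tau)\right)R^\top\right]^{1/2}},\\
Q_{ivx-qr}&=\bigl(R\,D_T(\hat\beta_\tau-\beta_\tau)\bigr)^\top\left[R\left(D_T^{2}\operatorname{\widehat{Avar}}(\hat\beta_\tau)\right)R^\top\right]^{-1}\bigl(R\,D_T(\hat\beta_\tau-\beta_\tau)\bigr).
\end{align*}
By Theorem~\ref{thm2}, $R\,D_T(\hat\beta_\tau-\beta_\tau)$ converges to a Gaussian vector, mixed normal under SD and normal under WD, with covariance $V\equiv R\,\frac{1}{f_{u_\tau}(0)^{2}}\,\Omega_{zx}^{-1}\Omega_{zz}(\Omega_{zx}^{-1})^\top R^\top$, where in the SD case $\Omega_{zx}$ and $\Omega_{zz}$ are measurable with respect to the $\sigma$-field $\mathcal G$ generated by the limiting Ornstein--Uhlenbeck path $J_x^c$.

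Second, I would show the denominator is consistent for $V$. Writing $\hat\Omega_{zx}=D_T^{2}A_T$ and $\hat\Omega_{zz}=D_T^{2}B_T$, the normalization built into $\operatorname{\widehat{Avar}}$ is chosen precisely so that $D_T^{2}\operatorname{\widehat{Avar}}(\hat\beta_\tau)=\tilde f_{u_\tau}(0)^{-2}A_T^{-1}B_T(A_T^{-1})^\top$. From the FCLT~\eqref{eq3}, the orthogonality $\sum_{t=1}^{T}\tilde x_{t-1}\tilde v_{t-1}^\top=0$, and the mildly-integrated IV limit theory of \citet{Kostakisetal2015} already invoked in Theorem~\ref{thm2}, one gets $A_T\to\Omega_{zx}$ and $B_T\to\Omega_{zz}$, jointly with the numerator (in the stable sense under SD, in probability under WD). Together with the consistency $\tilde f_{u_\tau}(0)\xrightarrow{p}f_{u_\tau}(0)$, imported from \citet{Lee2016} and CCL2023, and the continuous mapping theorem for matrix inversion (legitimate because $\Omega_{zx}$ is almost surely invertible, so $R\,\Omega_{zx}^{-1}\Omega_{zz}(\Omega_{zx}^{-1})^\top R^\top$ is almost surely positive definite for $R$ of rank $J$), this yields $R\left(D_T^{2}\operatorname{\widehat{Avar}}(\hat\beta_\tau)\right)R^\top\to V$.

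It then remains to combine the two convergences. Conditionally on $\mathcal G$, the vector $R\,D_T(\hat\beta_\tau-\beta_\tau)$ is asymptotically $N(0,V)$ with $V$ being $\mathcal G$-measurable, so the mixed-normal (stable) CLT gives that $V^{-1/2}R\,D_T(\hat\beta_\tau-\beta_\tau)$ converges to a standard $J$-variate normal that is \emph{independent} of $\mathcal G$; substituting the consistent estimate $R\left(D_T^{2}\operatorname{\widehat{Avar}}(\hat\beta_\tau)\right)R^\top$ for $V$ leaves the limit unchanged by Slutsky's theorem. Hence $\check Q_{ivx-qr}\xrightarrow{d}N(0,1)$ when $J=1$, while $Q_{ivx-qr}$, being the squared Euclidean norm of an asymptotically standard $J$-variate normal, satisfies $Q_{ivx-qr}\xrightarrow{d}\chi^2_J$; the WD case is the same argument with every limit deterministic.

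The step I expect to be the main obstacle is the SD case: since $\Omega_{zx}$ and $\Omega_{zz}$ are random, ordinary convergence in distribution plus Slutsky is not enough, and one must establish \emph{joint} stable convergence of the score $D_T^{-1}\sum_{t=1}^{T}\bar z_{t-1}\psi_\tau(u_{t\tau})$, the cross-moment $D_T^{-2}\sum_{t=1}^{T}\bar z_{t-1}x_{t-1}^\top$, and $D_T^{-2}\sum_{t=1}^{T}\bar z_{t-1}\bar z_{t-1}^\top$, and verify that the conditional variance of the score given $\mathcal G$ is exactly $\tau(1-\tau)\Omega_{vv}/(-2c_z)$, so that self-normalization cancels all regressor-persistence nuisance terms. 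A secondary delicate point is justifying consistency of the density estimator $\tilde f_{u_\tau}(0)$ at general, possibly tail, quantiles under the persistent-regressor design; I would quote this from \citet{Lee2016} and CCL2023 rather than reprove it.
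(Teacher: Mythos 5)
Your proposal is correct and follows essentially the same route as the paper, whose own proof is a one-line combination of the definitions \eqref{dgjh3e1}--\eqref{dgjh3e2}, the consistency of the density estimator, the convergence of $\hat{\Omega}_{zz}$ and $\hat{\Omega}_{zx}$ in \eqref{appe1}, Theorem \ref{thm2}, and Slutsky's theorem. You simply spell out the self-normalization and the joint (stable) convergence needed in the SD mixed-normal case more carefully than the paper does.
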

Although the asymptotic distributions of $Q_{ivx-qr}$ and $\check{Q}_{ivx-qr}$ are known, they still suffer from size distortions in finite sample due to the following two reasons. First, both the size distortions of $Q_{ivx-qr}$ and $\check{Q}_{ivx-qr}$ arise due to two higher-order terms $B_T$ and $C_T$ similar to those in mean regression models. \footnote{For more details on the higher-order terms $B_T$ and $C_T$, refer to Proposition \ref{mulpropfdie3} in the subsequent section and the mean regression model counterpart discussed in \cite{liao2024robust}.} Second, the nonparametric estimator $\tilde{f}_{u_\tau}(0)$ is inaccurate at tail quantiles and when the number of macro variables is large. This inaccuracy causes both $Q_{ivx-qr}$ and $\check{Q}_{ivx-qr}$ to suffer from size distortion in these cases.
\subsubsection{Reduce Size Distortions}\label{subsection3.2}
We now address the size distortion issues of the preliminary test statistic $Q_{ivx-qr}$ and $\check{Q}_{ivx-qr}$ in the following aspects. First, we employ a new simulation-based estimator for the density function of quantile regression error $f_{u_\tau}(0)$ to reduce the size distortion at tail quantiles and with a large number of predictors. Second, we apply a sample splitting procedure introduced by \cite{liao2024robust} to eliminate one of the higher-order terms {$C_T$}. Third, we choose a conservative value for the tuning parameter $c_z$ to reduce the size distortion effect of another higher-order term $B_T$ and point out its negative effect on its power performance. The details are shown below.
\subsubsection*{A New Simulation-based Estimator for $f_{u_\tau}(0)$}
The simulation-based approach to estimate $f_{u_\tau} (0)$ is as follows. We repeat the following two steps for $i=1,2,\cdots,M_1$, where $M_1 \in \mathbb{N}^{+}$ is a predetermined positive integer.
\begin{description}
\item[\textcolor{blue}{Step 1:}] We randomly generate i.i.d. multivariate time series $\{\xi_t^{(i)}\}_{t=1}^T$, which follows $\operatorname{N}\left( \operatorname{0_{M_2}}, \operatorname{I_{M_2}} \right)$ and $M_2\in \mathbb{N}^{+}$.
\item[\textcolor{blue}{Step 2:}] By (\ref{eq1}), it follows that
$Q_{y_t}(\tau\,|\mathcal{F}_{t-1}) =\mu_{\tau}+ x_{t-1}^\top \beta_{\tau} + [ \xi_{t-1}^{(i)} ]^\top \operatorname{0_{M_2}}$. 
So we are able to estimate the following quantile regression,
\begin{align}\label{quasiboot}
\left[\hat{\mu}_\tau^s,(\hat{\beta}_\tau^s)^\top,(\hat{l}_\tau^{(i)})^\top \right]^\top =\arg \min_{\mu_\tau, \beta_\tau,l_\tau} \sum_{t=1}^T \rho_{\tau}\left(y_{t}-\mu_\tau-\beta_\tau^\top x_{t-1} - l_\tau^\top \xi_{t-1}^{(i)}\right).
\end{align}
where $\hat{l}_\tau^{(i)}$ is the estimator of $\operatorname{0_{M_2}}$.
\end{description}
Since $\xi_{t-1}^{(i)}$ is randomly generated, it is independent of the main sample, then the following asymptotic distribution of $\xi_{t-1}^{(i)}$ holds. As the sample size $T\rightarrow\infty$,
\begin{align}\label{kdj38f3}
\sqrt{T}\hat{l}_{j,\tau}^{(i)}
&=\frac{1}{f_{u_\tau}(0)} \left\{ \frac{1}{T}\sum\limits_{t=1}^T [\xi_{j,t-1}^{(i)}]^2 \right\}^{-1} \frac{1}{\sqrt{T}} \sum\limits_{t=1}^T \xi_{j,t-1}^{(i)} \psi_\tau (u_{t\tau}) + o_p(1) \\
&=\frac{1}{f_{u_\tau}(0)} \frac{1}{\sqrt{T}} \sum\limits_{t=1}^T \xi_{j,t-1}^{(i)} \psi_\tau (u_{t\tau}) + o_p(1). \nonumber
\end{align}
Equation (\ref{kdj38f3}) and the fact that $\overrightarrow{\xi_{t-1}}\equiv \big(\xi_{1,t-1}^{(1)},\xi_{2,t-1}^{(1)}, \cdots,\xi_{M_2,t-1}^{(1)},\xi_{1,t-1}^{(2)},\xi_{2,t-1}^{(2)}, \cdots,\xi_{M_2,t-1}^{(2)},\\ \cdots\cdots, \xi_{1,t-1}^{(M_1)},\xi_{2,t-1}^{(M_1)}, \cdots,\xi_{M_2,t-1}^{(M_1)}\big)^\top$ follows i.i.d. $\operatorname{N}(0_{M},I_M)$ imply that
\begin{align}\label{dj74e5}
\frac{1}{M} \sum_{i=1}^{M_1}\sum_{j=1}^{M_2} \left[\sqrt{T}\hat{l}_{j,\tau}^{(i)}\right]^2 \bigg| \mathcal{F}_T & \xrightarrow{P} \operatorname{Avar}\left[\hat{l}_{1,\tau}^{(1)}\right]=\frac{1}{f_{u_\tau}(0)^2 } \tau(1-\tau),\\
\frac{1}{M} \sum_{i=1}^{M_1}\sum_{j=1}^{M_2} \left[\sqrt{T}\hat{l}_{j,\tau}^{(i)}\right]^2 & \xrightarrow{P} \operatorname{Avar}\left[\hat{l}_{1,\tau}^{(1)}\right]=\frac{1}{f_{u_\tau}(0)^2 } \tau(1-\tau), \nonumber 
\end{align}
as $M_1\rightarrow \infty$ and $T\rightarrow \infty$ and $M_2$ is constant.
Therefore, the simulation-based estimator of $f_{u_\tau} (0)$ is defined as follows 
\begin{align}\label{dj74e6}
\hat{f}_{u_\tau}(0) \coloneqq \left\{\frac{1}{M} \sum_{i=1}^{M_1}\sum_{j=1}^{M_2} \left[\sqrt{T}\hat{l}_{j,\tau}^{(i)}\right]^2 \right\}^{-1/2} [\tau(1-\tau)]^{1/2} \xrightarrow{P} f_{u_\tau}(0),
\end{align}
as $M_1\rightarrow \infty$ and $T\rightarrow \infty$ and $M_2$ is constant.
Note that (\ref{dj74e6}) holds by the continuous mapping theorem and (\ref{dj74e5}).
In practice, setting $M_2$ to a relatively large number, such as $M_2=50$, is advisable as it allows for a smaller number of loops ($M_1$) to construct $\hat{f}_{u_\tau}(0)$. For example, we set $M_1=100$, which guarantees speedy simulations to accommodate computation efficiency.
{Equations (\ref{dj74e5}) and (\ref{dj74e6}) imply that, unlike the nonparametric estimators, the accuracy of the estimator $\hat{f}_{u_\tau}(0)$ is not affected by the number of predictors and the quantile level $\tau$, since $\hat{f}_{u_\tau}(0)$ does not require the estimator of $u_{t\tau}$.} Additionally, $\operatorname{Var}\Big\{\frac{1}{M} \sum_{i=1}^{M_1}\sum_{j=1}^{M_2} \big[\sqrt{T}\hat{l}_{j,\tau}^{(i)}\big]^2 \Big| \mathcal{F}_T \Big\} = \frac{2}{M} [\tau(1-\tau)]^2 f_{u_\tau}(0)^{-4} +o_p(1)$ shown in equation (\ref{kuy33}) of the appendix imply that the accuracy of $\hat{f}_{u_\tau}(0)$ could be improved by enlarging M. Therefore, applying $\hat{f}_{u_\tau}(0)$ in constructing the test statistics could improve the size performance at tail quantiles and with a large number of predictors.
\subsubsection*{Higher-order Terms}
Another source of size distortion of $Q_{ivx-qr}$ and $\check{Q}_{ivx-qr}$ is the two higher-order terms, $C_T$ and $B_T$, shown in the following Proposition \ref{mulpropfdie3}. 
To demonstrate the concept of higher-order terms, we present them using the univariate case of $\check{Q}_{ivx-qr}$. \footnote{Higher-order terms in the multivariate model closely resemble Proposition 1 by \cite{liao2024robust}, and thus their detailed derivations are omitted for simplicity.} A similar conclusion can be drawn for $Q_{ivx-qr}$ since $Q_{ivx-qr}=\check{Q}_{ivx-qr}^2$ with $J=1$. 
\begin{prop}\label{mulpropfdie3}
Under Assumptions \ref{Assumption A.1} and \ref{Assumption A.2} and the null hypothesis $H_0:\beta_\tau =0$ and $J=K=1$, we have the following results for SD predictors.
\begin{align}
& \check{Q}_{ivx-qr} ={Z_T}+{B_T}+{C_T}+ o_p\left[T^{(\delta- 1)/2}\right]; \nonumber\\
& Z_T = \Omega_{zz}^{-1/2}\frac{1}{T^{1 / 2+\delta / 2}} \sum_{t=1}^T z_{t-1} \psi_\tau (u_{t\tau}) \xrightarrow{P} \operatorname{N}(0,1 );\nonumber \\
\label{dkmul76gh2}
&B_T = \varpi_b \Omega_{zz}^{-1/2}
\frac{1}{T^{1 / 2+\delta / 2}} \sum_{t=1}^T z_{t-1} \psi_\tau (u_{t\tau})\xrightarrow{P} 0;\\
\label{dkmul76gh}
& T^{(1-\delta)/2}\operatorname{E}\left({B_T} \right) \rightarrow
- \rho_{ v\psi} / \sqrt{-2 c_z}; 
\end{align}
\begin{align} 
\label{multpop1th3}
T^{(1 -\delta) / 2}{{C_T}} &=\left[ \tau(1-\tau)\frac{1}{T^{1+\delta}} \sum_{t=1}^T z_{t-1} z_{t-1}^\top \right]^{-1/2}\frac{1}{T^{1 / 2+\delta }} \sum_{t=1}^T z_{t-1} \frac{1}{\sqrt{T}}\sum_{t=1}^T \psi_\tau (u_{t\tau}) \\
&\Rightarrow -(-c_z/2)^{-1/2} \Sigma_{vv}^{-1/2} J_x^c(1)B_{\psi_\tau}(1)[\tau(1-\tau)]^{-1/2} , \nonumber
\end{align}
where $ \rho_{ v \psi} = \Sigma_{vv}^{-1/2} \Sigma_{\psi_\tau v}/ \sqrt{\tau(1-\tau)}$ is the correlation coefficient between $\psi_\tau (u_{t\tau})$ and $v_t$ and $\varpi_b = -\frac{1}{2} \left[\tau(1-\tau)\Omega_{zz}^{-1/2}\frac{1}{T^{1+\delta }} \sum_{t=1}^T z_{t-1} z_{t-1}^\top \Omega_{zz}^{-1/2} - 1\right]$.
\end{prop}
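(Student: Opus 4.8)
\textbf{Proof proposal for Proposition \ref{mulpropfdie3}.}

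The plan is to start from the self-normalized t-statistic $\check{Q}_{ivx-qr}$ in \eqref{dgjh3e2} with $R=1$, $r_\tau=0$, and expand it around its leading Gaussian term by exploiting the Bahadur representation of Proposition \ref{thm1} together with Theorem \ref{thm2}. The key identity to begin with is equation \eqref{thm3eg}, which gives $D_T(\hat\beta_\tau-\beta_\tau)$ as a ratio involving the demeaned instrument $\bar z_{t-1}$. Under $H_0:\beta_\tau=0$ with $J=K=1$, I would first substitute the consistent simulation-based density estimator, so that the random denominator contributes only to higher-order terms, and then write $\check{Q}_{ivx-qr}$ in the form $\big[\widehat{\mathrm{Avar}}\big]^{-1/2}\hat\beta_\tau$. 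The next move is to replace the \emph{demeaned} instrument $\bar z_{t-1}$ by the raw instrument $z_{t-1}$: the difference produces exactly the term involving $\frac{1}{\sqrt{T}}\sum_t\psi_\tau(u_{t\tau})$ multiplied by a sample average of $z_{t-1}$, which is the source of $C_T$. Tracking this demeaning correction carefully and normalizing by the appropriate powers of $T$ (using $D_T=T^{(1+\delta)/2}$ for SD predictors and the rate $T^{1+\delta}$ for $\sum_t z_{t-1}z_{t-1}^\top$ from \citealp{Kostakisetal2015}) yields the stated expression for $T^{(1-\delta)/2}C_T$, and the weak-convergence limit follows from the FCLT \eqref{eq3}, the continuous mapping theorem applied to $\frac{1}{T^{1/2+\delta}}\sum_t z_{t-1}\Rightarrow$ (a functional of $J_x^c$) and $\frac{1}{\sqrt T}\sum_t\psi_\tau(u_{t\tau})\Rightarrow B_{\psi_\tau}(1)$, together with $\frac{1}{T^{1+\delta}}\sum_t z_{t-1}z_{t-1}^\top\xrightarrow{p}\Sigma_{vv}/(-2c_z)$.

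For the $B_T$ term, the plan is to expand the self-normalizer $\big[\tau(1-\tau)\,\Omega_{zz}^{-1}\,T^{-(1+\delta)}\sum_t z_{t-1}z_{t-1}^\top\big]^{-1/2}$ around $1$ via a first-order Taylor expansion $(1+a)^{-1/2}\approx 1-\tfrac12 a$, which defines $\varpi_b$ exactly as in the statement; multiplying the $-\tfrac12 a$ piece against the leading term $\Omega_{zz}^{-1/2}T^{-(1/2+\delta/2)}\sum_t z_{t-1}\psi_\tau(u_{t\tau})$ gives $B_T$. That $B_T\xrightarrow{p}0$ is immediate since $\varpi_b=o_p(1)$ by a law of large numbers for $T^{-(1+\delta)}\sum_t z_{t-1}z_{t-1}^\top$ while the second factor is $O_p(1)$ by the CLT for mildly integrated instruments. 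The delicate part is the \emph{rate} and the \emph{mean} of $B_T$: I would compute $\mathrm{E}(B_T)$ by pairing the fluctuation of $T^{-(1+\delta)}\sum_t z_{t-1}z_{t-1}^\top$ with $\sum_t z_{t-1}\psi_\tau(u_{t\tau})$, which creates a covariance term between $v_t$ (entering $z_t$ through $\Delta x_t$) and $\psi_\tau(u_{t\tau})$; the leading contribution of this covariance, after normalization by $T^{(1-\delta)/2}$, is $-\rho_{v\psi}/\sqrt{-2c_z}$, matching \eqref{dkmul76gh}. This requires the IVX algebra of \citet{PhillipsMagdalinos2009} and \citet{Kostakisetal2015} showing $z_{t-1}=\sum_{j<t}\rho_z^{t-1-j}\Delta x_j$ has the mildly-integrated covariance structure, plus Assumption \ref{Assumption A.1} guaranteeing $\mathrm{E}[\psi_\tau(u_{t\tau})\varepsilon_t]$ is constant so the long-run covariance $\Sigma_{\psi_\tau v}$ is well-defined.

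The remainder term $o_p[T^{(\delta-1)/2}]$ collects: (i) the Bahadur-representation error $o_p(1)$ from Proposition \ref{thm1} after rescaling, (ii) the density-estimation error $\hat f_{u_\tau}(0)-f_{u_\tau}(0)=o_p(1)$ from \eqref{dj74e6}, and (iii) the quadratic and higher terms in the Taylor expansion of the self-normalizer, each of which is of smaller order than $T^{(\delta-1)/2}$ because $1/2<\delta<1$. I expect the main obstacle to be item (iii) combined with the bookkeeping in the $B_T$ mean calculation: one must verify that cross-terms between the demeaning correction (the $C_T$ channel) and the normalizer fluctuation (the $B_T$ channel) are genuinely negligible at order $T^{(\delta-1)/2}$, and that no additional $O_p(T^{(\delta-1)/2})$ bias hides in the interaction between $\tilde x_{t-1}$ and $\tilde v_{t-1}$ from the first-stage OLS. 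Since the orthogonality $\sum_t\tilde x_{t-1}\tilde v_{t-1}^\top=0$ and $\sum_t\tilde v_{t-1}=0$ from \eqref{firstep} kills the leading cross-terms (as used already in deriving Theorem \ref{thm2}), this reduces the problem to the univariate IVX expansion, for which the structure parallels Proposition 1 of \citet{liao2024robust}; invoking that parallel, as the footnote suggests, closes the argument.
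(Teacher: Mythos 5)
Your proposal is correct and follows essentially the same route as the paper: the paper's own proof is a one-line appeal to Proposition 1 of \cite{liao2024robust} applied to the definition of $\check{Q}_{ivx-qr}$ in \eqref{dgjh3e2}, and your expansion (demeaning of $z_{t-1}$ generating $C_T$, first-order Taylor expansion of the self-normalizer generating $B_T=\varpi_b Z_T$, and the $v$--$\psi_\tau$ covariance driving $\mathrm{E}(B_T)$) is precisely the content of that cited argument, which you also invoke to close the bookkeeping. The only caveat worth noting is that folding the density-estimation error into the $o_p\!\left[T^{(\delta-1)/2}\right]$ remainder requires $\tilde{f}_{u_\tau}(0)-f_{u_\tau}(0)$ to vanish at that rate rather than merely being $o_p(1)$, a point the paper likewise leaves implicit.
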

The slow convergence of $\frac{1}{T}\sum_{t=1}^T z_{t-1}$ in $\check{Q}_{ivx-qr}$ induces $C_T$, while the correlation between the numerator and the denominator of $\check{Q}_{ivx-qr}$ leads to $B_T$.
To eliminate $C_T$,
we follow \cite{liao2024robust} to define a new instrumental variable $\tilde{z}_{t-1}$ based on IV $z_{t-1}$ in (\ref{mulivz}). Specifically, define $\tilde{z}_{t-1} = (\operatorname{I_K}-S_a)z_{t-1}$ when
$1\leq t\leq T_0=\lambda\, T$ \footnote{Following \cite{liao2024robust}, we set $\lambda=0.5$ to evenly split the full sample in half.} and $\tilde{z}_{t-1} = (\operatorname{I_K}-S_b)z_{t-1}$ when $T_0+1 \leq t \leq T$, where
$S_a = \frac{1}{T}\sum\nolimits_{t=1}^{T} z_{t-1}\big(\frac{1}{T_0}\sum\nolimits_{t=1}^{T_0} z_{t-1}^\top \big) \big(\frac{1}{T_0}\sum\nolimits_{t=1}^{T_0} z_{t-1}^\top \frac{1}{T_0} \sum\nolimits_{t=1}^{T_0} z_{t-1}\big)^{-1}$ and 
$S_b = \frac{1}{T}\sum\nolimits_{t=1}^{T} z_{t-1}\big(\frac{1}{T-T_0}\sum\nolimits_{t=T_0+1}^{T} z_{t-1}^\top \big) \big(\frac{1}{T-T_0}\sum\nolimits_{t=T_0+1}^{T} z_{t-1}^\top \frac{1}{T-T_0}\sum\nolimits_{t=T_0+1}^{T} z_{t-1}\big)^{-1}$. 
Since the mean of the IV $\tilde{z}_{t-1}$ is exactly zero, 
the source of $C_T$ is eliminated. Hereafter, we apply the IV $\tilde{z}_{t-1}$ to run the two-step regression introduced in (\ref{firstep}) and (\ref{firstep2}) to obtain the consistent IV estimator of $\beta_\tau$ under both the null and the alternative hypothesis. 
Specially, we run regression $ (\hat{\mu}_x^l,\hat{\theta}^l) = \arg \; \min_{\mu_x,\theta} \sum_{t= 1}^T \left(x_{t-1}- \mu_x - \theta \tilde{z}_{t-1} \right)^\top \left(x_{t-1}- \mu_x - \theta \tilde{z}_{t-1} \right)$ to obtain the fitted value $\tilde{x}_{t-1}^l = \hat{\mu}_x^l+\hat{\theta}^l \tilde{z}_{t-1}$ and the residual $\tilde{v}_{t-1}^l = x_{t-1}- \tilde{x}_{t-1}^l$ in the first step, and $ \big[\hat\mu_\tau^l,(\hat\beta_\tau^l)^\top,(\hat\gamma_\tau^l)^\top\big]^\top =\arg \min_{\mu_\tau, \beta_\tau,\gamma_\tau} \sum_{t= 1}^T \rho_{\tau}\left(y_{t}-\mu_\tau-\beta_\tau^\top \tilde{x}_{t-1}^l -\gamma_\tau^\top \tilde{v}_{t-1}^l\right)$ in the second step. Notice $\hat{\beta}_\tau^l$ uses the full sample and will not lose finite sample efficiency.
The following theorem gives the asymptotic distribution of $\hat{\beta}_\tau^l$.
\begin{thm}\label{multh1m}
Under Assumptions \ref{Assumption A.1} and \ref{Assumption A.2}, for SD and WD predictors, it follows that
\begin{align}
D_T\left(\hat{\beta}_\tau^l - \beta\right) &=D_T\left(\hat{\beta}_\tau^{l_0}- \beta\right) + o_p(1) \\
&=\frac{1}{f_{u_\tau}(0)} \left(D_T^{-1} \sum\limits_{t=1}^{T} \tilde{z}_{t-1}x_{t-1}^\top D_T^{-1} \right)^{-1} D_T^{-1} \sum\limits_{t=1}^{T} \tilde{z}_{t-1} \psi_\tau (u_{t\tau})+ o_p(1). \nonumber\\
& \xrightarrow{d} Z_{\beta}\overset{d}{=} \operatorname{MN}\left[0,\operatorname{Avar}(\hat{\beta}_\tau^l) \right] \nonumber
\end{align}
where $\operatorname{Avar}(\hat{\beta}_\tau^l)=\frac{1}{f_{u_\tau}(0)^2}(\Sigma_{zx})^{-1}\Sigma_{zz} \left[\Sigma_{zx}^{-1}\right]^\top$,
$\Sigma_{zz} = \lambda(\operatorname{I_K} - \tilde{S}_a)\Omega_{zz} (\operatorname{I_K} - \tilde{S}_a)^\top +(1-\lambda)(\operatorname{I_K}-\tilde{S}_b)\\
\Omega_{zz} (\operatorname{I_K} - \tilde{S}_b)^\top$, 
$\Sigma_{zx} = -c_z^{-1}(\operatorname{I_K} -\tilde{S}_a) \int_0^{\lambda} dJ_x^c(r)\, J_x^c(r)^\top -c_z^{-1}(\operatorname{I_K} - \tilde{S}_b)\int_{\lambda}^{1} dJ_x^c(r) \, J_x^c(r)^\top -c_z^{-1}[ \operatorname{I_K}\\
- \lambda\tilde{S}_a -(1-\lambda)\tilde{S}_b ]\operatorname{E}(v_tv_t^\top) $
for SD predictors and
$\Sigma_{zx}=\big[\operatorname{I_K} - \lambda\tilde{S}_a -(1-\lambda)\tilde{S}_b \big] \operatorname{E}\left( x_{t-1}x_{t-1}^\top\right)$ for WD predictors. $S_a \Rightarrow \tilde{S}_a $ and $S_b \Rightarrow \tilde{S}_b $, where $\tilde{S}_a = J_x^c(1) J_x^c(\lambda)^\top 
\big[ J_x^c(\lambda)^\top J_x^c(\lambda) \big]^{-1}$ and 
$\tilde{S}_b=J_x^c(1) \left[J_x^c(1) - J_x^c(\lambda) \right]^\top \big\{ \left[J_x^c(1) - J_x^c(\lambda) \right]^\top \left[J_x^c(1) - \\
J_x^c(\lambda) \right] \big\}^{-1}$ for SD predictors and $\tilde{S}_a = B_v(1) B_v(\lambda)^\top \left[ B_v(\lambda)^\top B_v(\lambda) \right]^{-1}$ and $\tilde{S}_b= B_v(1) \left[B_v(1) - B_v(\lambda) \right]^\top 
\{ \left[B_v(1) - B_v(\lambda) \right]^\top \big[B_v(1) \\ - 
B_v(\lambda) \big] \}^{-1}$ for WD predictors. $B_v(r)$ is defined in equation \eqref{eq3}. 
\end{thm}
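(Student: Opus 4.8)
The plan is to build the asymptotic distribution of $\hat\beta_\tau^l$ in three stages: first reduce the two-step IV estimator to an explicit linear form in $\tilde z_{t-1}$, then identify the limits of the two sample moments $D_T^{-1}\sum \tilde z_{t-1}x_{t-1}^\top D_T^{-1}$ and $D_T^{-1}\sum \tilde z_{t-1}\psi_\tau(u_{t\tau})$, and finally assemble them into the mixed normal law. For the first stage I would invoke Proposition \ref{thm1} (the Bahadur representation), now applied with the split-sample IV $\tilde z_{t-1}$ in place of $z_{t-1}$: by the OLS orthogonality from the modified Step~1 regression ($\sum_{t=1}^T\tilde x_{t-1}^l(\tilde v_{t-1}^l)^\top=0$ and $\sum_{t=1}^T\tilde v_{t-1}^l=0$) the inverted block matrix is block-diagonalized exactly as in the passage following Proposition \ref{thm1}, so $D_T(\hat\beta_\tau^l-\beta)$ collapses to $\frac{1}{f_{u_\tau}(0)}\big(D_T^{-2}\sum\overline{\tilde x}_{t-1}^l(\overline{\tilde x}_{t-1}^l)^\top\big)^{-1}D_T^{-1}\sum\overline{\tilde x}_{t-1}^l\psi_\tau(u_{t\tau})+o_p(1)$. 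Since $\tilde x_{t-1}^l=\hat\mu_x^l+\hat\theta^l\tilde z_{t-1}$ is an affine function of $\tilde z_{t-1}$ and $\tilde z_{t-1}$ has \emph{exact} zero sample mean on each subsample (hence overall), the demeaned fitted values reproduce $\hat\theta^l\tilde z_{t-1}$, and a standard cancellation of $\hat\theta^l$ against $(\hat\theta^l)^{-1}$ gives the second displayed line, $\frac{1}{f_{u_\tau}(0)}\big(D_T^{-1}\sum\tilde z_{t-1}x_{t-1}^\top D_T^{-1}\big)^{-1}D_T^{-1}\sum\tilde z_{t-1}\psi_\tau(u_{t\tau})+o_p(1)$; this also yields the first equality in the statement, with $\hat\beta_\tau^{l_0}$ denoting the infeasible estimator that uses $f_{u_\tau}(0)$ directly.

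Second, I would establish the weak limits of the two moments under the FCLT \eqref{eq3} and the OU-process limit $x_{\lfloor rT\rfloor}/\sqrt T\Rightarrow J_x^c(r)$ (SD) or the Brownian limit (WD). The key input is that $S_a\Rightarrow\tilde S_a$ and $S_b\Rightarrow\tilde S_b$ with the explicit forms quoted in the theorem; this follows from the continuous mapping theorem applied to the ratios of partial sums of $z_{t-1}$, using the known facts $\frac{1}{T^{(1+\delta)/2}}\sum_{t=1}^{\lfloor rT\rfloor}z_{t-1}$ converging and $\frac1T\sum z_{t-1}\to$ the OU integral, exactly the asymptotics of mildly-integrated IVs from \cite{Kostakisetal2015} and \cite{PhillipsMagdalinos2009} adapted to partial sums over $[0,\lambda]$ and $[\lambda,1]$. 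Given these, $\tilde z_{t-1}=(I_K-S_\bullet)z_{t-1}$ inherits the mixed-Gaussian structure, and the sandwich cross-moment $D_T^{-1}\sum\tilde z_{t-1}x_{t-1}^\top D_T^{-1}$ splits additively over the two subsamples plus the $\operatorname{E}(v_tv_t^\top)$ contemporaneous-correlation piece, producing $\Sigma_{zx}$; similarly $D_T^{-1}\sum\tilde z_{t-1}z_{t-1}^\top D_T^{-1}$ (carrying the conditional variance $\tau(1-\tau)$ of $\psi_\tau$) produces $\Sigma_{zz}$ as the $\lambda$-weighted combination $\lambda(I_K-\tilde S_a)\Omega_{zz}(I_K-\tilde S_a)^\top+(1-\lambda)(I_K-\tilde S_b)\Omega_{zz}(I_K-\tilde S_b)^\top$. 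The martingale-difference property of $\psi_\tau(u_{t\tau})$ with respect to $\mathcal F_{t-1}$ (Remark \ref{remaDGP}, Assumption \ref{Assumption A.1}) is what makes $D_T^{-1}\sum\tilde z_{t-1}\psi_\tau(u_{t\tau})$ converge \emph{stably} to a Gaussian with conditional covariance $\tau(1-\tau)\Sigma_{zz}/f_{u_\tau}(0)^2$ given the limiting $\sigma$-field generated by $J_x^c$.

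Third, I would combine: conditionally on the limiting OU paths, $D_T(\hat\beta_\tau^l-\beta)$ is $\Sigma_{zx}^{-1}\cdot(\text{Gaussian with covariance }\Sigma_{zz})/f_{u_\tau}(0)$, i.e. $\operatorname{MN}[0,\operatorname{Avar}(\hat\beta_\tau^l)]$ with $\operatorname{Avar}(\hat\beta_\tau^l)=f_{u_\tau}(0)^{-2}\Sigma_{zx}^{-1}\Sigma_{zz}(\Sigma_{zx}^{-1})^\top$; in the WD case the limiting $\sigma$-field is trivial and the mixture degenerates to an ordinary normal, with $\Sigma_{zx},\Sigma_{zz}$ reducing to the $E(x_{t-1}x_{t-1}^\top)$-based expressions because $\tilde z_{t-1}$ is then asymptotically stationary and $\int_0^\lambda dJ\,J^\top$ is replaced by its expectation. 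The main obstacle I anticipate is bookkeeping the split-sample bias-annihilation rigorously: one must verify that replacing $z_{t-1}$ by $(I_K-S_\bullet)z_{t-1}$ does not destroy the $o_p(1)$ remainder in the Bahadur expansion (the projection matrices $S_a,S_b$ are $O_p(1)$ but random, so one needs a uniform handle on $\sup_t\|\tilde z_{t-1}\|$ and on the inverse of $D_T^{-2}\sum\tilde z_{t-1}\tilde z_{t-1}^\top$ being nonsingular in the limit, i.e. that $\Sigma_{zz}$ and $\Sigma_{zx}$ are a.s. invertible), and that the $C_T$-type term genuinely vanishes because $\frac1T\sum_{t}\tilde z_{t-1}=0$ exactly rather than merely $o_p$. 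These points are handled by the arguments in \cite{liao2024robust} for the mean-regression analogue, so my plan is to cite that structure and supply the quantile-specific modifications (the $f_{u_\tau}(0)$ factor and the MDS stable-convergence step for $\psi_\tau$) in full.
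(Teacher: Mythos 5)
Your plan is sound and, for the distributional part, is essentially the route the paper (implicitly) takes, since the paper defers the details to Theorem 2 of \cite{liao2024robust}: the Bahadur representation of Proposition \ref{thm1} rerun with $\tilde z_{t-1}$ in place of $z_{t-1}$, block-diagonalization via the Step-1 OLS orthogonality, cancellation of $\hat\theta^l$ exactly as in the proof of Theorem \ref{thm2}, and then the limits $S_a\Rightarrow\tilde S_a$, $S_b\Rightarrow\tilde S_b$ (Lemma \ref{sasb69}) plus stable convergence of $D_T^{-1}\sum_t\tilde z_{t-1}\psi_\tau(u_{t\tau})$ to assemble $\Sigma_{zx}$, $\Sigma_{zz}$ and the mixed normal limit. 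Where you genuinely diverge is the first equality of the theorem: $\hat\beta_\tau^{l_0}$ is not an infeasible version of $\hat\beta_\tau^l$ using the true density, but the specific combination $(W_1-W_2-W_3)^{-1}\bigl(W_1\hat\beta_\tau - W_2\hat\beta_\tau^a - W_3\hat\beta_\tau^b\bigr)$ of the full-sample and two subsample two-step estimators defined in \eqref{mulkde2con2}. The paper reaches the linear form \eqref{kdad53g} for that object by algebraically cancelling the $\sum_t\psi_\tau(u_{t\tau})$ terms across \eqref{muldeftwo2}--\eqref{muldeftwo4}, whereas you obtain the same leading term for $\hat\beta_\tau^l$ directly; since both expansions have identical leading terms, the asserted $o_p(1)$ equivalence follows by combining the two arguments, but your write-up alone does not address $\hat\beta_\tau^{l_0}$ as the paper defines it. Two small corrections to your reasoning, neither fatal: (i) $\sum_{t=1}^T\tilde z_{t-1}=0$ holds exactly only over the full sample, not on each subsample --- with $m_a=\tfrac{1}{T_0}\sum_{t\le T_0}z_{t-1}$ and $m=\tfrac1T\sum_{t=1}^Tz_{t-1}$ one has $\sum_{t\le T_0}\tilde z_{t-1}=T_0(m_a-m)\neq 0$ in general --- but only the full-sample identity is needed for your demeaning step, so nothing breaks; (ii) $\Sigma_{zz}$ arises as the limit of $\tau(1-\tau)\,D_T^{-2}\sum_t\tilde z_{t-1}\tilde z_{t-1}^\top$, not of $D_T^{-2}\sum_t\tilde z_{t-1}z_{t-1}^\top$ as you wrote.
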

Then the test statistic ${Q_{l-qr}}$ is constructed for the null hypothesis $H_0:R\beta_\tau=r_\tau$,
\begin{align}\label{walde2tes}
{Q_{l-qr}} \equiv \left(R \hat{\beta}_\tau^l -r_\tau \right)^\top \left[R \operatorname{\widehat{Avar}}(\hat{\beta}_\tau^l ) R^\top \right]^{-1} \left(R \hat{\beta}_\tau^l -r_\tau \right),
\end{align}
where $\operatorname{\widehat{Avar}}(\hat{\beta}_\tau^l )= \hat{f}_{u_\tau}(0)^{-2} ( \sum_{t=1}^T \tilde{z}_{t-1} x_{t-1}^\top )^{-1}[ \tau(1-\tau) \sum_{t=1}^T \tilde{z}_{t-1} \tilde{z}_{t-1}^\top ] ( \sum_{t=1}^T x_{t-1} \tilde{z}_{t-1}^\top )^{-1}$.
Moreover, we construct the t-test statistic ${\check{Q}_{l-qr}}$ when $J=1$ for right-sided test $H_0:\beta_i=0$ vs $H_a^r:\beta_i>0$ and left-sided test $H_0:\beta_i=0$ vs $H_a^l:\beta_i<0$.
\begin{align}\label{defqc1}
{\check{Q}_{l-qr}} \equiv \frac{ R \hat{\beta}_\tau^l -r_\tau }{ \sqrt{ R\operatorname{\widehat{Avar}}(\hat{\beta}_\tau^l )R^\top} }.
\end{align}
The following theorem gives the asymptotic distribution of $\check{Q}_{l-qr}$. 
\begin{prop}\label{t7sf8f1}
Under Assumptions \ref{Assumption A.1} and \ref{Assumption A.2} and the null hypothesis $H_0:R\beta_\tau=r_\tau$, as $M_1\rightarrow\infty$ and $T\rightarrow \infty$, one can show that the limiting distribution of the test statistics ${\check{Q}_{l-qr}}$ and ${Q_{l-qr}}$ are the standard normal and the $\chi^2$-distribution with $J$ degrees of freedom, respectively.
\end{prop}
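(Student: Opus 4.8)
The plan is to obtain both limit laws from the asymptotic representation of $\hat{\beta}_\tau^l$ in Theorem \ref{multh1m} together with the consistency of every ingredient of $\operatorname{\widehat{Avar}}(\hat{\beta}_\tau^l)$, and then to close with a self-normalization (mixed-normal) argument. First I would note that under $H_0:R\beta_\tau=r_\tau$ one has $R\hat{\beta}_\tau^l-r_\tau=R(\hat{\beta}_\tau^l-\beta_\tau)$, so that, inserting the normalizer $D_T$ ($D_T=T^{(1+\delta)/2}$ in the SD case, $D_T=\sqrt{T}$ in the WD case),
\begin{align*}
Q_{l-qr}=\bigl[D_T R(\hat{\beta}_\tau^l-\beta_\tau)\bigr]^\top\bigl[D_T\,R\operatorname{\widehat{Avar}}(\hat{\beta}_\tau^l)R^\top D_T\bigr]^{-1}\bigl[D_T R(\hat{\beta}_\tau^l-\beta_\tau)\bigr],
\end{align*}
and similarly for $\check{Q}_{l-qr}$ when $J=1$. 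By Theorem \ref{multh1m} the outer factor satisfies $D_T R(\hat{\beta}_\tau^l-\beta_\tau)\xrightarrow{d} RZ_\beta$, where $Z_\beta\overset{d}{=}\operatorname{MN}\bigl[0,\operatorname{Avar}(\hat{\beta}_\tau^l)\bigr]$ and $\operatorname{Avar}(\hat{\beta}_\tau^l)=f_{u_\tau}(0)^{-2}\Sigma_{zx}^{-1}\Sigma_{zz}(\Sigma_{zx}^{-1})^\top$.

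The second step is to show that the middle factor converges, jointly with the outer one, to $R\operatorname{Avar}(\hat{\beta}_\tau^l)R^\top$. Using the algebraic identity $(\sum_t\tilde{z}_{t-1}x_{t-1}^\top)^{-1}=D_T^{-1}(D_T^{-1}\sum_t\tilde{z}_{t-1}x_{t-1}^\top D_T^{-1})^{-1}D_T^{-1}$ (and likewise for the other two sums), this reduces to three facts: (i) $\hat{f}_{u_\tau}(0)^{-2}\xrightarrow{P}f_{u_\tau}(0)^{-2}$ as $M_1,T\to\infty$, which is precisely equation (\ref{dj74e6}), and which — by construction of the simulation estimator — is unaffected by $K$ and by proximity of $\tau$ to $0$ or $1$; (ii) $D_T^{-1}\sum_{t=1}^T\tilde{z}_{t-1}x_{t-1}^\top D_T^{-1}\Rightarrow\Sigma_{zx}$; and (iii) $\tau(1-\tau)\,D_T^{-1}\sum_{t=1}^T\tilde{z}_{t-1}\tilde{z}_{t-1}^\top D_T^{-1}\Rightarrow\Sigma_{zz}$. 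Facts (ii)–(iii) are established inside the proof of Theorem \ref{multh1m}: they follow from the FCLT (\ref{eq3}), the Ornstein--Uhlenbeck limit $x_{\lfloor rT\rfloor}/\sqrt{T}\Rightarrow J_x^c(r)$, the IVX moment limits of \citet{Kostakisetal2015}, and the continuous-mapping limits $S_a\Rightarrow\tilde{S}_a$, $S_b\Rightarrow\tilde{S}_b$, applied separately on $\{1,\dots,T_0\}$ and $\{T_0+1,\dots,T\}$ and then summed. Combining (i)–(iii) with the continuous mapping theorem — and with the almost-sure invertibility of $\Sigma_{zx}$ and positive-definiteness of $\Sigma_{zz}$, which hold because $J_x^c(\lambda)$ and $J_x^c(1)-J_x^c(\lambda)$ are a.s. nondegenerate so that $\tilde{S}_a,\tilde{S}_b$ are well defined a.s. — yields $D_T R\operatorname{\widehat{Avar}}(\hat{\beta}_\tau^l)R^\top D_T\Rightarrow R\operatorname{Avar}(\hat{\beta}_\tau^l)R^\top$, jointly with the numerator on the common probability space carrying $(B_v,B_{\psi_\tau})$.

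The third step is the self-normalization. Conditionally on the limiting $\sigma$-field generated by $(B_v,J_x^c)$, the vector $Z_\beta$ is Gaussian with mean $0$ and covariance $\operatorname{Avar}(\hat{\beta}_\tau^l)$ — this is the meaning of the $\operatorname{MN}$ notation — and that same conditional covariance is the (random) limit of the middle factor. Hence, by the continuous mapping theorem applied to the joint weak limit,
\begin{align*}
Q_{l-qr}\xrightarrow{d}(RZ_\beta)^\top\bigl[R\operatorname{Avar}(\hat{\beta}_\tau^l)R^\top\bigr]^{-1}(RZ_\beta),
\end{align*}
and the conditional law of the right-hand side given the $\sigma$-field is $\chi^2_J$ whatever the conditioning value is; therefore the unconditional law is $\chi^2_J$. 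For $J=1$ the same reasoning gives $\check{Q}_{l-qr}\xrightarrow{d}RZ_\beta/\sqrt{R\operatorname{Avar}(\hat{\beta}_\tau^l)R^\top}\overset{d}{=}\operatorname{N}(0,1)$, and the sign convention in (\ref{defqc1}) makes the one-sided rejection regions the usual normal ones. In the WD case the argument is identical but simpler, since $\Sigma_{zx},\Sigma_{zz}$ are deterministic and the limit is genuinely (not merely conditionally) normal.

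The main obstacle I expect is the interplay between steps two and three in the SD case: one must verify that the randomly reweighted, sample-split instrument $\tilde{z}_{t-1}=(\operatorname{I_K}-S_a)z_{t-1}$ (resp. $(\operatorname{I_K}-S_b)z_{t-1}$) still delivers a \emph{jointly} convergent pair (numerator, denominator) whose ratio is pivotal — i.e. that subtracting the $O_p(1)$ functionals $S_a,S_b$ of the partial-sum process neither destroys a.s. nonsingularity of the limiting $\Sigma_{zx},\Sigma_{zz}$ nor leaves a residual nuisance functional of $J_x^c$ that fails to cancel in the studentization, so that the mixing law integrates out to an exact $\chi^2_J$ (resp. standard normal). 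Everything else is a routine assembly of Theorem \ref{multh1m}, equation (\ref{dj74e6}), Slutsky's theorem, and the continuous mapping theorem.
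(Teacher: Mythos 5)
Your proposal is correct and follows essentially the route the paper intends: the paper omits this proof by deferring to Theorem 3 of \cite{liao2024robust}, but the argument it actually implements elsewhere (the proof of Proposition \ref{mulkeythe2}, whose $b_\tau=0$ case is exactly this statement) is the same assembly of Theorem \ref{multh1m} for the numerator, Lemma \ref{app1djgh} (i.e.\ consistency of $\operatorname{\widehat{Avar}}(\hat{\beta}_\tau^l)$ via Lemmas \ref{lem.ap1}--\ref{lem.ap2} and equation (\ref{dj74e6})) for the studentizer, and Slutsky/continuous mapping with the mixed-normal self-normalization. The only discrepancy is notational: your normalization $D_T\,R\operatorname{\widehat{Avar}}(\hat{\beta}_\tau^l)R^\top D_T\Rightarrow R\operatorname{Avar}(\hat{\beta}_\tau^l)R^\top$ is the dimensionally correct one, whereas the paper's Lemma \ref{app1djgh} writes $D_T^{-2}$ where $D_T^{2}$ is evidently meant.
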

Next, the local power of the test statistics ${\check{Q}_{l-qr}}$ and ${Q_{l-qr}}$ are shown as follows. 
\begin{prop}\label{mulkeythe2}
Under Assumptions \ref{Assumption A.1} and \ref{Assumption A.2} and the local alternative hypothesis $H_{a}:R\beta_\tau- r_\tau=b_\tau/D_T$ and $b_\tau$ is a constant vector with dimension J, for SD and WD predictors, as $M_1\rightarrow\infty$ and $T\rightarrow \infty$, it follows that
\begin{align}
\check{Q}_{l-qr} &\xrightarrow{d} \check{Q}_{l-qr}^* \overset{d}{=} \operatorname{N}(0,1)+\left[ R\operatorname{Avar}(\hat{\beta}_\tau^l)R^\top \right]^{-1/2} b_\tau,\quad J=1;\\
Q_{l-qr} & \xrightarrow{d} Q_{l-qr}^* \overset{d}{=} \chi_J^2 +2Z_Q + b_\tau^\top \left[ R\operatorname{Avar}(\hat{\beta}_\tau^l)R^\top \right]^{-1} b_\tau,\quad J\geq 1,
\end{align}
where $Z_Q=Z_\beta^\top \left[ R\operatorname{Avar}(\hat{\beta}_\tau^l)R^\top \right]^{-1} b_\tau$.
\end{prop}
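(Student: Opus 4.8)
The plan is to derive the local power of $\check{Q}_{l-qr}$ and $Q_{l-qr}$ by substituting the local alternative $R\beta_\tau - r_\tau = b_\tau/D_T$ directly into the definitions \eqref{defqc1} and \eqref{walde2tes}, and then invoking the asymptotic representations already established under the (global) null in Theorem \ref{multh1m} and Proposition \ref{t7sf8f1}. Concretely, for the one-sided case $J=1$, write $R\hat\beta_\tau^l - r_\tau = R(\hat\beta_\tau^l - \beta_\tau) + (R\beta_\tau - r_\tau) = R(\hat\beta_\tau^l - \beta_\tau) + b_\tau/D_T$. Multiplying through by $D_T$ and dividing by the (normalized) standard error, the first term converges to $\operatorname{N}(0,1)$ by Theorem \ref{multh1m} together with the consistency of $\hat f_{u_\tau}(0)$ from \eqref{dj74e6} (which justifies replacing $f_{u_\tau}(0)$ by its estimator in $\operatorname{\widehat{Avar}}$), and the second term contributes the deterministic shift $[R\operatorname{Avar}(\hat\beta_\tau^l)R^\top]^{-1/2}b_\tau$. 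This yields the stated noncentral normal limit $\check{Q}_{l-qr}^*$.

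For the Wald statistic with general $J\ge 1$, the argument is the same up to squaring. Let $W_T \equiv D_T R(\hat\beta_\tau^l - \beta_\tau)$, which by Theorem \ref{multh1m} converges in distribution to $R Z_\beta \sim \operatorname{MN}[0, R\operatorname{Avar}(\hat\beta_\tau^l)R^\top]$. Then
\begin{align*}
Q_{l-qr} = \big(W_T + b_\tau\big)^\top \big[R\operatorname{\widehat{Avar}}(\hat\beta_\tau^l)R^\top\big]^{-1}\big(W_T + b_\tau\big) + o_p(1),
\end{align*}
and expanding the quadratic form gives the three pieces: $W_T^\top[\,\cdot\,]^{-1}W_T \xrightarrow{d} \chi_J^2$ (the null limit from Proposition \ref{t7sf8f1}), the cross term $2 W_T^\top[\,\cdot\,]^{-1}b_\tau \xrightarrow{d} 2 Z_\beta^\top[R\operatorname{Avar}(\hat\beta_\tau^l)R^\top]^{-1}b_\tau \equiv 2 Z_Q$, and the constant $b_\tau^\top[R\operatorname{Avar}(\hat\beta_\tau^l)R^\top]^{-1}b_\tau$. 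By the continuous mapping theorem applied to the joint convergence of $(W_T, \operatorname{\widehat{Avar}}(\hat\beta_\tau^l))$ and Slutsky's lemma for the estimated covariance, these converge jointly to $Q_{l-qr}^* = \chi_J^2 + 2 Z_Q + b_\tau^\top[R\operatorname{Avar}(\hat\beta_\tau^l)R^\top]^{-1}b_\tau$, where $\chi_J^2$ and $Z_Q$ are the relevant functionals of the same limiting Gaussian vector $R Z_\beta$.

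The main obstacle — and the step requiring the most care — is verifying that the Bahadur-type linearization of $\hat\beta_\tau^l$ in Theorem \ref{multh1m} remains valid under the \emph{local} alternative rather than the null. The quantile objective in the second step of the two-step regression is evaluated at the true $\beta_\tau$ which now drifts with $T$; one must check that the $o_p(1)$ remainder in the Bahadur representation is uniform over the relevant $O(1/D_T)$ neighbourhood of the parameter, so that the score-function expansion $\sum_t \tilde z_{t-1}\psi_\tau(u_{t\tau})$ and the Hessian term $\sum_t \tilde z_{t-1}x_{t-1}^\top$ are unaffected to leading order. This is standard under Assumptions \ref{Assumption A.1} and \ref{Assumption A.2} (the density conditions deliver the requisite stochastic equicontinuity and the MDS/FCLT structure controls the score), but it is the part where contiguity of the local-alternative measure to the null must be explicitly established; once that is in hand, the FCLT \eqref{eq3}, the limits for $S_a, S_b$ in Theorem \ref{multh1m}, and the convergence $\hat f_{u_\tau}(0)\xrightarrow{P}f_{u_\tau}(0)$ combine mechanically to produce the stated limits.
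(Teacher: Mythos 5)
Your proposal is correct and follows essentially the same route as the paper: decompose $R\hat\beta_\tau^l-r_\tau$ into $R(\hat\beta_\tau^l-\beta_\tau)+b_\tau/D_T$, apply Theorem \ref{multh1m} together with the consistency of $\widehat{\operatorname{Avar}}(\hat\beta_\tau^l)$ (the paper's Lemma \ref{app1djgh}), expand the Wald quadratic form into the $\chi_J^2$, cross, and constant terms, and conclude by the continuous mapping theorem and Slutsky. Your added caveat about verifying the Bahadur linearization uniformly over the $O(1/D_T)$ drifting neighbourhood is a legitimate point of rigor that the paper's proof passes over silently, but it does not change the argument.
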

By the definition $\Omega_{zz}= \tau(1-\tau) \Omega_{vv}/(-2c_z)$ and Theorem \ref{multh1m}, it follows that
$\operatorname{Avar}(\hat{\beta}_\tau^l)=-c_z \frac{\tau(1-\tau)}{2f_{u_\tau}(0)^2} (\Sigma_{zx}^*)^{-1}\Sigma_{zz}^* \left[(\Sigma_{zx}^*)^{-1}\right]^\top$ is proportional to $-c_z$ with SD predictors, where
$\Sigma_{zz}^* = \lambda(\operatorname{I_K} - \tilde{S}_a)\Omega_{vv}(\operatorname{I_K} - \tilde{S}_a)^\top +(1-\lambda)(\operatorname{I_K}-\tilde{S}_b)\Omega_{vv} (\operatorname{I_K} - \tilde{S}_b)^\top$
and
$\Sigma_{zx}^* = (\operatorname{I_K} -\tilde{S}_a) \int_0^{\lambda} dJ_x^c(r)\, J_x^c(r)^\top
+(\operatorname{I_K}-\tilde{S}_b) \int_{\lambda}^{1} dJ_x^c(r) \, J_x^c(r)^\top + [ \operatorname{I_K}- \lambda\tilde{S}_a -(1-\lambda)\tilde{S}_b ]\operatorname{E}(v_tv_t^\top)$. 
Thus, the larger the magnitude of $|c_z|$, the smaller the absolute values of the test statistics $\check{Q}_{l-qr}$ and $Q_{l-qr}$ under the local alternative hypothesis $H_{a}$, resulting in reduced power for both $\check{Q}_{l-qr}$ and $Q_{l-qr}$. 
On the other hand, Proposition \ref{mulpropp2} in the appendix shows that while the higher-order term $C_T$ vanishes in both $\check{Q}_{l-qr}$ and $Q_{l-qr}$, the higher order term $B_T^l$, as defined in Proposition \ref{mulpropp2} of the appendix persists for the same reasons as $B_T$. Moreover, the smaller the magnitude of $|c_z|$, the more pronounced the size distortions induced by $B_T^l$ becomes. 
Thus, we mitigate the size distortion induced by the higher-order term $B_T$ by selecting a conservative tuning parameter $c_z=-8-2K$. However, Proposition \ref{mulkeythe2} suggests that this conservative setting of $c_z$ improves the size performance at the expense of reduced power. 
\subsubsection{A Linear Combination Test Statistic to Enhance the Power}\label{subsection3.3}
To amend the loss of power due to the conservative choice of turning parameter $c_z$, we construct the following linear combination test statistic $\check{Q}_{m-qr}$ when $J=1$ for one-sided marginal test.
\begin{align}\label{oraqu2np8}
\check{Q}_{m-qr} = \check{Q}_{l-qr} + \frac{1}{\sqrt{T}} \left(\frac{ \check{Q}_{o-qr}}{\check{q}_{0.999}}\right)^{1/(1-\delta)}\operatorname{sign}(\check{Q}_{o-qr}),
\end{align}
where $\check{q}_{0.999}$ is 99.9\% quantile of the standard normal distribution and $1/(1-\delta)=20$ here. And $\check{Q}_{o-qr}$ is the conventional test statistic defined as $\check{Q}_{o-qr} = (R\hat{\beta}_\tau^o - r_\tau) \big[ \tau(1-\tau) R\big(\sum_{t=1}^T \bar{x}_{t-1} \bar{x}_{t-1}^\top\big)^{-1} R^\top \big]^{-1/2} \hat{f}_{u_\tau}(0)$, where $\bar{x}_{t-1}=x_{t-1}-\frac{1}{T}\sum_{t=1}^T x_{t-1}$ and $\big[\hat{\mu}_\tau^o,(\hat{\beta}_\tau^o)^\top \big]^\top =\arg \,\min_{\mu_\tau, \beta_\tau} \sum_{t=1}^T \rho_{\tau}\left(y_{t}-\mu_\tau-\beta_\tau^\top x_{t-1} \right)$.
{We put $\check{q}_{0.999}$ here to ensure that, under the null hypothesis, most realizations of $|\check{Q}_{o-qr}|$ remain below one. Consequently, this facilitates the second term $\frac{1}{\sqrt{T}} \left(\frac{ \check{Q}_{o-qr}}{\check{q}_{0.999}}\right)^{1/(1-\delta)}\operatorname{sign}(\check{Q}_{o-qr})$ in maintaining the good size performance of $\check{Q}_{m-qr}$ in finite sample without exacerbating it.}
Note that 
\begin{align}\label{dk4r2jgj1}
\check{Q}_{o-qr} &= O_p(1),\quad \text{under the}\; H_0: R\beta_\tau- r_\tau=0;\\
\check{Q}_{o-qr} &\xrightarrow{d} \check{Q}_{o-qr}^{*}+\check{Q}_{o-qr}^{**}, \quad \text{under the}\; H_{a}:R\beta_\tau- r_\tau=b_\tau/D_T, \nonumber
\end{align}
where $ \check{Q}_{o-qr}^{*} \overset{d}{=} \beta_\tau^* \big[ \tau(1-\tau) R \big(\int_0^1 \bar{J}_x^c(r)\bar{J}_x^c(r)^\top \big)^{-1} R^\top \big]^{-1/2} $, 
$\beta_\tau^* \overset{d}{=}
\big[\int_0^1 \bar{J}_x^c(r)\bar{J}_x^c(r)^\top \big]^{-1} \int_0^1 \bar{J}_x^c(r)\\
dB_{\psi_\tau}(r)$, and 
$ \check{Q}_{o-qr}^{**} \overset{d}{=}
b_\tau \big[ \tau(1-\tau) R \big(\int_0^1 \bar{J}_x^c(r)\bar{J}_x^c(r)^\top \big)^{-1} R^\top \big]^{-1/2}$ with SD predictors; 
$ \check{Q}_{o-qr}^{*} \overset{d}{=} \operatorname{N} \big[ 0,1\big]$ 
and 
$ \check{Q}_{o-qr}^{**} \overset{d}{=} b_\tau \big[ \tau(1-\tau) R \operatorname{Var}(x_{t-1})^{-1} R^\top \big]^{-1/2} $ for WD predictors. And $\bar{J}_x^c(r)=J_x^c(r)-\int_0^1 J_x^c(r)dr$ with $0\leq r \leq 1$.
Then we have the following theorem on the local power of $\check{Q}_{m-qr}$.
\begin{thm}\label{jg8jgh1}
Under Assumptions \ref{Assumption A.1} and \ref{Assumption A.2} for SD and WD predictors, it follows that
\begin{align}\label{checkoraqu2np8}
\check{Q}_{m-qr} =
\begin{cases}
\check{Q}_{l-qr} + O_p\left(\frac{1}{\sqrt{T}}\right)\xrightarrow{d}\operatorname{N}(0,1),\, \text{under the}\,H_0;\\
\check{Q}_{l-qr}^* +\operatorname{sign}(\check{Q}_{o-qr}^*+\check{Q}_{o-qr}^{**})\left(\frac{ \check{Q}_{o-qr}^*+\check{Q}_{o-qr}^{**}}{\check{q}_{0.999}} \right)^{1/(1-\delta)},\, \text{under the}\,H_a,
\end{cases}
\end{align}
as $M_1\rightarrow\infty$ and $T\rightarrow \infty$, where the local alternative hypothesis is $H_{a}:R\beta_\tau- r_\tau=b_\tau/D_T$ and $b_\tau$ is a constant vector with dimension $J=1$.
\end{thm}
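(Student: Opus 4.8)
The plan is to read off both branches of \eqref{checkoraqu2np8} from the asymptotics already established for the two pieces of $\check{Q}_{m-qr}$, using Slutsky's theorem and the continuous mapping theorem. Write $\check{Q}_{m-qr}=\check{Q}_{l-qr}+R_T$, where $R_T:=T^{-1/2}\operatorname{sign}(\check{Q}_{o-qr})\big(|\check{Q}_{o-qr}|/\check{q}_{0.999}\big)^{1/(1-\delta)}$, and note that $g:x\mapsto\operatorname{sign}(x)\big(|x|/\check{q}_{0.999}\big)^{1/(1-\delta)}$ is continuous on $\mathbb{R}$ (being $\pm(|x|/\check{q}_{0.999})^{20}$, with $g(x)\to0$ as $x\to0$), so only the stochastic order of $\check{Q}_{o-qr}$ distinguishes the two cases. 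Under $H_0$, Proposition \ref{t7sf8f1} gives $\check{Q}_{l-qr}\xrightarrow{d}\operatorname{N}(0,1)$, a statement that already incorporates the consistency \eqref{dj74e6} of the simulation-based $\hat{f}_{u_\tau}(0)$ and the removal of the higher-order term $C_T$ via the split-sample instrument $\tilde{z}_{t-1}$ used in Theorem \ref{multh1m}. By the first line of \eqref{dk4r2jgj1}, $\check{Q}_{o-qr}=O_p(1)$, and since $1/(1-\delta)$ is a fixed finite exponent the continuous mapping theorem gives $g(\check{Q}_{o-qr})=O_p(1)$, hence $R_T=O_p(T^{-1/2})=o_p(1)$; Slutsky then yields $\check{Q}_{m-qr}=\check{Q}_{l-qr}+o_p(1)\xrightarrow{d}\operatorname{N}(0,1)$, which is the first branch.

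Under the local alternative $H_a:R\beta_\tau-r_\tau=b_\tau/D_T$, Proposition \ref{mulkeythe2} gives $\check{Q}_{l-qr}\xrightarrow{d}\check{Q}_{l-qr}^{*}$, so it remains to identify the limit of $R_T$. For this I would expand $\check{Q}_{o-qr}$ under $H_a$ through the decomposition $R\hat{\beta}_\tau^{o}-r_\tau=R(\hat{\beta}_\tau^{o}-\beta_\tau)+b_\tau/D_T$ together with the conventional quantile-regression Bahadur representation (the analogue of Proposition \ref{thm1} without the instrumental-variable decomposition; cf.\ \cite{Xiao2009}): the estimation-noise component contributes $\check{Q}_{o-qr}^{*}$, while the $b_\tau/D_T$ component contributes the drift recorded in \eqref{dk4r2jgj1}. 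For WD predictors $D_T=\sqrt{T}$ coincides with the convergence rate of $\hat{\beta}_\tau^{o}$, so $\check{Q}_{o-qr}=O_p(1)$, $R_T=o_p(1)$, and $\check{Q}_{m-qr}\xrightarrow{d}\check{Q}_{l-qr}^{*}$. For SD predictors $D_T=T^{(1+\delta)/2}$ is strictly slower than the rate $T$ of $\hat{\beta}_\tau^{o}$, so the drift dominates the noise and $\check{Q}_{o-qr}=T^{(1-\delta)/2}\big(\check{Q}_{o-qr}^{**}+o_p(1)\big)$ with $\check{Q}_{o-qr}^{*}$ asymptotically negligible; since $\tfrac{1}{1-\delta}\cdot\tfrac{1-\delta}{2}=\tfrac12$, the prefactor $T^{-1/2}$ in $R_T$ exactly cancels $\big(T^{(1-\delta)/2}\big)^{1/(1-\delta)}$, leaving $R_T\xrightarrow{d}\operatorname{sign}(\check{Q}_{o-qr}^{**})\big(\check{Q}_{o-qr}^{**}/\check{q}_{0.999}\big)^{1/(1-\delta)}$. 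In both regimes this reproduces the second branch of \eqref{checkoraqu2np8} (with $\check{Q}_{o-qr}^{*}$ dominated by $\check{Q}_{o-qr}^{**}$ in the SD case, and with the power-enhancement term vanishing in the WD case).

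To turn the marginal limits of $\check{Q}_{l-qr}$ and of $R_T$ into a limit for their sum, I would establish joint weak convergence of $(\check{Q}_{l-qr},\check{Q}_{o-qr})$: both are continuous functionals of the single partial-sum process in \eqref{eq3}, its Ornstein-Uhlenbeck image $J_x^c$, the split-sample matrices $\tilde{S}_a,\tilde{S}_b$ of Theorem \ref{multh1m}, and---through $\hat{f}_{u_\tau}(0)$---the independently generated auxiliary array $\overrightarrow{\xi_{t-1}}$, so one application of the FCLT plus the continuous mapping theorem produces the joint law, after which CMT applied to $(u,w)\mapsto u+g(w)$ delivers the claimed limit of $\check{Q}_{m-qr}$. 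I expect the main obstacle to be precisely the rate bookkeeping in the SD alternative: verifying that the random part of $\check{Q}_{o-qr}$ is genuinely $o_p\big(T^{(1-\delta)/2}\big)$---uniformly enough that raising to the large power $1/(1-\delta)=20$ still pushes the resulting cross terms into $o_p(1)$, since a large exponent amplifies any mis-sized remainder---and checking that the $\operatorname{sign}(\cdot)$ in $R_T$ creates no discontinuity issue because its limiting argument $\check{Q}_{o-qr}^{**}$, a nonzero multiple of $b_\tau$, avoids $0$ almost surely.
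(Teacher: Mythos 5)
Your proposal is correct and follows the same route as the paper: the paper's entire proof of Theorem \ref{jg8jgh1} is a one-sentence appeal to Proposition \ref{mulkeythe2}, the definition \eqref{oraqu2np8}, the expansion \eqref{dk4r2jgj1} of $\check{Q}_{o-qr}$, and the continuous mapping theorem, and your $H_0$ branch and your decomposition $\check{Q}_{m-qr}=\check{Q}_{l-qr}+R_T$ are exactly that argument spelled out. The one place where you go beyond the paper is the $H_a$ branch, and there your rate bookkeeping is in fact the only reading under which the second line of \eqref{checkoraqu2np8} is nondegenerate: taken literally, \eqref{dk4r2jgj1} asserts $\check{Q}_{o-qr}\xrightarrow{d}\check{Q}_{o-qr}^{*}+\check{Q}_{o-qr}^{**}$, an $O_p(1)$ limit, which would force $R_T=O_p(T^{-1/2})\to 0$ and leave only $\check{Q}_{l-qr}^{*}$; your observation that for SD predictors the drift component of $\check{Q}_{o-qr}$ scales as $T^{(1-\delta)/2}$ and that $\bigl(T^{(1-\delta)/2}\bigr)^{1/(1-\delta)}=T^{1/2}$ exactly cancels the $T^{-1/2}$ prefactor is the calibration that makes the claimed limit appear. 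This reading also implies, as you note, that $\check{Q}_{o-qr}^{*}$ is asymptotically dominated inside the power term and that the enhancement term vanishes entirely for WD predictors --- two caveats that the theorem statement lumps over but that your version makes explicit. Finally, your remark that one needs joint weak convergence of $(\check{Q}_{l-qr},\check{Q}_{o-qr})$ before adding the two limits is a genuine requirement that the paper's one-line proof does not address; your justification via a common underlying partial-sum process plus the independent auxiliary array is the right way to supply it.
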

\begin{remark}\label{remar4}
Theorem \ref{jg8jgh1} shows that $\check{Q}_{m-qr}$ not only inherits the good size performance of $\check{Q}_{l-qr}$ but also the good power performance of $\check{Q}_{o-qr}$. 
The two key factors contributing to the good size and the good power performance of $\check{Q}_{m-qr}$ are outlined as follows. First, under the null hypothesis $H_0:R\beta_\tau =r_\tau$, $\check{Q}_{l-qr}$ plays the key part in $\check{Q}_{m-qr}$ while $\check{Q}_{o-qr}$ converges to zero with a rate $\sqrt{T}$. As result, the higher-order term of $\check{Q}_{m-qr}$ is approximately equal to that of of $\check{Q}_{l-qr}$ under the null hypothesis $H_0:R\beta_\tau =r_\tau$ and thus inherits the good size performance of $\check{Q}_{l-qr}$. Second, both $\check{Q}_{l-qr}$ and $ \left(\frac{ \check{Q}_{o-qr}}{\check{q}_{0.999}}\right)^{1/(1-\delta)}\operatorname{sign}(\check{Q}_{o-qr})$ play a role under the local alternative hypothesis. Thus $\check{Q}_{m-qr}$ inherits the good power performance of $\check{Q}_{o-qr}$. In summary, the weighting technique employed in $\check{Q}_{m-qr}$ enables it to achieve both good size and power performance simultaneously.
\end{remark} 
Next, we construct a linear combination test statistic $Q_{m-qr}$ for two-sided tests using the similar procedure to that used for constructing $\check{Q}_{m-qr}$.
\begin{align}\label{nocoraqu2np8}
{Q}_{m-qr} ={Q}_{l-qr} + \frac{1}{\sqrt{T}}\left( \frac{{Q}_{o-qr}}{{q}_{0.999}}\right)^{1/(1-\delta)},
\end{align}
where ${q}_{0.999}$ is 99.9\% quantile of the $\chi_J^2$ distribution 
and
${Q}_{o-qr}$ is the conventional test statistic defined as ${Q}_{o-qr} \equiv (R\hat{\beta}_\tau^o - r_\tau) \big[ \tau(1-\tau) R\big(\sum_{t=1}^T \bar{x}_{t-1} \bar{x}_{t-1}^\top\big)^{-1} R^\top \big]^{-1} (R\hat{\beta}_\tau^o - r_\tau)^\top \hat{f}_{u_\tau}(0)^{-2}$.
Note that 
\begin{align}\label{2dk4r2jgj1}
{Q}_{o-qr} &= O_p(1),\quad \text{under the}\; H_0:R\beta_\tau- r_\tau=0;\\
{Q}_{o-qr} &\xrightarrow{d}{Q}_{o-qr}^{*}+{Q}_{o-qr}^{**}+{Q}_{o-qr}^{***}+\left({Q}_{o-qr}^{***}\right)^\top, \quad \text{under the}\; H_{a}:R\beta_\tau- r_\tau=b_\tau/D_T; \nonumber
\end{align}
where 
${Q}_{o-qr}^{*} \overset{d}{=} \beta_\tau^* \big[ \tau(1-\tau) R \big(\int_0^1 \bar{J}_x^c(r)\bar{J}_x^c(r)^\top \big)^{-1} R^\top \big]^{-1}\big( \beta_\tau^*\big)^\top$, 
${Q}_{o-qr}^{**} \overset{d}{=} b_\tau \big[ \tau(1-\tau) R \big(\int_0^1 \bar{J}_x^c(r)$ \quad $ \bar{J}_x^c(r)^\top \big)^{-1} R^\top \big]^{-1} b_\tau ^\top$ 
and 
${Q}_{o-qr}^{***} \overset{d}{=}\beta_\tau^* \big[ \tau(1-\tau) R \big(\int_0^1 \bar{J}_x^c(r)\bar{J}_x^c(r)^\top \big)^{-1} R^\top \big]^{-1} b_\tau^\top$ with SD predictors;
${Q}_{o-qr}^{*} \overset{d}{=} \chi_J^2$, 
${Q}_{o-qr}^{**} \overset{d}{=} 
b_\tau \big[ \tau(1-\tau) R \operatorname{Var}(x_{t-1})^{-1} R^\top \big]^{-1}b_\tau ^\top$ 
and 
${Q}_{o-qr}^{***} \overset{d}{=} 
\beta_\tau^* \big[ \tau(1-\tau) R \operatorname{Var}(x_{t-1})^{-1} R^\top \big]^{-1} b_\tau^\top$ for WD predictors.
We can then establish the following theorem, which is analogous to Theorem \ref{jg8jgh1}.
\begin{thm}\label{jg8jghh1}
Under Assumptions \ref{Assumption A.1} and \ref{Assumption A.2} for SD and WD predictors, it follows that
\begin{align}\label{noco1np8}
{Q}_{m-qr} =
\begin{cases}
{Q}_{l-qr} + O_p\left(\frac{1}{\sqrt{T}}\right)\xrightarrow{d}\chi_J^2,\, \text{under the}\,H_0;\\
{Q}_{l-qr}^* +\left[\frac{{Q}_{o-qr}^*+{Q}_{o-qr}^{**}+{Q}_{o-qr}^{***}+ ({Q}_{o-qr}^{**})^\top}{{q}_{0.999}} \right]^{1/(1-\delta)},\, \text{under the}\,H_a,
\end{cases}
\end{align} 
as $M_1\rightarrow\infty$ and $T\rightarrow \infty$.
\end{thm}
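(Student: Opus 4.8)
The plan is to treat the null and the local-alternative regimes separately, reducing ${Q}_{m-qr}$ in each case to ${Q}_{l-qr}$ plus a perturbation of controllable order and then applying Slutsky's theorem or the continuous mapping theorem; the whole argument runs parallel to the proof of Theorem \ref{jg8jgh1}. The one structural difference from the one-sided statistic is that ${Q}_{l-qr}$, ${Q}_{o-qr}$ and ${q}_{0.999}$ are all nonnegative (Wald quadratic forms and a $\chi_J^2$ quantile), so the map $w\mapsto(w/{q}_{0.999})^{1/(1-\delta)}$ composed with ${Q}_{o-qr}$ in (\ref{nocoraqu2np8}) is continuous on $[0,\infty)$ and no $\operatorname{sign}(\cdot)$ correction is needed. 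For the null case, I would first invoke (\ref{2dk4r2jgj1}) to get ${Q}_{o-qr}=O_p(1)$ under $H_0$; by the continuous mapping theorem $({Q}_{o-qr}/{q}_{0.999})^{1/(1-\delta)}=O_p(1)$, so the second summand of (\ref{nocoraqu2np8}) is $T^{-1/2}O_p(1)=O_p(T^{-1/2})$, whence ${Q}_{m-qr}={Q}_{l-qr}+O_p(T^{-1/2})$. Proposition \ref{t7sf8f1} gives ${Q}_{l-qr}\xrightarrow{d}\chi_J^2$ as $M_1,T\to\infty$ (the simulation-based $\hat{f}_{u_\tau}(0)$ entering consistently via (\ref{dj74e6})), so Slutsky's theorem delivers the first line of (\ref{noco1np8}).

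For the local alternative $H_a:R\beta_\tau-r_\tau=b_\tau/D_T$ I would run exactly the bookkeeping of Theorem \ref{jg8jgh1}, with the scalar one-sided pieces replaced by their $J$-dimensional quadratic-form analogues. Proposition \ref{mulkeythe2} already supplies ${Q}_{l-qr}\xrightarrow{d}{Q}_{l-qr}^*\overset{d}{=}\chi_J^2+2Z_Q+b_\tau^\top[R\operatorname{Avar}(\hat{\beta}_\tau^l)R^\top]^{-1}b_\tau$, so that term is settled. For ${Q}_{o-qr}$ I would substitute $R\hat{\beta}_\tau^o-r_\tau=R(\hat{\beta}_\tau^o-\beta_\tau)+b_\tau/D_T$ into its definition, expand the quadratic form, and feed in the asymptotics of $\hat{\beta}_\tau^o$, of $\sum_{t=1}^T\bar{x}_{t-1}\bar{x}_{t-1}^\top$ (the FCLT (\ref{eq3}) together with $x_{\lfloor rT\rfloor}/\sqrt T\Rightarrow J_x^c$ for SD predictors, and a law of large numbers for WD predictors), and of $\hat{f}_{u_\tau}(0)\xrightarrow{p}f_{u_\tau}(0)$, to recover the three-piece limit in (\ref{2dk4r2jgj1}): the null part ${Q}_{o-qr}^*$, the drift ${Q}_{o-qr}^{**}$, and the cross terms ${Q}_{o-qr}^{***}+({Q}_{o-qr}^{***})^\top$. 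The key point, exactly as in Theorem \ref{jg8jgh1}, is that the $b_\tau$-drift enters ${Q}_{o-qr}$ at precisely the order for which $T^{-1/2}({Q}_{o-qr}/{q}_{0.999})^{1/(1-\delta)}$ converges once the $1/(1-\delta)$-power absorbs the $T^{-1/2}$ prefactor; establishing the \emph{joint} weak convergence of $({Q}_{l-qr},{Q}_{o-qr})$ and applying the continuous mapping theorem to (\ref{nocoraqu2np8}) then yields the second line of (\ref{noco1np8}). The SD and WD versions differ only in which of the two displayed limit expressions in (\ref{2dk4r2jgj1}) and in Proposition \ref{mulkeythe2} is used.

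The null case and the continuity/Slutsky steps are routine given what is already proved. The main obstacle is the local-alternative case, and it has two parts. First, one must establish the joint limiting law of $({Q}_{l-qr},{Q}_{o-qr})$ on a common probability space rather than the two marginal limits separately; this is where the joint FCLT (\ref{eq3}) for $(\psi_\tau(u_{t\tau}),v_t)$ is essential, as it must drive the continuous-mapping arguments behind Proposition \ref{mulkeythe2} and (\ref{2dk4r2jgj1}) simultaneously. Second, one must carry the rate bookkeeping carefully enough to confirm that the $T^{-1/2}$ factor in (\ref{nocoraqu2np8}) exactly balances the $1/(1-\delta)$-power of the ${Q}_{o-qr}$ drift under $H_a$; this computation coincides with the one already performed for Theorem \ref{jg8jgh1}, so it can be imported essentially verbatim, deleting the $\operatorname{sign}$ terms and substituting ${Q}_{o-qr}^{**}$ and the cross terms for $\check{Q}_{o-qr}^{**}$. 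No estimates beyond those used for Theorems \ref{multh1m} and \ref{jg8jgh1} and Propositions \ref{t7sf8f1} and \ref{mulkeythe2} are required.
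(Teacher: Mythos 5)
Your proposal is correct and follows essentially the same route as the paper's (much terser) proof, which simply cites Proposition \ref{mulkeythe2}, equations (\ref{nocoraqu2np8}) and (\ref{2dk4r2jgj1}), and the continuous mapping theorem. Your additional attention to the joint convergence of $({Q}_{l-qr},{Q}_{o-qr})$ and the explicit $O_p(T^{-1/2})$ bookkeeping under $H_0$ fills in details the paper leaves implicit, but it is the same argument.
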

As indicated in Remark \ref{remar4}, Theorem \ref{jg8jghh1} shows that ${Q}_{m-qr}$ not only retains the favorable size performance of ${Q}_{l-qr}$ but also adopts the good power performance of ${Q}_{o-qr}$. 
\section{Robust Inference for Unspanned Predictability}\label{section4}
In this section we test the macro-spanning hypothesis using the inference approach outlined in Section \ref{sectionEconometric}. We apply the proposed test statistic $Q_{m-qr}$ to test the predictive power of macro variables for n-year bond risk premia rn(x), with n= 2,3,4,5, controlling for CP factor at quantile level $\tau_j\in (0.01,0.02,\cdots,0.99)$, where $j =1,2,\cdots,J$ and $J=99$. \footnote{The same choice of $\tau_j$ and $J$ is seen in many other empirical studies such as \cite{RiskyOil2024}.}
We report the testing results in Section \ref{inferres} and give the potential economic explanation for our test results in Section \ref{poexpr}. We compare our results with those of CCL2023, the only feasible method in existing literature that is designed to do marginal test in the quantile predictive model with many highly persistent predictors.
\subsection{Inference Results}\label{inferres}
The results for the two-sided marginal test are reported in Figure \ref{logPvalueTwoside}. \footnote{See appendix of subsection \ref{onesidetest1} for results of the one-sided test statistic $\check{Q}_{m-qr}$.} The horizontal axis is $\tau_j$ and the vertical axis is $\ln{[\max(0.001,P^*)]}$, where $P^*$ is p-values of $Q_{m-qr}$ and CCL2023. We take the logarithm of the p-value instead of the original p-value to show the statistical significance more clearly in Figure \ref{logPvalueTwoside}. The dashed horizontal line at $-4.605$ corresponds to the significant level, which is set at $0.01$. In this way, the conclusion of predictability is more reliable with small type I error, and the corresponding out-of-sample performance is shown to be good. Additionally, we take the maximum value of p-value and 0.001 to avoid $\ln{(P^*)}$ being too small to disrupt the scale of that figure. The macro variable exhibits significant predictive power when it is within the gray region of Figure \ref{logPvalueTwoside}. 
\begin{figure}[htbp]
\centering
\includegraphics[width=1.1\linewidth]{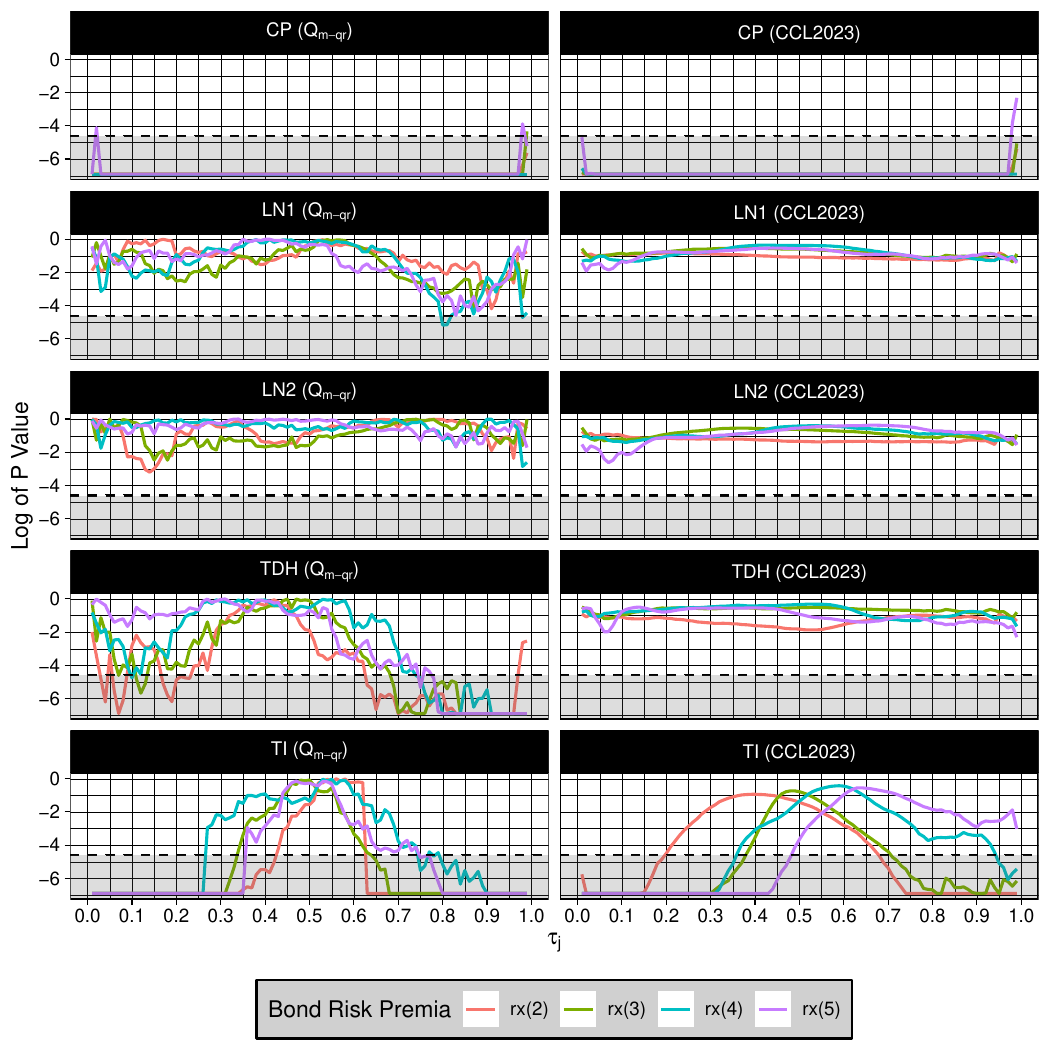}
\caption{Log of p-value of $Q_{m-qr}$ and CCL2023 for Two-sided Test}
\label{logPvalueTwoside}
\end{figure}
The following conclusions could be drawn from the empirical results. First, at center quantiles, the CP factor has positive predictive power while all macro variables have negligible predictive power. This observation supports the macro-spanning hypothesis and aligns with the empirical results reported in studies by \cite{Ghysels2018} and \cite{Bauer2018} using mean regressions. Secondly, the CP factor consistently exhibits positive predictive power across all quantiles, while LN1 and LN2 show no predictive power at any quantile, strongly supporting the spanning hypothesis. Third, TDH demonstrates predictive power at the right tail quantiles for bonds of all maturities and at the left tail quantiles for 2- and 3-year bonds. Fourth, TI exhibits predictive power for bonds of all maturities across all quantiles, except for the median quantiles. The size distortion due to the highly persistence of macro variable is significantly mitigated by the proposed two-sided marginal ($Q_{m-qr}$) test. This study effectively eliminates the spurious predictive power of LN1 and LN2 across all quantiles, and of TDH and TI at the center quantiles. 
{Although CCL2023 makes the same conclusion for most cases}, our findings diverge from those of CCL2023 regarding the predictive power of TDH and TI at tail quantiles. First, we find that TDH has significant predictive power at right tail quantiles for bonds of all maturities, a finding not reported by CCL2023. This capability of TDH is particularly valuable for predicting tail risks associated with bond risk premia. Second, we find that TI has predictive power at both tails quantiles for bonds of all maturities, while CCL2023 fails to find the predictive power of TI at right tail quantiles for 4- and 5-year bonds. Third, we find the significant predictive power of TDH at left tail quantiles for 2- and 3-year bonds, which CCL2023 does not find. Considering the simulations presented in the appendix, which demonstrate that our size and power performance surpass those of CCL2023 across all quantiles, the most plausible explanation for the discrepancies between our findings and those of CCL2023 is that CCL2023 likely commits type II errors (failing to detect true predictive power) at tail quantiles, whereas our analysis does not. 
The predictability at tail quantiles (high-risk region) is practically beneficial, e.g., for constructing the tail risk indicator in the appendix Section \ref{tairisin1}.
%%%%%%%%%%%%%%%%%%%%%%%%%%%%%%%%%%%%%%%%%%%%%%%%%%%%%%%%%%%%%%%% 
\subsection{Economic Explanations of Testing Results}\label{poexpr}
The macro-spanning hypothesis states that the yield curve contains all available information about future bond risk premia, which provides economic interpretation about the empirical findings at the center quantiles. The CP factor contains the information of the yield curve so it has significant predictive power while LN1 and LN2 have negligible predictive power at all quantiles. Moreover, the spanning hypothesis also explains the test results that all other variables, including TDH and TI, have no predictive power at center quantiles. 
The most interesting empirical finding of this paper is the significant predictive power of TDH at right tail quantiles and the significant predictive power of TI at both tails quantiles for bonds of all maturities. The predictive power of TDH at left tail quantiles for 2- and 3-year bonds are also found. Subsequently, we offer potential economic explanations for these findings. 
Since investors prefer 1-year bonds over n-year bonds (n=2,3,4,5) at right tail quantiles, \footnote{See more discussions of this point in Section \ref{sectionData}.} this preference results in a more pronounced increase in demand for 1-year bonds compared to n-year bonds when the TDH rises. Consequently, as TDH increases, the price of 1-year bonds experiences a sharper increase relative to the prices of n-year bonds (n=2,3,4,5). This sharper price increase leads to a more substantial decrease in the YTM of 1-year bonds compared to that of n-year bonds. Given that the risk premium of n-year bonds (n=2, 3, 4, 5) is defined as the difference between the YTM of n-year and 1-year bonds, it follows that the risk premia for n-year bonds will increase if TDH rises at the right tail quantiles. 
In other words, TDH has positive predictive power for bonds of all maturities at right tail quantiles. 
On the other hand, when investors exhibit a preference for bonds with maturities of two and three years over 1-year bonds at the left tail quantiles, the demand for these 2- and 3-year bonds significantly increases. This heightened demand occurs particularly if the TDH rises. Consequently, the prices of 2- and 3-year bonds rise more sharply compared to 1-year bonds under the same conditions. This increase in price leads to a more pronounced decrease in the YTM of 2- and 3-year bonds than that observed in 1-year bonds. As a result, the bond risk premia, which represent the difference between the returns of 2- and 3-year bonds and 1-year bonds, decrease if the TDH increases at the left tail quantiles. This effect highlights the sensitivity of bond risk premia to shifts in investor preference and market conditions at the left tail quantiles. In other words, TDH has negative predictive power for both 2- and 3-year bonds at left tail quantiles. 
Besides that, the demand for n-year (n=2,3,4,5) and 1-year bonds may rise comparably if TDH increases within the center quantiles. Additionally, at the left tail quantiles of the risk premia for bonds with longer maturities (n=4,5 years), the risk associated with 1-year bonds is high, while bonds with maturities of four and five years continue to face elevated risks due to their longer durations. Consequently, if the TDH increases at the left tail quantiles, the demand for both 1-year bonds and longer-term bonds (n=4,5 years) may rise comparably. As a result, the prices and YTM for both 1-year and longer-term bonds (n=4,5 years) may move in the same direction and by approximately the same magnitude, leading to relatively stable bond risk premia. That is, TDH has no predictive power at center quantiles for bonds of all maturities and left tail quantiles for 4- and 5-year bonds.
Next, we discuss the predictive power of TI found at all quantiles except center quantiles. If TI increases, the real return of investors decrease, leading to a reduced demand for all 1- to 5-year bonds. However, the demand for bonds with maturities of two to five years decreases more sharply than that for 1-year bonds when TI increases since investors prefer 1-year bonds at right tail quantiles. \footnote{See more discussions of this point in Section \ref{sectionData}.} As a result, the prices of n-year (n=2,3,4,5) bonds decrease more sharply than that of a 1-year bond when TI increases. Thus, the YTM of these longer-term bonds rise more significantly than those of 1-year bonds. Given that the risk premia for bonds with maturities of two to five years are defined as the difference between their YTM and that of 1-year bonds, the risk premia for these longer-term bonds (n=2, 3, 4, 5) increase as TI rises. That is, TI has positive predictive power for all bonds at right tail quantiles. Following the same logic, TI has negative predictive power for all bonds at left tail quantiles. 
\section{In-sample and Out-of-sample Performance Evaluation}\label{appinf1}
In this section, we use the test results of $Q_{m-qr}$ in Section \ref{section4} to obtain the in-sample and out-of-sample predictions and compare the performance with those using the test of CCL2023. We also use the out-of-sample predictive power to introduce a new tail risk indicator of the U.S. bond market in the appendix Section \ref{tairisin1}. 
The quantile-weighted continuous ranked probability score (qw-CRPS) criterion introduced by 
\cite{Gneiting2011} is useful to evaluate the in-sample and out-of sample prediction performance of our method at left tail quantiles, center quantiles, right tail quantiles and both tails quantiles. Specifically, the qw-CRPS is a scoring rule that takes into account information from the entire predictive density, but it also permits researchers to focus more on certain areas of interest. For example, evaluating how effectively a model predicts downside risks could be of independent interest, for this task qw-CRPS puts more emphasis on the area in the right tail quantiles.
{The in-sample prediction of condition quantile of bond risk premia $\hat{y}_{t\tau_j}$ is obtained as follows.} 
First, using the two-sided test results of $Q_{m-qr}$ in Section \ref{section4}, we select predictors $\dot{x}_{t-1}^j=(x_{i_{j1},t-1},x_{i_{j2},t-1},\cdots,x_{i_{jI},t-1},)^\top$ in CP factor, LN factors, TI and TDH if the p-values of $x_{i_{j1},t-1},x_{i_{j2},t-1},\cdots,x_{i_{jI},t-1}$ are less than 1\%, where $1\leq i_{j1} < i_{j2} <\cdots<i_{jI} \leq K$. That is, $\dot{x}_{t-1}^j$ has predictive power at quantile $\tau_j$. Second, we run the following quantile regressions
$\big(\hat{\dot\mu}_{\tau_j}, \hat{\dot\beta}_{\tau_j}^\top \big)^\top =\arg \,\min_{\mu_{\tau_j}, \beta_{\tau_j}} \sum_{t=1}^T \rho_{\tau_j} \, (y_{t}-\mu_{\tau_j}-\dot\beta_{\tau_j}^\top \dot{x}_{t-1}^j )$
at quantile $\tau_j$, where $j=1,2,\cdots,J${, which are defined in the beginning of Section \ref{section4}}.
Finally, we obtain in-sample prediction value $\hat{y}_{t\tau_j}=\hat{\dot\mu}_{\tau_j}+\hat{\dot\beta}_{\tau_j}^\top \dot{x}_{t-1}^j$ and the in-sample error estimator $\hat{u}_{t\tau_j}=y_t-\hat{y}_{t\tau_j}$.
Then the qw-CRPS, which is calculated by applying weights to $\hat{y}_{t\tau}$ for $J-1$ different quantiles, is computed as follows 
\begin{align}\label{cer1ev}
\operatorname{qw}-{\operatorname{CRPS}_t} =\frac{2}{J-1} \sum_{j=1}^{J-1} W\left(\tau_j\right) \rho_{\tau_j}(\hat{u}_{t\tau_j}),\; \operatorname{qw}_c = \frac{1}{T} \sum_{t=1}^T \operatorname{qw}-{\operatorname{CRPS}_t}
\end{align}
where $W\left(\tau_j\right)$ is a deterministic weighting function. The considered weighting functions are: $W\left(\tau_j\right)=\tau_j\left(1-\tau_j\right)$, which gives more weight to the center of the predictive distribution; $W\left(\tau_j\right)=\left(1-\tau_j\right)^2$ (and $W\left(\tau_j\right)=\tau_j^2$), which assigns greater weight to the left (right) tail to emphasize downside (upside) risks; and $W\left(\tau_j\right)=\left(2 \tau_j-1\right)^2$ which places emphasis on the prediction performance at both tails. 
Using the qw-CRPS criteria defined in (\ref{cer1ev}), we obtain the following in-sample performance of our proposed approach and CCL2023 in Table \ref{inSamplePerf}. \footnote{{In-sample performance of CCL2023 is obtained by the same way of $Q_{m-qr}$.}} It shows that the in-sample performance of the proposed test $Q_{m-qr}$ is much better than CCL2023 at center quantiles, left tail quantiles, right tail quantiles and both tails quantiles, since all of its $\operatorname{qw}_c$ values are much less than those of CCL2023. The excellent in-sample performance of $Q_{m-qr}$ arises from its success of finding the significant predictive power of TDH and TI at right tail quantiles. This result is also shown in Table \ref{inSamplePerf}, that the difference between $\operatorname{qw}_c$ of $Q_{m-qr}$ and CCL2023 when the weight $W(\tau_j)$ emphasizing in right tail quantiles is much larger than those in left tail quantiles and center quantiles. %
\setlength{\tabcolsep}{8pt}
\renewcommand{\arraystretch}{0.6}
\captionsetup{
width=0.8\textwidth,
}
\begin{longtable}{l|cccc|cccc}
\caption{In-sample Performance $\operatorname{qw}_c$ (in $10^{-3}$) in 1980-2022} \label{inSamplePerf}\\
\hline
Method & \multicolumn{4}{c|}{CCL2023} & \multicolumn{4}{c}{$Q_{m-qr}$} \\
\hline
Emphasis & rx(2) & rx(3) & rx(4) & rx(5) & rx(2) & rx(3) & rx(4) & rx(5) \\
\hline
\endfirsthead
\hline
\multicolumn{9}{c}{{ \tablename\ \thetable{} -- continued from previous page}} \\
\hline
\endhead
\hline
\multicolumn{9}{r}{{Continued on next page}} \\
\hline
\endfoot
\hline
\endlastfoot
%%%%%%%%%%%%%%%%%%%%%%%%%%%%%%%%%%%%%%%%%%%
Center & 1.0 & 1.8 & 2.6 & 3.1 & 0.6 & 1.1 & 1.5 & 1.9 \\
Left Tail & 1.1 & 2.0 & 2.8 & 3.5 & 0.9 & 1.6 & 2.2 & 2.8 \\
Right Tail & 2.7 & 5.0 & 7.3 & 8.6 & 1.0 & 1.8 & 2.5 & 3.2 \\
Both Tails & 1.9 & 3.5 & 5.0 & 5.9 & 0.7 & 1.2 & 1.7 & 2.1 \\
\end{longtable} 
Next, we present the out-of-sample prediction results. To evaluate the out-of-sample prediction performance (in the out-of-sample period from $T_m$ to $T$, and we set $T_m$ to be 1992/01 and $T$ to be 2022/12), we undertake the following steps. To predict the conditional quantile of bond risk premia at period $\tilde{T}$ and quantile level $\tau_j$, where $j=1,2,\cdots,J$ and $T_m \le \tilde{T}\le T$, we first use the sample during period $1-(\tilde{T}-1)$ to run the following quantile regression to obtain the real time coefficient estimator $ 
\big[\hat{\dot\mu}_{\tau_j}(\tilde{T}), \hat{\dot\beta}_{\tau_j}(\tilde{T})^\top \big]^\top =\arg \,\min_{\mu_{\tau_j}, \beta_{\tau_j}} \sum_{t=1}^{\tilde{T}-1} \rho_{\tau_j} (y_{t}-\mu_{\tau_j}-\dot\beta_{\tau_j}^\top \dot{x}_{t-1}^j )$, where $\dot{x}_{t-1}^j$ is the one used in (\ref{cer1ev}). 
Then the out-of-sample predicted value of bond risk premia at quantiles $\tau_j$ is $\hat{y}_{\tau_j}^o(\tilde{T})=\hat{\dot\mu}_{\tau_j}(\tilde{T})+ \hat{\dot\beta}_{\tau_j}(\tilde{T})^\top \dot{x}_{\tilde{T}-1}^j$ and the out-of-sample quantile regression error is $\hat{u}_{\tau_j}^o (\tilde{T})=y_{\tilde{T}}-\hat{y}_{\tau_j}^o (\tilde{T})$. The out-of-sample performance in the period from $T_0$ to T is evaluated by qw-CRPS. 
Specifically, the out-of-sample evaluation criteria are 
$\operatorname{qw}-{\operatorname{CRPS}_{\tilde{T}}}=\frac{2}{J-1} \sum_{j=1}^{J-1} W\left(\tau_j\right) \rho_{\tau_j}[\hat{u}_{\tau_j}^o (\tilde{T})]$, and $ \operatorname{qw}_c(T_m) = \frac{1}{T-T_m+1} \sum_{\tilde{T}=T_m}^T \operatorname{qw}-{\operatorname{CRPS}_{\tilde{T}}}$.
By this criteria, we obtain the out-of-sample performance in the period 1992--2022, which is shown in Table \ref{outSamplePerf}. 
\setlength{\tabcolsep}{8pt}
\renewcommand{\arraystretch}{0.6}
\captionsetup{
width=0.8\textwidth,
}
\begin{longtable}{l|cccc|cccc}
\caption{Out-of-sample Performance $\operatorname{qw}_c(T_m)$ (in $10^{-3}$) in 1992--2022} \label{outSamplePerf}\\
\hline
Method & \multicolumn{4}{c|}{CCL2023} & \multicolumn{4}{c}{$Q_{m-qr}$} \\
\hline
\endfirsthead
\hline
\multicolumn{9}{c}{{ \tablename\ \thetable{} -- continued from previous page}} \\
\hline
\endhead
\hline
\multicolumn{9}{r}{{Continued on next page}} \\
\hline
\endfoot
\hline
\endlastfoot
%%%%%%%%%%%%%%%%%%%%%%%%%%%%%%%%%%%%%%%%%%%
Emphasis & rx(2) & rx(3) & rx(4) & rx(5) & rx(2) & rx(3) & rx(4) & rx(5) \\
\hline
Center & 0.5 & 0.8 & 1.2 & 1.6 & 0.5 & 0.7 & 1.0 & 1.5 \\
Left Tail & 0.9 & 1.2 & 2.0 & 2.3 & 0.9 & 1.1 & 1.6 & 2.1 \\
Right Tail & 1.5 & 2.5 & 3.5 & 4.8 & 1.0 & 1.7 & 2.3 & 3.3 \\
Both Tails & 1.4 & 2.1 & 3.2 & 3.9 & 0.9 & 1.4 & 1.8 & 2.5 \\
\end{longtable} 
The out-of-sample performance of $Q_{m-qr}$ is better than those of CCL2023, as $\operatorname{qw}_c(T_0)$ of $Q_{m-qr}$ is much less than those of CCL2023 in all quantiles except for rx(2) at center quantiles and left tail quantiles where they are equivalent. The excellent out-of-sample performance of $Q_{m-qr}$ also arises from its success of finding the significant predictive power of TDH and TI at right tail quantiles. The difference between $\operatorname{qw}_c(T_0)$ of $Q_{m-qr}$ and CCL2023 is much larger at right tail quantiles than those at left tail quantiles, center quantiles and both tails quantiles.
\section{Conclusion}\label{section7} 
We examine the predictability of bond risk premia using predictive quantile regression with many highly persistent predictors, employing a reliable inference procedure tailored for this analysis. Our results lend support to the macro-spanning hypothesis at center quantiles, while evidence at tail quantiles contradicts this hypothesis. This dichotomy helps clarify inconsistencies observed in previous studies that utilized mean regressions. Specifically, all predictors, with the exception of the CP factor, exhibit negligible predictive power at the center quantile. Additionally, TDH demonstrates predictive power at the right tail quantiles, and TI shows predictive power at both left and right tail quantiles. Furthermore, we demonstrate that both in-sample and out-of-sample prediction performance using our proposed inference method significantly outperforms existing methods.
\bibliographystyle{apalike}
\bibliography{Bibliography}
\newpage
\setcounter{page}{1}
\counterwithin{figure}{section}
\counterwithin{table}{section}
\counterwithin{thm}{section}
\counterwithin{prop}{section}
\counterwithin{equation}{section}
{\Large \bf 
\begin{center}
Supplementary Material to ``Robust Bond Risk Premia Predictability Test in the Quantiles''
\end{center}
}
In this supplementary material, we provide the following parts:
Appendix \ref{app:A} gives the details of computation algorithm to construct our test statistics used in the main text. Appendix \ref{section5} collects all the results for Monte Carlo simulations. The numerical results show that it is necessary to use the new method introduced in the main text Section \ref{sectionEconometric} for the macro-spanning hypothesis test. Appendix \ref{app:C} includes additional asymptotic properties for the higher-order terms that appear in the test statistics construction, whose results we applied in the main text Section \ref{sectionEconometric}. Appendix \ref{app:D} provides additional empirical results. A new tail risk indicator is also introduced there, which further showcases the usefulness of our test. Appendix \ref{app:E} presents selected proofs for main theorems. 
\begin{description}
\item[Explanation about subscript:] 
For any matrix $A$, $A^{1/2}$ is a matrix defined as $A^{1/2} = Q_L \operatorname{diag}( \lambda_1^{1/2}, \lambda_2^{1/2},\cdots,\lambda_K^{1/2})Q_L^\top$, in which $Q_L$ is the matrix whose $i$th column is the eigenvectors of $A$, and $\lambda_i$, $i=1,2,\cdots,K$ are the eigenvalues of $A$. It implies that $A^{1/2}A^{1/2}=A$. Also, we define $A^{-1/2}=(A^{1/2})^{-1}$.
\end{description}
\appendix
\renewcommand{\thesection}{\Alph{section}}
\renewcommand{\thesubsection}{\Alph{section}.\arabic{subsection}}
\section{Algorithm for Construction of Proposed Test Statistics}\label{app:A}
The new test procedure employed in the testing of macro-spanning hypothesis is summarized in Algorithm \ref{algorithm1} below. Equation numbers refer to those in the main text.
\begin{algorithm}[H]
\footnotesize
\caption{\label{algorithm1}The instructions for the construction of the test statistics $\check{Q}_{m-qr}$ and ${Q}_{m-qr}$}
\hspace*{0.01in}
\begin{algorithmic}[1]
\State Construct the IV $z_{t-1}$ by equation (\ref{mulivz}).
\State To eliminate the higher-order term $C_T$, we construct the IV $\tilde{z}_{t-1} = (\operatorname{I_K}-S_a)z_{t-1}$ when
$1\leq t\leq T_0$ while $\tilde{z}_{t-1} = (\operatorname{I_K}-S_b)z_{t-1}$ when $T_0+1 \leq t \leq T$.
\State Construct the IV estimator $\hat{\beta}_\tau^l$ by a two-step regression, which is consistent under both the null and the alternative hypothesis. Specially, we run regression $ (\hat{\mu}_x^l,\hat{\theta}^l) = \arg \; \min_{\mu_x,\theta} \sum_{t= 1}^T \left(x_{t-1}- \mu_x - \theta \tilde{z}_{t-1} \right)^\top \left(x_{t-1}- \mu_x - \theta \tilde{z}_{t-1} \right)$ to obtain the fitted value $\tilde{x}_{t-1}^l = \hat{\mu}_x^l+\hat{\theta}^l \tilde{z}_{t-1}$ and the residual $\tilde{v}_{t-1}^l = x_{t-1}- \tilde{x}_{t-1}^l$ in the first step while $ \big[\hat\mu_\tau^l,(\hat\beta_\tau^l)^\top,(\hat\gamma_\tau^l)^\top\big]^\top =\arg \min_{\mu_\tau, \beta_\tau,\gamma_\tau} \sum_{t= 1}^T \rho_{\tau}\left(y_{t}-\mu_\tau-\beta_\tau^\top \tilde{x}_{t-1}^l -\gamma_\tau^\top \tilde{v}_{t-1}^l\right)$ in the second step.
\State Using the IV estimator $\hat{\beta}_\tau^l$ to construct the test statistics $Q_{l-qr}$ and $\check{Q}_{l-qr}$ in equations (\ref{walde2tes}) and (\ref{defqc1}), which is valid for marginal test with many highly persistent predictors.
\State Improve size performance at tail quantiles and with many predictors: 
\begin{itemize}
\item Use the simulation-based estimator $\hat{f}_{u_\tau}(0)$ for density $f_{u_\tau}(0)$ in (\ref{dj74e6}), which is more accurate than that of CCL2023.
\item Set the conservative tuning parameter $c_z=-8-2K$ to reduce the size distortion induced by the higher-order term $B_T$.
\end{itemize}
\State Amend the loss of power due to the conservative tuning parameter $c_z=-8-2K$ while keeping a good size performance: Construct ${Q}_{m-qr} ={Q}_{l-qr} + \frac{1}{\sqrt{T}}\left( \frac{{Q}_{o-qr}}{{q}_{0.999}}\right)^{1/(1-\delta)}$ in (\ref{oraqu2np8}) for two-sided test.
\If{Focus on the one-sided marginal test $H_0:\beta_i=0$ vs $H_a^r:\beta_i>0$ and $H_0:\beta_i=0$ vs $H_a^l:\beta_i<0$ }
\State Construct the t-test statistic $\check{Q}_{m-qr} = \check{Q}_{l-qr} + \frac{1}{\sqrt{T}} \left(\frac{ \check{Q}_{o-qr}}{\check{q}_{0.999}}\right)^{1/(1-\delta)}\operatorname{sign}(\check{Q}_{o-qr})$.
\EndIf
\end{algorithmic}
\end{algorithm}
\section{Monte Carlo Simulations}\label{section5}
To show the finite sample performance of the proposed inference procedure in predictive quantile regression with many persistent predictors, we conduct two Monte Carlo experiments.
The first experiment considers a data generating process (DGP) with a univariate predictor, while the second experiment is devoted to multivariate predictive model. For the first experiment, we compare the proposed test ${Q}_{m-qr}$ with the tests in \cite{Lee2016}, \cite{FanLee2019}, CCL2023 and \cite{LiuYangetal2023} within the context of the two-sided test. We also show the test result of the proposed test statistic $\check{Q}_{m-qr}$ for one-sided test. For the second experiment, we compare the size performance of the proposed test $Q_{m-qr}$ and CCL2023 for two-sided test with multiple predictors. Moreover, we show the power performance of $Q_{m-qr}$ and $\check{Q}_{m-qr}$ and the size performance of $\check{Q}_{m-qr}$ in multivariate models. Additionally, we show the result of the joint test by \cite{Lee2016} in multivariate models.
In all experiments, the DGP of $x_{t-1}$ is equation (\ref{mulgtuA1}) while the DGP of $y_t$ is the same as equation (1) of \cite{LiuYangetal2023}, which is described in Remark \ref{remaDGP}.
We set $\mu_\tau=1$ and $\mu_\sigma=1$. The DGP of the error $v_{i,t}$ is as follows: $v_{i,t}=\gamma_i \zeta_t + \check{v}_{i,t}$, where $(\zeta_t,\check{v}_{i,t})^\top \;\sim \; i.i.d. \, \operatorname{N}(0_K,\operatorname{I_K})$.
Thus the contemporaneous correlation coefficient between $\zeta_t$ and $v_{i,t}$ is
$\gamma_i (1+\gamma_i^2)^{-1/2}$, which is the source of the size distortion shown in Proposition \ref{mulpropfdie3}.
We report the simulation results for a i.i.d. model with $\sigma_t=1$ for $t=1,2,\cdots,T$ and the sample size $T=750$ with the nominal size $5\%$. \footnote{The absolute value of the correlation coefficient between $u_{t\tau}$ and $v_{i,t}$ in i.i.d. model is significantly greater than that in the GARCH model. This difference leads to a larger size distortion in all approaches when compared to their counterparts in the GARCH model. Consequently, it is sufficient to demonstrate the advantages of the proposed test statistics using only the i.i.d. model results, rather than including those from the GARCH model. To maintain brevity and convey the main message, we have omitted the results for the GARCH and ARCH models with sample sizes $T=250$, $T=500$, and $T=750$, as well as for the i.i.d. model with sample sizes $T=250$ and $T=500$. Detailed results for these models are available upon request.} Simulation is repeated $5000$ times for each setting.
Hereafter, \cite{Lee2016}, \cite{FanLee2019} and \cite{LiuYangetal2023} are referred to as L2016, FL2019 and LLPY2023, respectively.
\textbf{Experiment A1:}\quad In experiment A1, we conduct the test in univariate model with $\gamma_i=-3$ such that $\operatorname{corr}(u_t,v_t)=-0.95$. First, the size performance (in \%) of two-sided test in univariate model of L2016, FL2019, CCL2023 and LLPY2023 and the proposed test statistic ${Q}_{m-qr}$ for two-sided test $H_0:\beta_\tau=0$ are shown in Table \ref{sizeunil}, in which $\tau=0.05,0.25,0.5,0.75,0.95$ and predictors are SD ($c=0$), SD ($c=-5$), SD ($c=-15$) and WD ($c=-0.05$). Second, the power performances of L2016, FL2019, CCL2023 and LLPY2023 and the proposed test statistic ${Q}_{m-qr}$ for two-sided test $H_0:\beta_\tau=0$ vs $H_a:\beta_\tau\neq 0$ are shown in Figure \ref{poweruni1}, in which $\tau=0.5$ and predictors are SD ($c=0$), SD ($c=-5$), SD ($c=-15$) and WD ($c=-0.05$). Third, the size performance (in \%) and the power performance of the proposed test statistic $\check{Q}_{m-qr}$ for one-sided test $H_0:\beta_\tau=0$ vs $H_a^r:\beta_\tau> 0$ in univariate model are shown in Table \ref{sizeuni2}, in which $\tau=0.05,0.25,0.5,0.75,0.95$ and predictors are SD ($c=0$), SD ($c=-5$), SD ($c=-15$) and WD ($c=-0.05$).
We have the following findings from Tables \ref{sizeunil} and \ref{sizeuni2} and Figure \ref{poweruni1}. First, the size performance of the proposed test statistic ${Q}_{m-qr}$ surpasses that of L2016, FL2019 and CCL2023 and is slightly better than LLPY2023. The proposed test statistic ${Q}_{m-qr}$ is free of size distortion even at tail quantiles. Meanwhile, LLPY2023 tends to slightly over-reject when $\tau=0.05$ and under-reject when $\tau=0.95$. Second, the power performance of the proposed test statistic ${Q}_{m-qr}$ is better than L2016, FL2019 and CCL2023 while is comparable to LLPY2023. In scenarios SD ($c=0$) and SD ($c=-5$), the power of the proposed test statistic ${Q}_{m-qr}$ is lower than LLPY2023 when $\beta_\tau$ is small. Conversely, ${Q}_{m-qr}$ exhibits greater power than LLPY2023 when $\beta_\tau$ is large. For case SD ($c=-15$), the power of the proposed test statistic ${Q}_{m-qr}$ is the same as LLPY2023. For case WD ($c=-0.05$), the power of the proposed test statistic ${Q}_{m-qr}$ is greater than that of LLPY2023. Third, the size and the power performance of the proposed test statistic $\check{Q}_{m-qr}$ is very well for one-sided test.
In summary, the performance of the proposed test $\check{Q}_{m-qr}$ is comparable to LLPY2023 and superior to L2016, FL2019 and CCL2023 for two-sided test. Meanwhile, the proposed test $\check{Q}_{m-qr}$ performs well for one-sided test, which is not considered in the literature.
\begin{center}
\setlength{\tabcolsep}{8pt}
\renewcommand{\arraystretch}{0.6}
\captionsetup{
width=0.8\textwidth,
}
\begin{longtable}{c|c|cccc}
\caption{Size Performance (\%) of Two-sided Test in Univariate Model }\label{sizeunil}\\
\hline
$\tau$ & Method & SD (c=0) & SD (c=-5) & SD (c=-15) & WD (c=-0.05) \\
\hline
\endfirsthead
\hline
\multicolumn{6}{c}{{ \tablename\ \thetable{} -- continued from previous page}} \\
\hline
$\tau$ & Method & SD (c=0) & SD (c=-5) & SD (c=-15) & WD (c=-0.05) \\
\hline
\endhead
\hline
\multicolumn{6}{r}{{Continued on next page}} \\
\hline
\endfoot
\hline
\endlastfoot
%%%%%%%%%%%%%%%%%%%%%%%%%%%%%%%%%%%%%%%%%%%%%%%%%%%%%
\multirow{5}[0]{*}{0.05} & L2016 & 7.6 & 7.4 & 5.9 & 6.7 \\
& FL2019 & 6.8 & 6.5 & 6.0 & 6.1 \\
& CCL2023 & 7.1 & 7.5 & 6.7 & 6.9 \\
& LLPY2023 & 6.2 & 4.1 & 5.0 & 4.9 \\
& ${Q}_{m-qr}$ & 5.1 & 4.7 & 5.0 & 4.6 \\
\hline
\multirow{5}[0]{*}{0.25} & L2016 & 6.8 & 5.8 & 4.9 & 4.4 \\
& FL2019 & 8.1 & 6.4 & 6.9 & 5.2 \\
& CCL2023 & 5.1 & 5.3 & 4.4 & 5.5 \\
& LLPY2023 & 6.5 & 4.4 & 5.1 & 4.1 \\
& ${Q}_{m-qr}$ & 4.5 & 3.9 & 4.6 & 4.4 \\
\hline
\multirow{5}[0]{*}{0.5} & L2016 & 5.6 & 4.3 & 3.7 & 4.6 \\
& FL2019 & 7.9 & 6.5 & 6.0 & 6.8 \\
& CCL2023 & 4.8 & 4.9 & 4.7 & 5.0 \\
& LLPY2023 & 5.0 & 4.1 & 4.7 & 5.0 \\
& ${Q}_{m-qr}$ & 5.1 & 4.5 & 4.8 & 4.0 \\
\hline
\multirow{5}[0]{*}{0.75} & L2016 & 6.4 & 4.7 & 4.8 & 5.4 \\
& FL2019 & 8.0 & 6.6 & 6.2 & 6.5 \\
& CCL2023 & 5.1 & 4.6 & 4.8 & 5.2 \\
& LLPY2023 & 3.9 & 4.1 & 5.0 & 5.2 \\
& ${Q}_{m-qr}$ & 5.1 & 4.2 & 4.7 & 4.1 \\
\hline
\multirow{5}[0]{*}{0.95} & L2016 & 9.7 & 7.5 & 5.7 & 4.9 \\
& FL2019 & 8.4 & 7.3 & 6.2 & 5.2 \\
& CCL2023 & 6.5 & 6.9 & 6.0 & 6.7 \\
& LLPY2023 & 3.5 & 4.0 & 3.3 & 2.9 \\
& ${Q}_{m-qr}$ & 5.2 & 4.2 & 5.3 & 4.7 \\
\end{longtable}
\end{center}
Note that the black dash line in Figure \ref{poweruni1} represent 5\% nominal size. 
\begin{figure}[H]
\centering
\includegraphics[width= \textwidth]{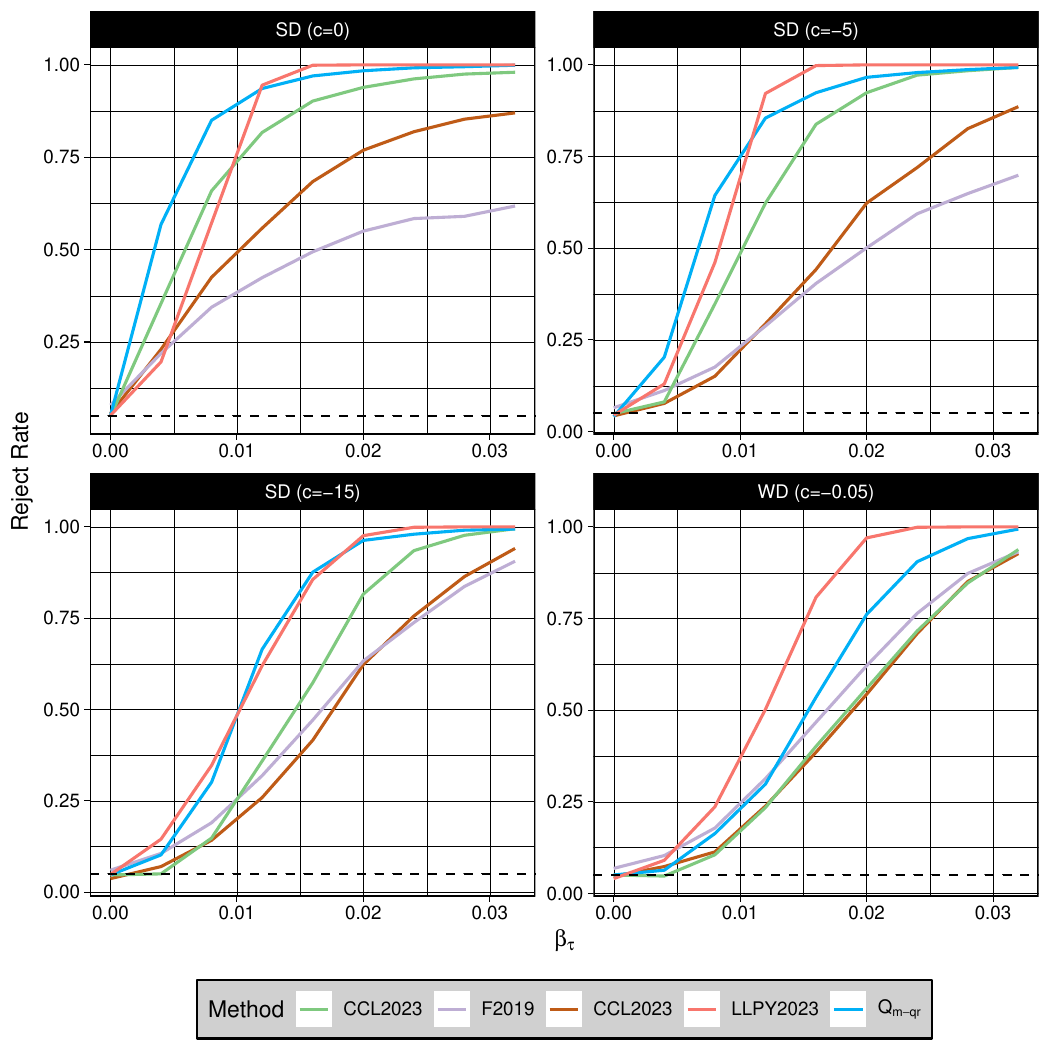}\\
\caption{Power Performance in the Univariate Model with $\tau=0.5$}\label{poweruni1}
\end{figure}
\begin{center}
\setlength{\tabcolsep}{8pt}
\renewcommand{\arraystretch}{0.6}
\captionsetup{
width=0.8\textwidth,
}
\begin{longtable}{c|l|ccccccc}
\caption{Right-sided Test Results (\%) of $\check{Q}_{m-qr}$ in the Univariate Model}\label{sizeuni2}\\
\hline
Persistence & $\beta_\tau$ & 0 & 0.004 & 0.008 & 0.012 & 0.016 & 0.02 & 0.024 \\
\hline
\endfirsthead
\hline
\multicolumn{9}{c}{{ \tablename\ \thetable{} -- continued from previous page}} \\
\hline
Persistence & $\beta_\tau$ & 0 & 0.004 & 0.008 & 0.012 & 0.016 & 0.02 & 0.024 \\
\hline
\endhead
\hline
\multicolumn{9}{r}{{Continued on next page}} \\
\hline
\endfoot
\hline
\endlastfoot
%%%%%%%%%%%%%%%%%%%%%%%%%%%%%%%%%%%%%%%%%%%%%%%%%%%%%
\multirow{5}[0]{*}{SD (c=0)} & $\tau$=0.05 & 5.7 & 28.5 & 74.1 & 94.3 & 99.4 & 99.9 & 100.0 \\
& $\tau$=0.25 & 6.5 & 63.4 & 98.9 & 100.0 & 100.0 & 100.0 & 100.0 \\
& $\tau$=0.5 & 6.7 & 74.1 & 99.7 & 100.0 & 100.0 & 100.0 & 100.0 \\
& $\tau$=0.75 & 6.2 & 63.0 & 98.8 & 100.0 & 100.0 & 100.0 & 100.0 \\
& $\tau$=0.95 & 5.2 & 27.9 & 73.0 & 94.5 & 98.9 & 99.9 & 100.0 \\
\hline
& $\tau$=0.05 & 4.7 & 23.2 & 70.5 & 93.4 & 99.1 & 99.9 & 100.0 \\
& $\tau$=0.25 & 4.6 & 59.5 & 98.6 & 100.0 & 100.0 & 100.0 & 100.0 \\
SD (c=-5) & $\tau$=0.5 & 4.9 & 70.5 & 99.7 & 100.0 & 100.0 & 100.0 & 100.0 \\
& $\tau$=0.75 & 4.6 & 59.3 & 98.7 & 100.0 & 100.0 & 100.0 & 100.0 \\
& $\tau$=0.95 & 4.4 & 22.7 & 69.5 & 93.3 & 98.8 & 99.9 & 100.0 \\
\hline
\multirow{5}[0]{*}{SD (c=-15)} & $\tau$=0.05 & 5.5 & 16.2 & 40.7 & 74.9 & 93.4 & 98.9 & 99.9 \\
& $\tau$=0.25 & 5.5 & 26.6 & 85.6 & 99.6 & 100.0 & 100.0 & 100.0 \\
& $\tau$=0.5 & 4.8 & 31.5 & 94.2 & 100.0 & 100.0 & 100.0 & 100.0 \\
& $\tau$=0.75 & 4.9 & 26.5 & 85.0 & 99.7 & 100.0 & 100.0 & 100.0 \\
& $\tau$=0.95 & 5.5 & 15.9 & 40.8 & 75.4 & 93.6 & 98.7 & 99.9 \\
\hline
\multirow{5}[0]{*}{WD (c=-0.05)} & $\tau$=0.05 & 4.8 & 10.0 & 32.3 & 70.0 & 91.8 & 98.5 & 99.8 \\
& $\tau$=0.25 & 4.2 & 18.5 & 82.2 & 99.5 & 100.0 & 100.0 & 100.0 \\
& $\tau$=0.5 & 4.1 & 21.6 & 92.4 & 100.0 & 100.0 & 100.0 & 100.0 \\
& $\tau$=0.75 & 4.6 & 17.9 & 82.0 & 99.6 & 100.0 & 100.0 & 100.0 \\
& $\tau$=0.95 & 4.5 & 10.1 & 33.0 & 70.4 & 91.8 & 98.4 & 99.9 \\
\end{longtable}
\end{center}
\noindent
\textbf{Experiment A2:}\quad First, Table \ref{CCLsmul2023} compares the size performance of the proposed test statistic $Q_{m-qr}$ and CCL2023 with K=8 for both the joint test $H_0:\beta_{\tau}=0$ and the two-sided marginal tests with $H_0:\beta_{i\tau}=0$, for $i=1,2,\cdots,8$. Second, the power performances of two-sided test results of $Q_{m-qr}$ are shown in Table \ref{mult_qm1}, in which K=8 and $\beta_{1\tau}=\beta_{2\tau}=\cdots=\beta_{K\tau}=\tilde{\beta}_\tau$ are set to 0, 0.02, 0.04, 0.06, 0.08 and 0.1, respectively. Third, we compare the size performances of L2016, CCL2023 and $Q_{m-qr}$ for the joint test $H_0:\beta_{\tau}=0$ with $\operatorname{K}=2,3,4,5,6,7,8$ in Table \ref{tdfbel}. Fourth, the size and the power performances of $\check{Q}_{m-qr}$ for right-sided test in Table \ref{mult_qmche}, in which K=8 and $\beta_{1\tau}=\beta_{2\tau}=\cdots=\beta_{K\tau}=\tilde{\beta}_\tau$ are set to 0, 0.02, 0.04, 0.06, 0.08 and 0.1, respectively. \footnote{We do not show the power performance \cite{FanLee2019} in experiment 2 since it does not offer the details of key procedure to construct empirical CDF in multivariate models. Also, We do not show the power performance \cite{LiuYangetal2023} in experiment 2 since the stationarity of predictors, i.e. the value of $\alpha$, must be known in its procedure but we skip the test of $\alpha$ here.} 
In these tables, we set $\tau=0.05,0.25,0.5,0.75,0.95$ and $(\rho_1,\rho_2,\cdots,\rho_{K})^\top=(\bm{\rho})_K$, where
$$\bm{\rho}= (0.996,0.993,1,0.987,0.967,0.95,0.9,0.98)^\top.$$ And $(\gamma_1,\gamma_2,\cdots,\gamma_K)^\top=(\Gamma)_K$, $\Gamma=(-3,2,1,3,1,-0.833,0.667,0.5)^\top$ and thus the contemporaneous correlation coefficient $\left[\gamma_1 (1+\gamma_1^2)^{-1/2},\cdots,\gamma_K (1+\gamma_K^2)^{-1/2} \right]^\top$ between $u_t$ and $v_t$ are $(\tilde{\Gamma})_K$ and
$ \tilde{\Gamma}=(-0.949,0.894,0.707,0.949,0.707,0.64,0.555,0.447)^\top$.
Several findings can be observed from Tables \ref{CCLsmul2023}, \ref{mult_qm1}, \ref{tdfbel} and \ref{mult_qmche}. First, for two-sided tests $H_0:\beta_{\tau}=0$ and $H_0:\beta_{i\tau}=0$, $i=1,2,\cdots,8$, with $K=8$, the proposed test statistic $Q_{m-qr}$ is free of size distortion, unlike the CCL2023, which exhibits size distortions at quantile levels $\tau=0.05,0.25,0.5,0.75,0.95$. Second, the power of the proposed test statistic $Q_{m-qr}$ is satisfactory for two-sided tests $H_0:\beta_{\tau}=0$ and $H_0:\beta_{i\tau}=0$, $i=1,2,\cdots,8$, with $K=8$. Third, at quatile levels $\tau=0.25, 0.5,0.75$, the size performance of L2016 and CCL2023 is acceptable when the number of predictors is less than 4 and 5, respectively, for joint test $H_0:\beta_\tau=0$. However, L2016 and CCL2023 suffer size distortion for other cases, such as the case when the number of predictors is less than 4 and 5 respectively and at tail quantiles $\tau=0.05,0.95$. This issue becomes more pronounced as the number of predictors increases. Fourth, the proposed test statistic $Q_{m-qr}$ outperforms L2016 and CCL2023 for joint test $H_0:\beta_\tau=0$, since it is free of size distortion for all cases. Fifth, the proposed test statistic $\check{Q}_{m-qr}$ perform well in terms of size and power for one-sided test, which is not considered in literature. 
To sum up, the size and power of the proposed test are comparable to those found in the existing literature for univariate models, and it significantly outperforms previous works in multivariate contexts.
\begin{center}
%\footnotesize
\setlength{\tabcolsep}{8pt}
\renewcommand{\arraystretch}{0.6}
\captionsetup{
width=0.8\textwidth,
}
\begin{longtable}{c|ccccccccc}
\caption{Size Performance (\%) of Two-sided Test for $H_0:\beta_{\tau}=0$ and $H_0:\beta_{i\tau}=0$ with K=8}\label{CCLsmul2023}\\
\hline
\multicolumn{10}{c}{Panel A: CCL2023}\\
\hline
$\tau$ & $H_0:\beta_{\tau}=0$ & i=1 & i=2 & i=3 & i=4 & i=5 & i=6 & i=7 & i=8 \\
\endfirsthead
\hline
\multicolumn{10}{c}{{ \tablename\ \thetable{} -- continued from previous page}} \\
\hline
$\tau$ & $H_0:\beta_{\tau}=0$ & i=1 & i=2 & i=3 & i=4 & i=5 & i=6 & i=7 & i=8 \\
\hline
\endhead
\hline
\multicolumn{8}{r}{{Continued on next page}} \\
\hline
\endfoot
\hline
\endlastfoot
\hline
$0.05$ & 26.2 & 12.8 & 11.9 & 12.2 & 12.2 & 11.6 & 12.1 & 11.6 & 11.8 \\
$0.25$ & 9.5 & 7.2 & 7.4 & 7.3 & 7.5 & 7.5 & 7.6 & 7.9 & 7.0 \\
$0.5$ & 8.6 & 7.6 & 7.5 & 8.2 & 7.4 & 8.0 & 7.5 & 7.4 & 7.6 \\
$0.75$ & 9.5 & 7.9 & 8.5 & 8.4 & 8.0 & 8.1 & 8.7 & 8.0 & 7.9 \\
$0.95$ & 26.6 & 13.0 & 12.0 & 12.2 & 13.2 & 12.2 & 11.9 & 12.4 & 12.3 \\
\hline
\multicolumn{10}{c}{Panel B: $Q_{m-qr}$ } \\
\hline
$\tau$ & $H_0:\beta_{\tau}=0$ & i=1 & i=2 & i=3 & i=4 & i=5 & i=6 & i=7 & i=8 \\
\hline
$0.05$ & 4.4 & 4.8 & 4.8 & 5.2 & 5.1 & 4.7 & 4.6 & 4.8 & 4.5 \\
$0.25$ & 4.2 & 4.9 & 5.2 & 4.6 & 6.0 & 5.5 & 4.8 & 4.5 & 4.7 \\
$0.5$ & 5.0 & 5.7 & 5.3 & 5.5 & 6.4 & 5.4 & 4.9 & 4.9 & 4.9 \\
$0.75$ & 4.4 & 5.4 & 5.0 & 5.5 & 6.0 & 5.0 & 4.9 & 4.5 & 4.7 \\
$0.95$ & 4.8 & 5.3 & 4.9 & 5.0 & 5.6 & 4.9 & 4.7 & 4.4 & 4.8 \\
\end{longtable}
\vspace{-0.5cm}
\begin{tablenotes}
\footnotesize
\item[1] Note: For different i, the null hypothesis for marginal test is $H_0:\beta_{i\tau}=0$.
\end{tablenotes}
\end{center}
\begin{center}
%\footnotesize
\setlength{\tabcolsep}{8pt}
\renewcommand{\arraystretch}{0.6}
\captionsetup{
width=0.8\textwidth,
}
\begin{longtable}{c|c|cccccc}
\caption{Power and Size Performance of Two-sided Test Results (\%) of $Q_{m-qr}$ in Multivariate Model with K=8.} \label{mult_qm1} \\
\hline
Null Hypothesis & $\tilde{\beta}_\tau$ & 0 & 0.02 & 0.04 & 0.06 & 0.08 & 0.1 \\
\hline
\endfirsthead
\hline
\multicolumn{8}{c}{{ \tablename\ \thetable{} -- continued from previous page}} \\
\hline
Null Hypothesis & $\tilde{\beta}_\tau$ & 0 & 0.02 & 0.04 & 0.06 & 0.08 & 0.1 \\
\hline
\endhead
\hline
\multicolumn{8}{r}{{Continued on next page}} \\
\hline
\endfoot
\hline
\endlastfoot
& $\tau$=0.05 & 4.4 & 97.9 & 100.0 & 100.0 & 100.0 & 100.0 \\
& $\tau$=0.25 & 4.2 & 100.0 & 100.0 & 100.0 & 100.0 & 100.0 \\
$H_0:\beta_{\tau}=0$ & $\tau$=0.5 & 5.0 & 100.0 & 100.0 & 100.0 & 100.0 & 100.0 \\
& $\tau$=0.75 & 4.4 & 99.9 & 100.0 & 100.0 & 100.0 & 100.0 \\
& $\tau$=0.95 & 4.8 & 98.0 & 100.0 & 100.0 & 100.0 & 100.0 \\
\hline
& $\tau$=0.05 & 4.8 & 26.1 & 74.8 & 95.1 & 99.5 & 100.0 \\
& $\tau$=0.25 & 4.9 & 54.9 & 96.1 & 99.9 & 100.0 & 100.0 \\
$H_0:\beta_{1\tau}=0$ & $\tau$=0.5 & 5.7 & 62.2 & 97.6 & 100.0 & 100.0 & 100.0 \\
& $\tau$=0.75 & 5.4 & 55.1 & 96.1 & 99.9 & 100.0 & 100.0 \\
& $\tau$=0.95 & 5.3 & 26.6 & 74.5 & 95.5 & 99.5 & 99.9 \\
\hline
& $\tau$=0.05 & 4.8 & 14.3 & 45.9 & 79.3 & 94.6 & 98.9 \\
& $\tau$=0.25 & 5.2 & 28.2 & 78.6 & 97.3 & 99.9 & 100.0 \\
$H_0:\beta_{2\tau}=0$& $\tau$=0.5 & 5.3 & 33.0 & 83.5 & 98.4 & 99.9 & 100.0 \\
& $\tau$=0.75 & 5.0 & 28.0 & 78.5 & 97.2 & 99.8 & 100.0 \\
& $\tau$=0.95 & 4.9 & 14.0 & 45.9 & 79.3 & 94.7 & 99.0 \\
\hline
& $\tau$=0.05 & 5.2 & 12.9 & 41.4 & 73.1 & 91.0 & 97.8 \\
& $\tau$=0.25 & 4.6 & 25.7 & 74.1 & 95.0 & 99.5 & 100.0 \\
$H_0:\beta_{3\tau}=0$ & $\tau$=0.5 & 5.5 & 30.0 & 79.6 & 97.0 & 99.7 & 100.0 \\
& $\tau$=0.75 & 5.5 & 25.5 & 73.0 & 95.2 & 99.6 & 100.0 \\
& $\tau$=0.95 & 5.0 & 12.9 & 41.6 & 73.1 & 91.6 & 97.9 \\
\hline
& $\tau$=0.05 & 5.1 & 52.0 & 88.8 & 98.4 & 99.7 & 100.0 \\
& $\tau$=0.25 & 6.0 & 74.5 & 97.9 & 99.9 & 100.0 & 100.0 \\
$H_0:\beta_{4\tau}=0$ & $\tau$=0.5 & 6.4 & 78.6 & 98.5 & 99.9 & 100.0 & 100.0 \\
& $\tau$=0.75 & 6.0 & 74.3 & 97.7 & 99.9 & 100.0 & 100.0 \\
& $\tau$=0.95 & 5.6 & 51.8 & 88.7 & 98.3 & 99.8 & 100.0 \\
\hline
& $\tau$=0.05 & 4.7 & 11.3 & 34.9 & 65.4 & 87.5 & 96.3 \\
& $\tau$=0.25 & 5.5 & 20.6 & 66.3 & 93.5 & 99.2 & 100.0 \\
$H_0:\beta_{5\tau}=0$& $\tau$=0.5 & 5.4 & 25.2 & 73.9 & 95.9 & 99.7 & 100.0 \\
& $\tau$=0.75 & 5.0 & 22.1 & 66.6 & 93.1 & 99.1 & 100.0 \\
& $\tau$=0.95 & 4.9 & 11.9 & 34.8 & 66.1 & 87.6 & 96.3 \\
\hline
& $\tau$=0.05 & 4.6 & 10.0 & 25.7 & 50.9 & 76.2 & 90.8 \\
& $\tau$=0.25 & 4.8 & 17.0 & 52.4 & 84.7 & 97.1 & 99.6 \\
$H_0:\beta_{6\tau}=0$ & $\tau$=0.5 & 4.9 & 19.2 & 59.0 & 89.4 & 98.7 & 99.8 \\
& $\tau$=0.75 & 4.9 & 16.2 & 52.0 & 85.8 & 97.0 & 99.6 \\
& $\tau$=0.95 & 4.7 & 9.6 & 26.3 & 51.3 & 76.2 & 90.6 \\
\hline
& $\tau$=0.05 & 4.8 & 7.3 & 17.1 & 32.8 & 52.1 & 70.6 \\
& $\tau$=0.25 & 4.5 & 11.3 & 33.5 & 61.9 & 85.1 & 96.0 \\
$H_0:\beta_{7\tau}=0$& $\tau$=0.5 & 4.9 & 12.8 & 37.7 & 68.7 & 90.3 & 98.0 \\
& $\tau$=0.75 & 4.5 & 11.1 & 32.4 & 61.7 & 85.1 & 95.8 \\
& $\tau$=0.95 & 4.4 & 7.8 & 17.0 & 32.4 & 51.5 & 70.5 \\
\hline
& $\tau$=0.05 & 4.5 & 8.1 & 18.1 & 35.5 & 56.9 & 76.0 \\
& $\tau$=0.25 & 4.7 & 12.3 & 36.9 & 68.2 & 89.5 & 97.7 \\
$H_0:\beta_{8\tau}=0$ & $\tau$=0.5 & 4.9 & 13.9 & 42.7 & 74.7 & 93.1 & 98.9 \\
& $\tau$=0.75 & 4.7 & 12.0 & 36.8 & 68.8 & 89.5 & 97.6 \\
& $\tau$=0.95 & 4.8 & 8.3 & 17.5 & 36.0 & 56.2 & 76.8 \\
\end{longtable}
\vspace{-0.5cm}
\begin{tablenotes}
\footnotesize
\item[1] $\beta_{1\tau}=\beta_{2\tau}=\cdots=\beta_{K\tau}=\tilde{\beta}_\tau$.
\end{tablenotes}
\end{center}
\begin{center}
%\footnotesize
\setlength{\tabcolsep}{8pt}
\renewcommand{\arraystretch}{0.6}
\captionsetup{
width=0.8\textwidth,
}
\begin{longtable}{c|ccccccc}
\caption{Size Performance of L2016, CCL2023 and $Q_{m-qr}$ for Joint Test $H_0:\beta_{\tau}=0$} \label{tdfbel}\\
\hline
\multicolumn{8}{c}{Panel A: L2016 }\\
\hline
$\tau$ & K=2 & K=3 & K=4 & K=5 & K=6 & K=7 & K=8 \\ 
\endfirsthead
\hline
\multicolumn{8}{c}{{ \tablename\ \thetable{} -- continued from previous page}} \\
\hline
$\tau$ & K=2 & K=3 & K=4 & K=5 & K=6 & K=7 & K=8 \\ 
\hline
\endhead
\hline
\multicolumn{8}{r}{{Continued on next page}} \\
\hline
\endfoot
\hline
\endlastfoot
%%%%%%%%%%%%%%%%%%%%%%%%%%%%%%%%%%%%%%%%%%%
\hline
$0.05$ & 8.0 & 10.6 & 18.6 & 22.6 & 25.6 & 29.4 & 34.2 \\
$0.25$ & 5.3 & 5.9 & 13.0 & 13.2 & 15.3 & 15.1 & 16.8 \\
$0.5$ & 4.6 & 5.3 & 12.2 & 13.1 & 14.0 & 14.0 & 14.6 \\
$0.75$ & 4.8 & 6.2 & 12.7 & 13.5 & 15.3 & 15.7 & 15.1 \\
$0.95$ & 8.4 & 10.7 & 18.5 & 22.9 & 26.1 & 28.7 & 33.1 \\
\hline
\multicolumn{8}{c}{Panel B: CCL2023 }\\
\hline
$\tau$ & K=2 & K=3 & K=4 & K=5 & K=6 & K=7 & K=8 \\
\hline
$0.05$ & 8.1 & 11.3 & 11.8 & 15.8 & 18.6 & 21.3 & 25.7 \\
$0.25$ & 5.0 & 7.2 & 6.8 & 6.8 & 8.1 & 8.1 & 10.2 \\
$0.5$ & 4.3 & 6.2 & 5.2 & 7.0 & 7.4 & 7.4 & 8.8 \\
$0.75$ & 5.2 & 6.9 & 4.4 & 8.5 & 7.0 & 9.9 & 9.9 \\
$0.95$ & 8.9 & 10.2 & 12.2 & 16.1 & 19.6 & 23.5 & 26.9 \\
\hline
\multicolumn{8}{c}{Panel C: $Q_{m-qr}$ }\\
\hline
$\tau$ & K=2 & K=3 & K=4 & K=5 & K=6 & K=7 & K=8 \\
\hline
$0.05$ & 4.3 & 4.6 & 4.6 & 5.4 & 5.2 & 4.9 & 4.6 \\
$0.25$ & 4.7 & 5.1 & 5.3 & 5.2 & 5.0 & 4.0 & 5.2 \\
$0.5$ & 4.9 & 4.8 & 5.1 & 5.5 & 5.3 & 5.0 & 5.1 \\
$0.75$ & 4.7 & 4.8 & 4.4 & 4.5 & 4.5 & 5.0 & 4.9 \\
$0.95$ & 4.8 & 5.0 & 4.8 & 4.5 & 4.2 & 4.8 & 4.7 \\
\end{longtable}
\end{center}
\begin{center}
%\footnotesize
\setlength{\tabcolsep}{8pt}
\renewcommand{\arraystretch}{0.6}
\begin{longtable}{c|c|cccccc}
\caption{Right-sided Test Results (\%) of $\check{Q}_{m-qr}$ in Multivariate Model with K=8.} \label{mult_qmche} \\
\hline
Null Hypothesis & $\tilde{\beta}_\tau$ & 0 & 0.02 & 0.04 & 0.06 & 0.08 & 0.1 \\
\hline
\endfirsthead
\hline
\multicolumn{8}{c}{{ \tablename\ \thetable{} -- continued from previous page}} \\
\hline
Null Hypothesis & $\tilde{\beta}_\tau$ & 0 & 0.02 & 0.04 & 0.06 & 0.08 & 0.1 \\
\hline
\endhead
\hline
\multicolumn{8}{r}{{Continued on next page}} \\
\hline
\endfoot
\hline
\endlastfoot
\multirow{5}[0]{*}{$H_0:\beta_{1\tau}=0$} & $\tau$=0.05 & 5.4 & 33.7 & 79.3 & 96.3 & 99.6 & 100.0 \\
& $\tau$=0.25 & 5.3 & 61.9 & 96.8 & 99.9 & 100.0 & 100.0 \\
& $\tau$=0.5 & 6.0 & 67.7 & 98.2 & 100.0 & 100.0 & 100.0 \\
& $\tau$=0.75 & 6.2 & 61.8 & 97.0 & 99.9 & 100.0 & 100.0 \\
& $\tau$=0.95 & 5.6 & 34.2 & 79.0 & 96.4 & 99.7 & 100.0 \\
\hline
\multirow{5}[0]{*}{$H_0:\beta_{2\tau}=0$} & $\tau$=0.05 & 4.6 & 20.8 & 54.6 & 83.9 & 95.9 & 99.1 \\
& $\tau$=0.25 & 4.5 & 36.5 & 83.3 & 98.1 & 99.9 & 100.0 \\
& $\tau$=0.5 & 4.7 & 41.3 & 87.1 & 98.8 & 99.9 & 100.0 \\
& $\tau$=0.75 & 4.5 & 36.6 & 83.2 & 98.0 & 99.8 & 100.0 \\
& $\tau$=0.95 & 4.3 & 20.7 & 54.9 & 83.7 & 96.1 & 99.3 \\
\hline
\multirow{5}[0]{*}{$H_0:\beta_{3\tau}=0$} & $\tau$=0.05 & 4.9 & 19.2 & 49.2 & 77.9 & 92.8 & 98.4 \\
& $\tau$=0.25 & 4.6 & 33.5 & 78.7 & 96.3 & 99.7 & 100.0 \\
& $\tau$=0.5 & 5.1 & 37.9 & 83.9 & 97.6 & 99.8 & 100.0 \\
& $\tau$=0.75 & 5.2 & 33.2 & 78.2 & 96.3 & 99.7 & 100.0 \\
& $\tau$=0.95 & 5.1 & 19.3 & 49.9 & 78.3 & 93.5 & 98.4 \\
\hline
\multirow{5}[0]{*}{$H_0:\beta_{4\tau}=0$} & $\tau$=0.05 & 4.3 & 56.6 & 90.7 & 98.6 & 99.7 & 100.0 \\
& $\tau$=0.25 & 3.9 & 77.9 & 98.3 & 99.9 & 100.0 & 100.0 \\
& $\tau$=0.5 & 4.1 & 81.4 & 98.8 & 100.0 & 100.0 & 100.0 \\
& $\tau$=0.75 & 4.2 & 77.5 & 98.2 & 99.9 & 100.0 & 100.0 \\
& $\tau$=0.95 & 4.2 & 57.0 & 90.3 & 98.7 & 99.8 & 100.0 \\
\hline
\multirow{5}[0]{*}{$H_0:\beta_{5\tau}=0$} & $\tau$=0.05 & 4.5 & 16.8 & 44.1 & 72.3 & 90.3 & 97.3 \\
& $\tau$=0.25 & 5.0 & 28.9 & 73.0 & 95.2 & 99.5 & 100.0 \\
& $\tau$=0.5 & 5.0 & 33.7 & 79.5 & 97.1 & 99.8 & 100.0 \\
& $\tau$=0.75 & 4.8 & 30.3 & 73.1 & 95.0 & 99.4 & 100.0 \\
& $\tau$=0.95 & 4.6 & 18.1 & 43.6 & 72.8 & 90.5 & 97.3 \\
\hline
\multirow{5}[0]{*}{$H_0:\beta_{6\tau}=0$} & $\tau$=0.05 & 4.3 & 15.9 & 35.6 & 61.1 & 82.7 & 93.6 \\
& $\tau$=0.25 & 4.4 & 25.0 & 61.8 & 88.9 & 98.0 & 99.7 \\
& $\tau$=0.5 & 4.4 & 27.6 & 67.9 & 92.7 & 99.0 & 99.8 \\
& $\tau$=0.75 & 4.4 & 24.7 & 61.2 & 89.8 & 97.9 & 99.8 \\
& $\tau$=0.95 & 4.9 & 15.2 & 35.9 & 60.9 & 82.3 & 93.3 \\
\hline
& $\tau$=0.05 & 4.6 & 11.7 & 25.8 & 44.0 & 63.1 & 79.1 \\
& $\tau$=0.25 & 4.3 & 18.1 & 43.8 & 71.6 & 89.8 & 97.4 \\
$H_0:\beta_{7\tau}=0$ & $\tau$=0.5 & 4.3 & 20.0 & 49.0 & 77.3 & 93.6 & 98.8 \\
& $\tau$=0.75 & 4.4 & 17.7 & 43.9 & 71.2 & 90.0 & 97.5 \\
& $\tau$=0.95 & 4.1 & 12.3 & 25.3 & 43.2 & 62.2 & 78.6 \\
\hline
\multirow{5}[0]{*}{$H_0:\beta_{8\tau}=0$} & $\tau$=0.05 & 4.5 & 13.7 & 27.2 & 46.4 & 66.9 & 82.9 \\
& $\tau$=0.25 & 4.6 & 19.3 & 48.2 & 76.8 & 93.0 & 98.5 \\
& $\tau$=0.5 & 4.6 & 21.9 & 53.7 & 81.9 & 95.6 & 99.2 \\
& $\tau$=0.75 & 4.7 & 19.7 & 48.2 & 77.2 & 93.1 & 98.4 \\
& $\tau$=0.95 & 4.9 & 13.3 & 26.1 & 46.9 & 66.9 & 83.2 \\
\end{longtable}
\end{center}
\section{Additional Theoretical Results}\label{app:C}
\subsection{Eliminate One Higher-order Term by Sample Splitting Method}
We eliminate the first one of two higher-order terms by applying a new instrumental variable which is constructed by the sample splitting method and the instrumental variable $z_{t-1}$.
The size distortion induced by the higher-order terms ${C_T}$ in Proposition \ref{mulpropfdie3} could be eliminated by removing the term $\frac{1}{T} \sum\nolimits_{t=1}^{T} z_{t-1}$ in the test statistics. We split the full sample into two sub-samples $\{(y_t,x_{t-1})\}_{t=1}^{T_0}$ and $\{(y_t,x_{t-1})\}_{t=T_0}^{T}$, where $T_0=\lfloor \lambda T\rfloor$ and $0<\lambda<1$ is a constant. In practice, we follow \cite{liao2024robust} to evenly split the full sample from the middle time point, i.e., $\lambda=0.5$.
First, we apply the two-step regression introduced in subsection \ref{subsection3.1} of the main text to the two sub-samples. The first step based on the fist sub-sample is as follows.
\begin{align}\label{firstep1a}
(\hat{\mu}_x^a,\hat{\theta}^a) = \arg \; \min_{\mu_x,\theta} \sum_{t=1}^{T_0} \left(x_{t-1}- \mu_x - \theta z_{t-1} \right)^\top \left(x_{t-1}- \mu_x - \theta z_{t-1} \right).
\end{align}
The fitted value of equation (\ref{firstep1a}) is $\tilde{x}_{t-1}^a =\hat{\mu}_x^a+\hat{\theta}^a z_{t-1}$, and the residual is $\tilde{v}_{t-1}^a = x_{t-1}- \tilde{x}_{t-1}^a$ and $t=1,2,\cdots,T_0$.
The second step based on the first sub-sample is as follows.
\begin{align}\label{firstep2a}
\left[\hat\mu_\tau^a,(\hat\beta_\tau^a)^\top,(\hat\gamma_\tau^a)^\top\right]^\top =\arg \,\min_{\mu_\tau, \beta_\tau,\gamma_\tau} \sum_{t=1}^{T_0} \rho_{\tau}\left(y_{t}-\mu_\tau-\beta_\tau^\top \tilde{x}_{t-1}^a -\gamma_\tau^\top \tilde{v}_{t-1}^a\right).
\end{align}
Additionally, the first step based on the second sub-sample is as follows.
\begin{align}\label{firstep1b}
(\hat{\mu}_x^b,\hat{\theta}^b) = \arg \; \min_{\mu_x,\theta} \sum_{t=T_0+1}^T \left(x_{t-1}- \mu_x - \theta z_{t-1} \right)^\top \left(x_{t-1}- \mu_x - \theta z_{t-1} \right).
\end{align}
The fitted value of equation (\ref{firstep1b}) is $\tilde{x}_{t-1}^b = \hat{\mu}_x^b + \hat{\theta}^b z_{t-1}$, and the residual is $\tilde{v}_{t-1}^b = x_{t-1}- \tilde{x}_{t-1}^b$ and $t=1,2,\cdots,T_0$.
The second step based on the second sub-sample is as follows.
\begin{align}\label{firstep2b}
\left[\hat\mu_\tau^b,(\hat\beta_\tau^b)^\top,(\hat\gamma_\tau^b)^\top\right]^\top =\arg \,\min_{\mu_\tau, \beta_\tau,\gamma_\tau} \sum_{t=T_0+1}^T \rho_{\tau}\left(y_{t}-\mu_\tau-\beta_\tau^\top \tilde{x}_{t-1}^b -\gamma_\tau^\top \tilde{v}_{t-1}^b\right).
\end{align}
Similar to Theorem \ref{thm2}, the following equations hold.
\begin{align}\label{muldef2new}
D_T\left(\hat{\beta}_\tau^a -\beta_\tau\right)&=\frac{1}{f_{u_\tau}(0)} \left[ D_T^{-2} \sum\limits_{t=1}^{T_0} \bar{z}_{t-1}^a x_{t-1}^\top \right]^{-1} D_T^{-1} \sum\limits_{t=1}^{T_0} \bar{z}_{t-1}^a \psi_\tau (u_{t\tau}) + o_p(1);\\
\label{muldef3new}
D_T \left(\hat{\beta}_\tau^b -\beta_\tau\right)&=\frac{1}{f_{u_\tau}(0)} \left[ D_T^{-2} \sum\limits_{t=T_0+1}^T \bar{z}_{t-1}^b x_{t-1}^\top \right]^{-1} D_T^{-1} \sum\limits_{t=T_0+1}^T \bar{z}_{t-1}^b \psi_\tau (u_{t\tau}) + o_p(1),
\end{align}
where $\bar{z}_{t-1}^a= z_{t-1}- \frac{1}{T_0}\sum\nolimits_{t=1}^{T_0} z_{t-1}$ and $\bar{z}_{t-1}^b= z_{t-1}- \frac{1}{T-T_0}\sum\nolimits_{t=T_0+1}^{T} z_{t-1}$.
Second, we apply equations (\ref{eq1}), (\ref{mulgtuA1}), (\ref{muldef2new}) and (\ref{muldef3new}) to obtain the following equations. 
\begin{align}\label{muldeftwo2}
& \sum\limits_{t=1}^{T} \bar{z}_{t-1} x_{t-1}^\top (\hat{\beta}_\tau-\beta) = \frac{1}{f_{u_\tau}(0)}\sum\limits_{t=1}^{T} z_{t-1}\psi_\tau (u_{t\tau}) - \frac{\frac{1}{T}\sum\limits_{t=1}^{T} z_{t-1} }{f_{u_\tau}(0)}\sum\limits_{t=1}^{T} \psi_\tau (u_{t\tau}) +o_p(D_T^{-1}), \\
\label{muldeftwo3}
& \sum\limits_{t=1}^{T_0} \bar{z}_{t-1}^a x_{t-1}^\top (\hat{\beta}_\tau^a-\beta) = \frac{1}{f_{u_\tau}(0)}\sum\limits_{t=1}^{T_0} z_{t-1} \psi_\tau (u_{t\tau})- \frac{ \frac{1}{T_0}\sum\limits_{t=1}^{T_0} z_{t-1}}{f_{u_\tau}(0)} \sum\limits_{t=1}^{T_0} \psi_\tau (u_{t\tau})+o_p(D_T^{-1}) , \\ \label{muldeftwo4}
& \sum\limits_{t=T_0+1}^{T} \bar{z}_{t-1}^b x_{t-1}^\top (\hat{\beta}_\tau^b-\beta) = \frac{\sum\limits_{t=T_0+1}^{T} z_{t-1} \psi_\tau (u_{t\tau})}{f_{u_\tau}(0)} - \frac{\frac{\sum\limits_{t=T_0+1}^{T} z_{t-1} }{T-T_0}}{f_{u_\tau}(0)} \sum\limits_{t=T_0+1}^{T} \psi_\tau (u_{t\tau})+o_p(D_T^{-1}). 
\end{align} 
By equations (\ref{muldeftwo3}) and (\ref{muldeftwo4}), the following equations hold.
\begin{small}
\begin{align}
\label{mul8new}
&S_a \sum\limits_{t=1}^{T_0} \bar{z}_{t-1}^a{x}_{t-1}^\top (\hat{\beta}_\tau^a-\beta) = \frac{S_a}{f_{u_\tau}(0)} \sum\limits_{t=1}^{T_0} z_{t-1} \psi_\tau (u_{t\tau})- \frac{\frac{1}{T}\sum\limits_{t=1}^{T} z_{t-1}}{ f_{u_\tau}(0)} \sum\limits_{t=1}^{T_0} \psi_\tau (u_{t\tau}) +o_p(D_T^{-1}) , \\ \label{mul9new}
&S_b\sum\limits_{t=1}^{T_0} \bar{z}_{t-1}^b{x}_{t-1}^\top (\hat{\beta}_\tau^b-\beta) = \frac{S_b}{f_{u_\tau}(0)} \sum\limits_{t=T_0+1}^{T} z_{t-1} \psi_\tau (u_{t\tau}) -
\frac{\frac{1}{T}\sum\limits_{t=1}^{T} z_{t-1} }{f_{u_\tau}(0)}\sum\limits_{t=T_0+1}^{T} \psi_\tau (u_{t\tau})+o_p(D_T^{-1}). 
\end{align} 
\end{small}
By subtracting the sum of equations (\ref{mul8new}) and (\ref{mul9new}) from equation (\ref{muldeftwo2}), the term $ \sum\nolimits_{t=1}^{T} \psi_\tau (u_{t\tau})$ of $\hat{\beta}_\tau$ is removed as follows. 
\begin{align}\label{mulkde2con}
& W_1 \hat{\beta}_\tau-W_2\hat{\beta}_\tau^a -W_3 \hat{\beta}_\tau^b - (W_1-W_2-W_3)\beta \\
& = \frac{1}{f_{u_\tau}(0)}(\operatorname{I_K}-S_a) \sum\limits_{t=1}^{T_0} z_{t-1}\psi_\tau (u_{t\tau}) + \frac{1}{f_{u_\tau}(0)} (\operatorname{I_K}-S_b) \sum\limits_{t=T_0+1}^{T} z_{t-1}\psi_\tau (u_{t\tau}) + o_p(D_T^{-1}) \nonumber\\
&= \frac{1}{f_{u_\tau}(0)}\sum\limits_{t=1}^{T} \tilde{z}_{t-1}\psi_\tau (u_{t\tau})+ o_p(D_T^{-1}), \nonumber
\end{align}
where $W_1=\sum\nolimits_{t=1}^{T} \bar{z}_{t-1} x_{t-1}^\top $, $W_2=S_a \sum\nolimits_{t=1}^{T_0} \bar{z}_{t-1}^a x_{t-1}^\top$,
$W_3=S_b \sum\nolimits_{t=T_0+1}^{T} \bar{z}_{t-1}^b x_{t-1}^\top$.
Third, we could utilize another instrumental variable estimator with the IV $\tilde{z}_{t-1}$. Define the IV estimator $\hat{\beta}_\tau^{l_0} $ as follows.
\begin{align}\label{mulkde2con2}
\hat{\beta}_\tau^{l_0} \equiv (W_1-W_2-W_3)^{-1}(W_1 \hat{\beta}_\tau-W_2\hat{\beta}_\tau^a -W_3 \hat{\beta}_\tau^b).
\end{align}
By equations (\ref{mulkde2con}) and (\ref{mulkde2con2}) and the fact $W_1-W_2-W_3 = \sum\limits_{t=1}^{T} \tilde{z}_{t-1}x_{t-1}$, it follows that
\begin{align}\label{kdad53g}
\hat{\beta}_\tau^{l_0}- \beta &=\frac{1}{f_{u_\tau}(0)} \left(\sum\limits_{t=1}^{T} \tilde{z}_{t-1}x_{t-1}^\top \right)^{-1} \sum\limits_{t=1}^{T} \tilde{z}_{t-1} \psi_\tau (u_{t\tau})+ o_p(D_T^{-1}).
\end{align}
The key and desirable property of the new instrumental variable $\tilde{z}_{t-1}$ is
$\sum\nolimits_{t=1}^{T} \tilde{z}_{t-1}=0$.
Thus it follows that
\begin{align*}
&\hat{\beta}_\tau^{l_0} - \beta \\
&= \frac{1}{f_{u_\tau}(0)}\left[\sum\limits_{t=1}^{T} \left( \tilde{z}_{t-1}-\frac{1}{T} \sum\limits_{t=1}^{T} \tilde{z}_{t-1} \right)x_{t-1}^\top \right]^{-1} \sum\limits_{t=1}^{T} \left( \tilde{z}_{t-1}-\frac{1}{T} \sum\limits_{t=1}^{T} \tilde{z}_{t-1} \right) \psi_\tau (u_{t\tau}) + o_p(D_T^{-1}) \\
&= \frac{1}{f_{u_\tau}(0)} \left(\sum\limits_{t=1}^{T} \tilde{z}_{t-1}x_{t-1}^\top \right)^{-1} \sum\limits_{t=1}^{T} \tilde{z}_{t-1} \psi_\tau (u_{t\tau})+ o_p(D_T^{-1}).
\end{align*}
By this way, the higher-order term ${C_T}$ vanishes.
\subsubsection*{Higher-order Term of $\check{Q}_{l-qr}$ and $Q_{l-qr}$}\label{hiorte1}
Although the higher-order term $C_T$ disappear in $\check{Q}_{l-qr}$ and $Q_{l-qr}$, the higher-order term $B_T^l$ which arise by the same reason of $B_T$ still exists.
To analyze these distortion effects of $\check{Q}_{l-qr}$ and $Q_{l-qr}$ intuitively, we show their higher-order terms with $J=K=1$ here. Following the Proposition 2 of \cite{liao2024robust}, we have the proposition as follows.
\begin{prop}\label{mulpropp2}
Under Assumptions \ref{Assumption A.1} and \ref{Assumption A.2} and the null hypothesis $H_0:\beta_\tau=0$ and $J=K=1$, for SD predictors, the following equation holds as $T \rightarrow \infty$ that
$$
\check{Q}_{l-qr}= Z_T^l+{B_T^l}+o_p\left(T^{\delta / 2-1 / 2}\right),
$$
where
%$$
$
{Z_T^l} = (\Sigma_{zz})^{-1/2}\sum\nolimits_{t=1}^T \tilde{z}_{t-1} \psi_\tau (u_{t\tau}) \stackrel{d}{\rightarrow} \operatorname{N}(0,1) \quad \text { with } 
$
${B_T^l }\rightarrow 0 \quad$ and
\begin{align}\label{mulpop1th12}
{B_T^l} = \varpi_l{Z_T^l},\;\varpi_l = -\frac{1}{2}\left\{ \tau(1-\tau)\sum\nolimits_{t=1}^T \tilde{z}_{t-1}^* (\tilde{z}_{t-1}^*)^\top -1\right\}
\end{align}
where $\tilde{z}_{t-1}^*= (\Sigma_{zz})^{-1/2}\tilde{z}_{t-1}$ and
\begin{align}\label{mulpotth3qhe1}
T^{(1 -\delta) / 2}{B_T^l} = R_T^l - W_l \rho_{ v\psi} / \sqrt{-2 c_z} 
\end{align}
and ${R_T^l}= W_a{R_{1,T}^l}+W_b{ R_{2,T}^l}$,
$
\operatorname{E}\left({R_{1,T}^l} \right) = \operatorname{E}\left({R_{2,T}^l} \right) =0
$
and
$
W_l = W_a/\sqrt{\lambda} + W_b/\sqrt{1-\lambda}
$ and
$
W_a = [ T^{-(1+\delta)} \sum_{t=1}^{T} \tilde{z}_{t-1} \tilde{z}_{t-1}^\top \hat{u}_t^2 ]^{-1/2}
(1-S_a)
[ T^{-(1+\delta)} \sum_{t=1}^{T_0} z_{t-1} z_{t-1}^\top \hat{u}_t^2 ]^{1/2}
$
and 
$
W_b = [ T^{-(1+\delta)} $ \quad $ \sum_{t=1}^{T} \tilde{z}_{t-1} \tilde{z}_{t-1}^\top \hat{u}_t^2 ]^{-1/2}
(1-S_b)
[ T^{-(1+\delta)} \sum_{t=T_0+1}^{T} z_{t-1} z_{t-1}^\top \hat{u}_t^2 ]^{1/2}
$. More details about $R_{1,T}^l$ and $R_{2,T}^l$ are discussed in Proposition 2 of \cite{liao2024robust}. 
\end{prop}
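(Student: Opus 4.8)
The plan is to reduce $\check{Q}_{l-qr}$ to a self-normalized sum, Taylor-expand its random normalizer, and then match the resulting expansion term by term to Proposition~2 of \cite{liao2024robust}; the only genuinely new ingredient, relative to the mean-regression case there, is that the quantile IV estimator is merely asymptotically linear, so the exact OLS algebra must be replaced by a Bahadur representation. Taking $J=K=1$ and $H_0:\beta_\tau=0$ in \eqref{defqc1}, write $A_T=\sum_{t=1}^T\tilde{z}_{t-1}x_{t-1}$ and $C_T=\tau(1-\tau)\sum_{t=1}^T\tilde{z}_{t-1}^2$, so that $\check{Q}_{l-qr}=\hat{\beta}_\tau^l\,\hat{f}_{u_\tau}(0)\,|A_T|\,C_T^{-1/2}$. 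Substituting the representation of \eqref{kdad53g} together with Theorem \ref{multh1m}, $\hat{\beta}_\tau^l-\beta=f_{u_\tau}(0)^{-1}A_T^{-1}\sum_{t=1}^T\tilde{z}_{t-1}\psi_\tau(u_{t\tau})+\mathrm{rem}_T$, the factor $A_T^{-1}$ cancels against $|A_T|$ (its sign, which has a nondegenerate limit through $D_T^{-2}A_T\to\Sigma_{zx}$, plays no role in the distributional statements), leaving
\begin{align*}
\check{Q}_{l-qr}=\frac{\hat{f}_{u_\tau}(0)}{f_{u_\tau}(0)}\cdot\frac{\sum_{t=1}^T\tilde{z}_{t-1}\psi_\tau(u_{t\tau})}{C_T^{1/2}}+\mathrm{rem}_T\cdot\hat{f}_{u_\tau}(0)\,|A_T|\,C_T^{-1/2}.
\end{align*}
Here $\hat{f}_{u_\tau}(0)|A_T|C_T^{-1/2}=O_p(D_T)$, so to make the remainder $o_p(T^{\delta/2-1/2})$ I need the sharpening $\mathrm{rem}_T=o_p(D_T^{-1}T^{(\delta-1)/2})$ of the Bahadur representation; and, using the sharp rate behind \eqref{kdj38f3} together with the averaging in \eqref{dj74e5}--\eqref{dj74e6}, $\hat{f}_{u_\tau}(0)/f_{u_\tau}(0)-1=o_p(T^{\delta/2-1/2})$ in the iterated limit $M_1\to\infty$.

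With $Z_T^l=(\Sigma_{zz})^{-1/2}\sum_{t=1}^T\tilde{z}_{t-1}\psi_\tau(u_{t\tau})$ and $\tilde{z}_{t-1}^*=(\Sigma_{zz})^{-1/2}\tilde{z}_{t-1}$, write the surviving ratio as $Z_T^l\,(1+\Delta_T)^{-1/2}$ with $\Delta_T=\tau(1-\tau)\sum_{t=1}^T(\tilde{z}_{t-1}^*)^2-1$, and expand $(1+\Delta_T)^{-1/2}=1-\tfrac12\Delta_T+O_p(\Delta_T^2)$. Since $\Sigma_{zz}$ is, by Theorem \ref{multh1m}, precisely the (normalized) probability limit of $\tau(1-\tau)\sum_{t=1}^T\tilde{z}_{t-1}\tilde{z}_{t-1}^\top$, one has $\Delta_T=o_p(1)$ with exact order $O_p(T^{(\delta-1)/2})$, hence $\Delta_T^2=o_p(T^{\delta/2-1/2})$ and $\check{Q}_{l-qr}=Z_T^l+B_T^l+o_p(T^{\delta/2-1/2})$ with $B_T^l=\varpi_l Z_T^l$, $\varpi_l=-\tfrac12\Delta_T$, matching \eqref{mulpop1th12}; $Z_T^l\xrightarrow{d}\operatorname{N}(0,1)$ follows from the FCLT \eqref{eq3} (with the Ornstein--Uhlenbeck limit for SD predictors), and $B_T^l\xrightarrow{p}0$ because $\varpi_l=o_p(1)$. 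The higher-order term $C_T$ of Proposition \ref{mulpropfdie3} does not reappear: it arose from $\tfrac1T\sum_{t=1}^T z_{t-1}\neq0$, whereas $\sum_{t=1}^T\tilde{z}_{t-1}=0$ identically by the sample-splitting construction of $\tilde{z}_{t-1}$, so the corresponding contribution vanishes exactly.

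Finally, for the rate-$T^{(1-\delta)/2}$ decomposition \eqref{mulpotth3qhe1}, I would expand $\Delta_T$ (and the product $\Delta_T Z_T^l$) over the two halves using $\tilde{z}_{t-1}=(\operatorname{I_K}-S_a)z_{t-1}$ for $t\le T_0$ and $\tilde{z}_{t-1}=(\operatorname{I_K}-S_b)z_{t-1}$ for $t>T_0$, inserting the moving-average representation of $z_{t-1}$ from \eqref{mulivz} and $\Delta x_j=v_j+(\rho_i-1)x_{j-1}$, and splitting into a martingale-difference, mean-zero fluctuation and a deterministic bias. The bias originates from the covariance between the self-normalizing quadratic form in $\tilde{z}_{t-1}$ and the score-weighted sum $\sum_{t}\tilde{z}_{t-1}\psi_\tau(u_{t\tau})$, which is nonzero precisely because of the contemporaneous correlation $\rho_{v\psi}=\Sigma_{vv}^{-1/2}\Sigma_{\psi_\tau v}/\sqrt{\tau(1-\tau)}$ entering through the forward-looking projections $S_a,S_b$; the fluctuation assembles into $R_T^l=W_aR_{1,T}^l+W_bR_{2,T}^l$ with $\operatorname{E}(R_{1,T}^l)=\operatorname{E}(R_{2,T}^l)=0$, and the bias into $-W_l\rho_{v\psi}/\sqrt{-2c_z}$ with $W_l=W_a/\sqrt{\lambda}+W_b/\sqrt{1-\lambda}$, the weights $W_a,W_b$ recording how the two subsample normalizers contribute to $\Sigma_{zz}$; this is the exact analogue of \eqref{dkmul76gh} in Proposition \ref{mulpropfdie3}, and once the pieces are named the remaining estimates are the bookkeeping of \cite{liao2024robust} with $u_t$ replaced by the bounded martingale-difference score $\psi_\tau(u_{t\tau})$ of constant conditional variance $\tau(1-\tau)$. \emph{The main obstacle} is the step that is automatic in \cite{liao2024robust} but not here: proving the one-order-finer Bahadur representation $\mathrm{rem}_T=o_p(D_T^{-1}T^{(\delta-1)/2})$ for $\hat{\beta}_\tau^l$ (hence for $\hat{\beta}_\tau^a$ and $\hat{\beta}_\tau^b$, and thus for $\hat{\beta}_\tau^{l_0}$ via \eqref{mulkde2con2}), which requires a stochastic-equicontinuity control of the quantile subgradient that is uniform in the near-integrated regressor $x_{t-1}$ ($\alpha=1$) and the mildly-integrated IV $\tilde{z}_{t-1}$, and valid simultaneously on both subsamples.
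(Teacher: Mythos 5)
The paper's ``proof'' of this proposition is a single sentence deferring everything to Proposition~2 of \cite{liao2024robust}; your sketch is in effect the argument that citation is standing in for, carried out along the same route: write $\check{Q}_{l-qr}$ as a self-normalized sum via the estimator's asymptotic linearity, Taylor-expand the random normalizer to produce $Z_T^l$ and $B_T^l=\varpi_l Z_T^l$ with $\varpi_l=-\tfrac12\Delta_T$, note that the $C_T$-type term is killed exactly by $\sum_{t=1}^T\tilde{z}_{t-1}=0$, and then push the subsample decomposition of $\Delta_T Z_T^l$ through the moving-average form of $z_{t-1}$ to isolate the bias $-W_l\rho_{v\psi}/\sqrt{-2c_z}$ from the mean-zero fluctuation $R_T^l$. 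All of that matches the structure of the cited mean-regression result, and your orders of magnitude ($\Delta_T=O_p(T^{(\delta-1)/2})$, hence $\Delta_T^2$ negligible; $|A_T|C_T^{-1/2}=O_p(D_T)$) are right.

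The one substantive issue is the obstacle you yourself flag, and it deserves to be stated as a genuine gap rather than a footnote: the expansion requires the Bahadur remainder of $\hat{\beta}_\tau^l$ to be $o_p(D_T^{-1}T^{(\delta-1)/2})$, whereas Theorem~\ref{multh1m} (and the derivation via \eqref{kdad53g}) only delivers $o_p(D_T^{-1})$, which is coarser by a factor of $T^{(1-\delta)/2}$ --- exactly the scale at which $B_T^l$ lives. In \cite{liao2024robust} the estimator is an exact linear functional of the data, so no such remainder exists; in the quantile setting this refinement is a real analytic step (a second-order Bahadur/Kiefer-type bound for the subgradient, uniform over the near-integrated regressor and the mildly integrated IV on both subsamples) that neither you nor the paper proves. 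Your proposal is therefore correct in architecture and honestly incomplete at the same point where the paper's own treatment is silent; to close it one would either have to establish that sharper representation or weaken the claimed remainder order in the proposition.
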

\begin{remark}\label{red8dma}
As specified by \cite{liao2024robust}, the term ${R_T^l}$ could be regarded as the ``residual term" of $B_T^l$ and thus only $-W_l \rho_{ v\psi} / \sqrt{-2 c_z}$ is the main source of size distortion by $B_T^l$ for $\check{Q}_{l-qr}$. So it is evident that the bigger $|c_z|=-c_z$ is, the smaller the size distortion by $B_T^l$ is.
\end{remark}
\section{Additional Empirical Results}\label{app:D}
\subsection{Empirical Results for One-sided Test}\label{onesidetest1}
We set $\check{Q}_{m-qr}^t = \operatorname{sign}(\check{Q}_{m-qr}) \, \max(|\check{Q}_{m-qr}|,4)$ to avoid the absolute value of $\check{Q}_{m-qr}$ being too large to distort the scale of the figure. The significance induced by $\check{Q}_{m-qr}^t$ is the same as that of $\check{Q}_{m-qr}$. However, the graphical representation in its figure distinctly indicates whether $\check{Q}_{m-qr}^t$ is positive or negative.
\begin{figure}[H]
\centering
\includegraphics[width=1\linewidth]{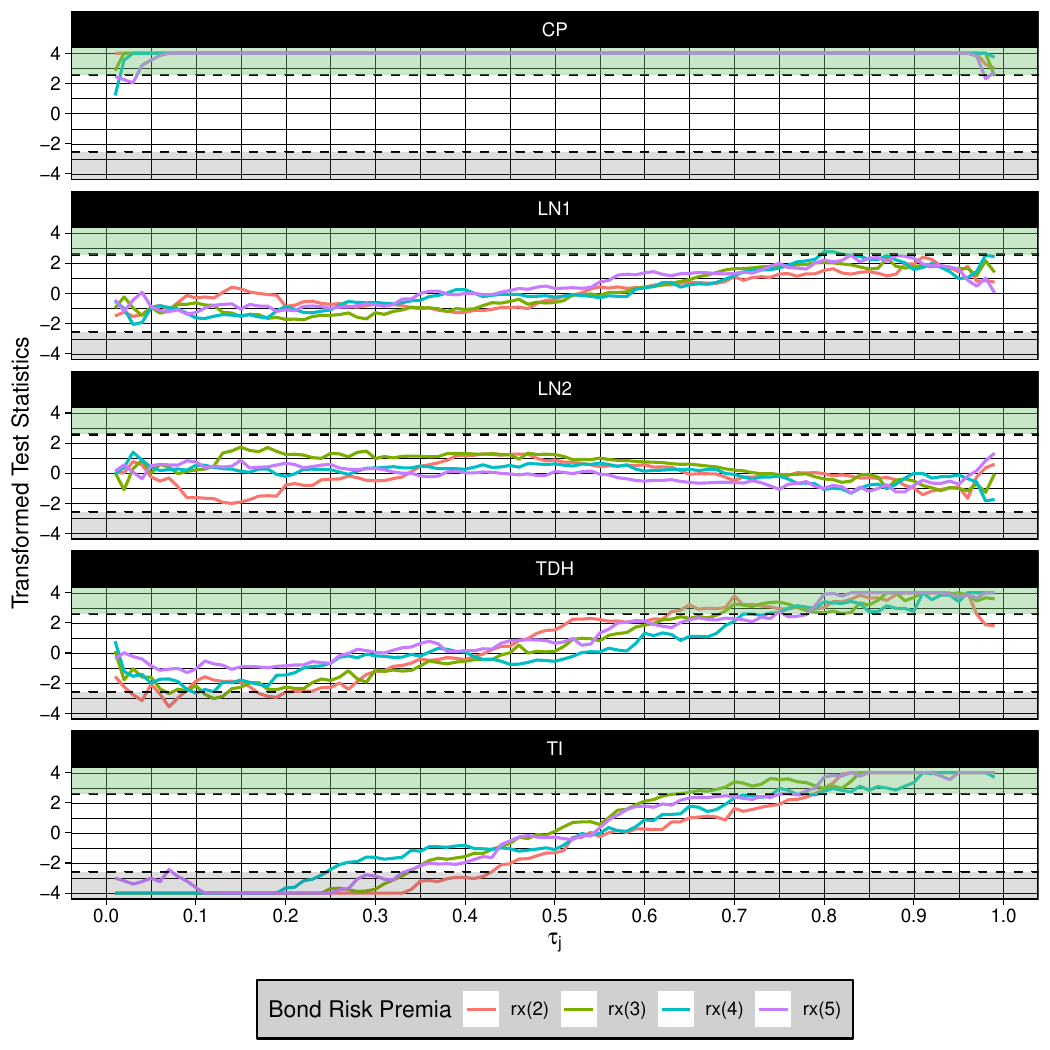}%
% \vspace{-1cm}
\caption{Transformed Test Statistic $\check{Q}_{m-qr}^t$ for One-sided Test}\label{onesidetest}
\end{figure} 
In Figure \ref{onesidetest}, the test statistic $\check{Q}_{m-qr}^t$is significantly positive at the 0.01 level in the light green area, while it is significantly negative at level 0.01 in the gray area. These results are consistent with those of $Q_{m-qr}$ and the economic explanations in subsection \ref{poexpr} of the main text. 
\subsection{Tail Risk Indicator for U.S. Treasury Bonds}\label{tairisin1}
to obtain the real time coefficient estimator $ 
\big[\hat{\dot\mu}_{\tau_j}(\tilde{T}), \hat{\dot\beta}_{\tau_j}(\tilde{T})^\top \big]^\top =\arg \,\min_{\mu_{\tau_j}, \beta_{\tau_j}} \sum_{t=1}^{\tilde{T}-1} \rho_{\tau_j} (y_{t}-\mu_{\tau_j}-\dot\beta_{\tau_j}^\top \dot{x}_{t-1}^j )$, where $\dot{x}_{t-1}^j$ is the one used in (\ref{cer1ev}). 
Then the out-of-sample predicted value of bond risk premia at quantiles $\tau_j$ is $\hat{y}_{\tau_j}^o(\tilde{T})=\hat{\dot\mu}_{\tau_j}(\tilde{T})+ \hat{\dot\beta}_{\tau_j}(\tilde{T})^\top \dot{x}_{t-1}^j$ and the out-of-sample quantile regression error is $\hat{u}_{\tau_j}^o (\tilde{T})=y_{\tilde{T}}-\hat{y}_{\tau_j}^o (\tilde{T})$. The out-of-sample performance in the period from $T_0$ to T is evaluated by qw-CRPS. 
Using qw-CRPS and the out-of-sample prediction results, we construct the tail-risk indicator by the following procedure. The basic idea to construct tail risk indicator is as follows. Since $W\left(\tau_j\right) >0$, then qw-CRPS could be written as
\begin{align}\label{djjhf1}
\operatorname{qw}-\operatorname{CRPS}_{\tilde{T}} =\frac{2}{J-1} \sum_{j=1}^{J-1} W\left(\tau_j\right) \rho_{\tau_j}[\hat{u}_{\tau_j}^o(\tilde{T})] = \frac{2}{J-1} \sum_{j=1}^{J-1} \rho_{\tau_j} [\hat{u}_{\tau_j}^w (\tilde{T})], 
\end{align}
where $\hat{u}_{\tau_j}^w (\tilde{T})= W\left(\tau_j\right) \hat{u}_{\tau_j}^o (\tilde{T})$. Moreover, since $y_{\tilde{T}} =\hat{y}_{\tau_j}^o(\tilde{T}) + \hat{u}_{\tau_j}^o (\tilde{T})$, we have $y_{\tilde{T}}^w = \hat{y}_{\tau_j}^w(\tilde{T}) + \hat{u}_{\tau_j}^w(\tilde{T})$,
where $j=1,2,\cdots J$, $y_{\tilde{T}}^w= W\left(\tau_j\right) y_{\tilde{T}}$ and $\hat{y}_{\tau_j}^w(\tilde{T})= W\left(\tau_j\right) \hat{y}_{\tau_j}^o(\tilde{T})$. As per equation (\ref{djjhf1}), $\operatorname{qw}-\operatorname{CRPS}_{\tilde{T}}$ could be viewed as the quantile loss function for the aforementioned weighted quantile regression at quantile $\tau_j$, where $j=1,2,\cdots,J$. Therefore, it is reasonable to use the out-of-sample prediction value of weighted quantile regression to construct the left and the right tail risk indicators for U.S. treasury bonds as 
\begin{align*}
\operatorname{R(tail)_{\tilde{T}}} = \sum_{t=1}^J \hat{y}_{\tau_j}^w(\tilde{T}) = \sum_{t=1}^J W\left(\tau_j\right) \hat{y}_{t\tau_j}^o, \quad \tilde{T}=T_m,T_m+1,\cdots,T.
\end{align*}
To model the right tail risk indicator, 
we set $W\left(\tau_j\right)=W_r\equiv \left(\tau_j\right)=\tau_j^2(\tau_1^2+ \tau_2^2+ \cdots+\tau_J^2)^{-1}$ and
$W\left(\tau_j\right)=W_l \left(\tau_j\right)=-(1-\tau_j)^2[(1-\tau_1)^2+ (1-\tau_2)^2+ \cdots+(1-\tau_J)^2]^{-1}$ to model the left tail risk indicator. \footnote{$W_r\left(\tau_j\right)$ and $W_l\left(\tau_j\right)$ are the unitized aforementioned weight $\tau_j^2$ evaluating the performance at right tail quantiles and $(1-\tau_j)^2$ evaluating the performance at left tail quantiles respectively such that $W_r(\tau_1)+W_r(\tau_2)+\cdots+W_r(\tau_J)=1$ and $W_l(\tau_1)+W_l(\tau_2)+\cdots+W_l(\tau_J)=-1$. } Therefore, the larger the values of the left and right tail risk indicators, $\operatorname{R(tail)_t}$, the greater the risk associated with 2- and 3-year bonds and 4- and 5-year bonds, respectively. 
\begin{figure}[H]
\centering
\includegraphics[width=1\linewidth]{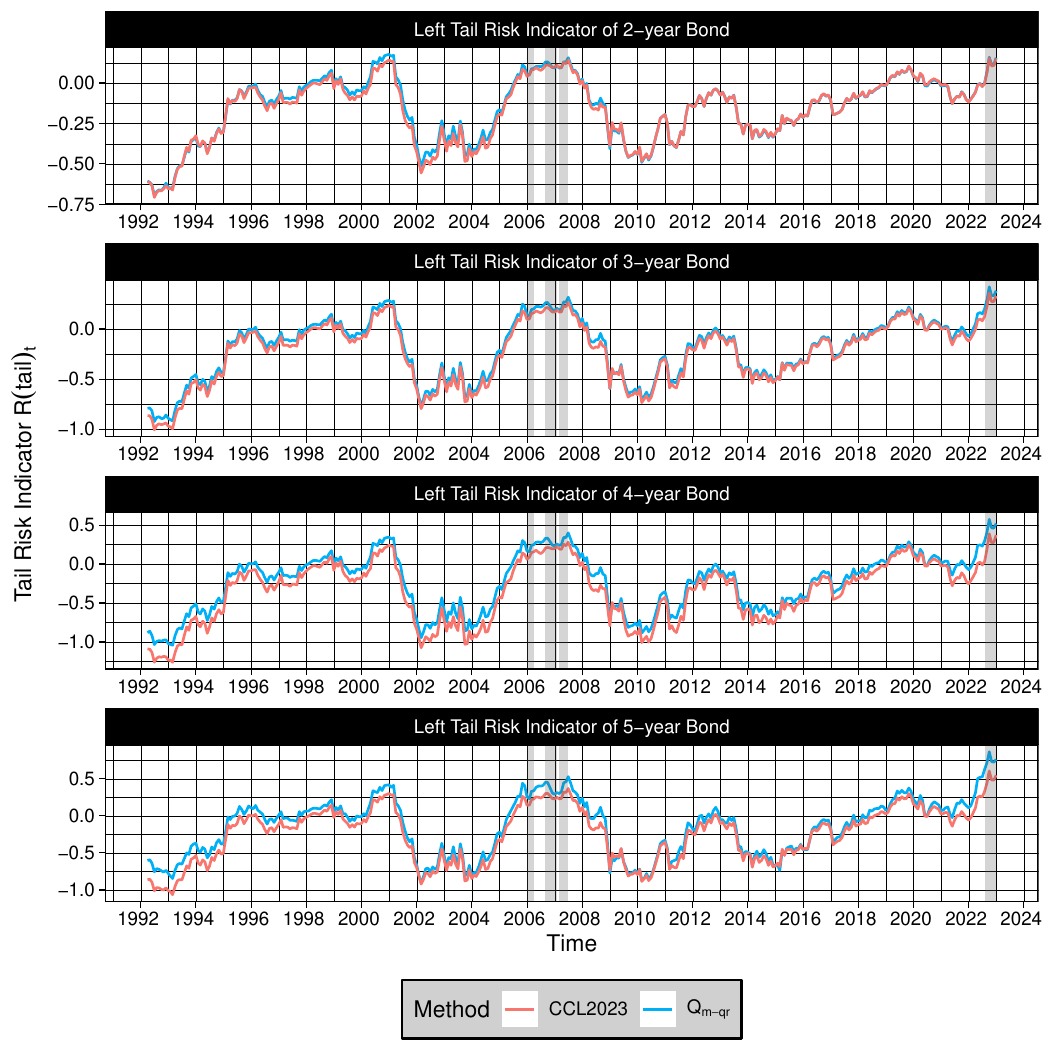}
\caption{Left Tail Risk Indicator in 1992--2022}
\label{lefttailrisk}
\end{figure} 
\begin{figure}[H]
\centering
\includegraphics[width=1\linewidth]{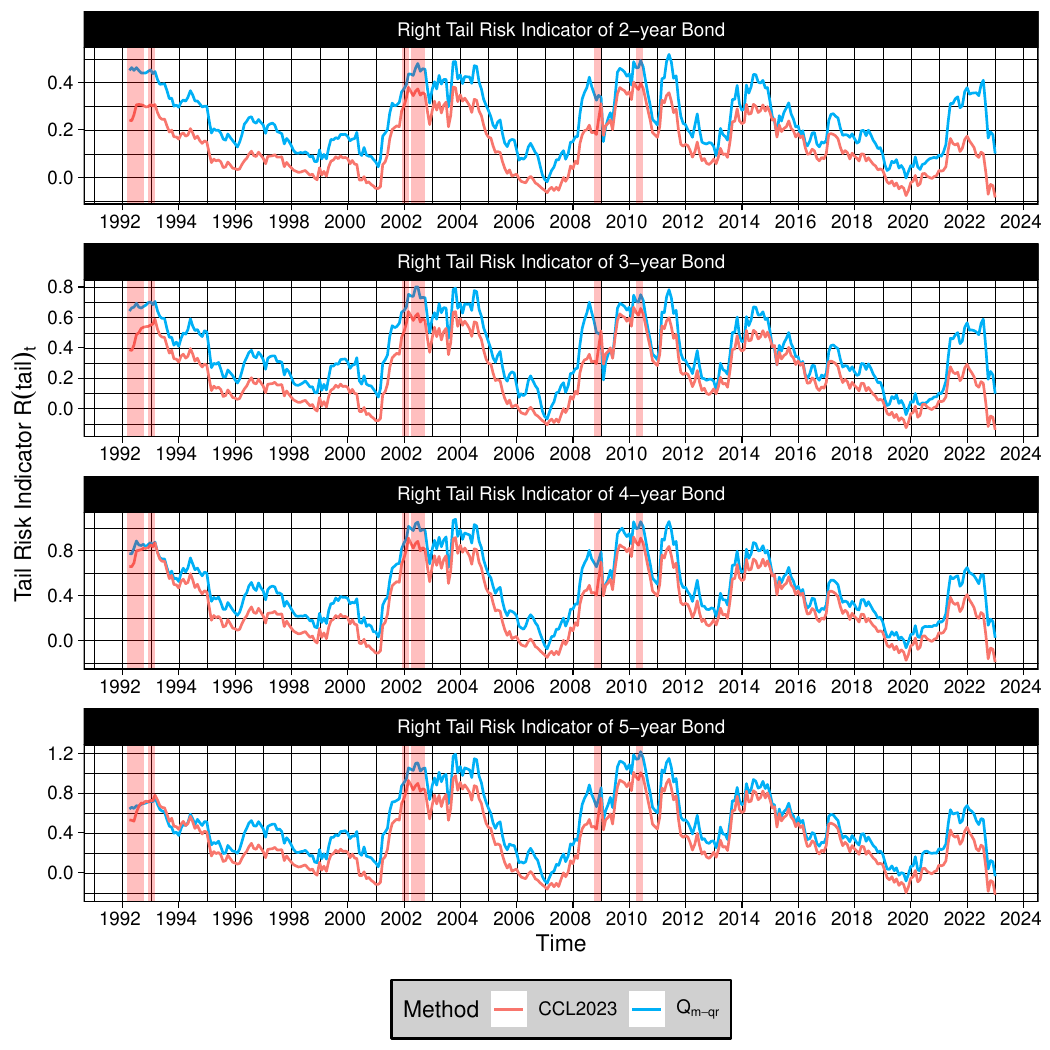}
\caption{Right Tail Risk Indicator in 1992--2022}
\label{righttailrisk}
\end{figure} 
In the following part, the left and the right tail indicator constructed based on $Q_{m-qr}$ and CCL2023 in 1992--2022 is shown in Figures \ref{lefttailrisk} and \ref{righttailrisk}. In Figure \ref{lefttailrisk}, the left tail risk indicator successfully highlights two peaks during the 2008 global financial crisis and the COVID-19 pandemic, indicating increased risk for 2- and 3-year bonds during these periods. Meanwhile, in Figure \ref{righttailrisk}, the right tail risk indicator successfully highlights four peaks during the periods of 1990-1992 (third oil crisis), 2002-2003 (dotcom bubble crisis), and twice for the 2008 global financial crisis, indicating elevated risk for 4- and 5-year bonds during these times. Although the left and right tail risk indicators based on CCL2023 have a similar shape to those based on $Q_{m-qr}$ and also capture these peaks, our approach is ahead of CCL2023. As a result, our tail risk indicators are capable of detecting bond risk more accurately and at an earlier stage than those derived from CCL2023. This superiority is attributed to several key factors: the test conducted in CCL2023 (and thus its tail risk indicators) did not find significant predictive power for TDH across all bond maturities at both tails quantiles, nor for TI at the right tail quantiles of 4- and 5-year bonds. This is in contrast to the findings associated with our $Q_{m-qr}$ test, which identified significant predictive power of those predictors under the same conditions. 
\section{Selected Proof}\label{app:E}
Theorems with WD predictors are easy to show using conventional technics, so we focus on the proof with SD predictors in this section. Equation numbers without the appendix section prefix refer to those in the main text.
Define $X_{t-1}= \left(T^{-1/2}, D_T^{-1}\tilde{x}_{t-1}^\top , \dot{D}_T^{-1} \tilde{v}_{t-1}^\top \right)^\top $.
\begin{lemma}\label{lem.ap1}
Under Assumption \ref{Assumption A.1} and \ref{Assumption A.2}, the following equation holds for SD predictors.
\begin{align}
\sum\limits_{t=1}^T X_{t-1} X_{t-1}^\top
\Rightarrow \operatorname{diag} [1,\Omega_{zx}^\top \Omega_{zz}^{-1} \Omega_{zx}, \int_0^1 \bar{J}_x^c (r) \bar{J}_x^c (r)^\top dr ] \nonumber
\end{align}
\end{lemma}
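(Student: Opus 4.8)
The plan is to read off the block structure of $\sum_{t=1}^T X_{t-1}X_{t-1}^\top$ — a $(1+K+K)\times(1+K+K)$ array with diagonal blocks of sizes $1$, $K$, $K$ — and treat the blocks one at a time, relying crucially on the two \emph{exact} orthogonality identities produced by the first-step least squares in \eqref{firstep}: since an intercept is included, $\sum_{t=1}^T\tilde v_{t-1}=0$; and since $\tilde x_{t-1}$ lies in the span of the regressors, $\sum_{t=1}^T\tilde x_{t-1}\tilde v_{t-1}^\top=0$. These make the $(1,3)$ and $(2,3)$ blocks vanish identically, so the matrix is block diagonal and it suffices to identify the $(1,1)$ entry, the $\tilde v$-block, and the $(1,\tilde x)$-block. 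The $(1,1)$ entry is $\sum_{t=1}^T T^{-1}=1$.

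For the $\tilde v$-block I would use the ``total equals explained plus residual'' identity $\sum_{t=1}^T\tilde v_{t-1}\tilde v_{t-1}^\top=\sum_{t=1}^T\bar x_{t-1}\bar x_{t-1}^\top-\hat\theta\bigl(\sum_{t=1}^T\bar z_{t-1}\bar z_{t-1}^\top\bigr)\hat\theta^\top$, where $\bar x_{t-1}$, $\bar z_{t-1}$ are the sample-demeaned regressor and IVX instrument and $\hat\theta=\bigl(\sum\bar x_{t-1}\bar z_{t-1}^\top\bigr)\bigl(\sum\bar z_{t-1}\bar z_{t-1}^\top\bigr)^{-1}$. The mildly-integrated-instrument asymptotics of \citet{PhillipsMagdalinos2009} and \citet{Kostakisetal2015} under Assumption \ref{Assumption A.1} give $D_T^{-2}\sum\bar z_{t-1}\bar z_{t-1}^\top\xrightarrow{p}\Omega_{vv}/(-2c_z)$, $D_T^{-2}\sum\bar z_{t-1}x_{t-1}^\top\Rightarrow\Omega_{zx}$, and $z_{\lfloor rT\rfloor}=o_p(\sqrt T)$ while $x_{\lfloor rT\rfloor}\asymp\sqrt T$; hence $\hat\theta=O_p(1)$ and, for SD predictors where $\dot D_T=T$, $\dot D_T^{-2}\hat\theta\bigl(\sum\bar z_{t-1}\bar z_{t-1}^\top\bigr)\hat\theta^\top=O_p(T^{\delta-1})=o_p(1)$. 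Thus $\dot D_T^{-2}\sum\tilde v_{t-1}\tilde v_{t-1}^\top=T^{-2}\sum\bar x_{t-1}\bar x_{t-1}^\top+o_p(1)$, and the FCLT $x_{\lfloor rT\rfloor}/\sqrt T\Rightarrow J_x^c(r)$ from \eqref{eq3} together with the continuous mapping theorem delivers $\int_0^1\bar J_x^c(r)\bar J_x^c(r)^\top\,dr$, with $\bar J_x^c$ the demeaned Ornstein--Uhlenbeck process.

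For the $(1,\tilde x)$-block I would split $\tilde x_{t-1}=\overline{\tilde x}_{t-1}+\bar x$ into the mean-zero fitted part $\overline{\tilde x}_{t-1}=\hat\theta\bar z_{t-1}$ and the common level $\bar x=T^{-1}\sum_{s}\tilde x_{s-1}=T^{-1}\sum_{s}x_{s-1}$ (the last equality being the intercept normal equation). The non-degenerate content lives in the first piece: $D_T^{-2}\sum\overline{\tilde x}_{t-1}\overline{\tilde x}_{t-1}^\top=\hat\theta\bigl(D_T^{-2}\sum\bar z_{t-1}\bar z_{t-1}^\top\bigr)\hat\theta^\top$ converges, and feeding in the limits of $D_T^{-2}\sum\bar x_{t-1}\bar z_{t-1}^\top\Rightarrow\Omega_{zx}^\top$ and $D_T^{-2}\sum\bar z_{t-1}\bar z_{t-1}^\top$ identifies this limit with the stated $\Omega_{zx}^\top\Omega_{zz}^{-1}\Omega_{zx}$, where $\Omega_{zz}$, $\Omega_{zx}$ are the limits of the normalized instrument and instrument--regressor Gram matrices. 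Since $\sum\overline{\tilde x}_{t-1}=\hat\theta\sum\bar z_{t-1}=0$, the cross term of $\overline{\tilde x}_{t-1}$ against the intercept direction vanishes, which produces the claimed block-diagonal limit; the level $\bar x=O_p(\sqrt T)$ is asymptotically collinear with the constant and, in the downstream passage from Proposition \ref{thm1} to Theorem \ref{thm2}, is absorbed via the identities $\sum\tilde v_{t-1}=0$ and $\sum\tilde x_{t-1}\tilde v_{t-1}^\top=0$.

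The hard part is this last step: one cannot handle the raw fitted values $\tilde x_{t-1}$ naively, since they carry a diverging $\asymp\sqrt T$ common level collinear with the intercept, so the careful separation of $\overline{\tilde x}_{t-1}=\hat\theta\bar z_{t-1}$ is essential; and, underlying the whole argument, one must establish the mildly-integrated-instrument limit theory for the $z_{t-1}$ built from the near-integrated $x_{t-1}$ — in particular the stochastic-integral form $\Omega_{zx}=-c_z^{-1}E(v_tv_t^\top)-c_z^{-1}\int_0^1 dJ_x^c(r)\,J_x^c(r)^\top$ for SD predictors, which requires a joint FCLT for $(z_{\lfloor rT\rfloor},x_{\lfloor rT\rfloor})$ and a careful treatment of the one-sided (endogeneity) covariance term. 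Given that machinery, the $(1,1)$ entry and the $\tilde v$-block follow routinely from \eqref{eq3} and the continuous mapping theorem.
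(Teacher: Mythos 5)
Your proposal is correct and follows essentially the same route as the paper's proof: exploit the exact OLS orthogonality $\sum_{t}\tilde{v}_{t-1}=0$ and $\sum_{t}\tilde{x}_{t-1}\tilde{v}_{t-1}^\top=0$ to kill the off-diagonal blocks, invoke the Kostakis--Magdalinos--Phillips limits for $D_T^{-2}\sum_t\bar{z}_{t-1}\bar{z}_{t-1}^\top$ and $D_T^{-2}\sum_t\bar{z}_{t-1}x_{t-1}^\top$ to identify the $\tilde{x}$-block as $\hat{\theta}\,(D_T^{-2}\sum_t\bar{z}_{t-1}\bar{z}_{t-1}^\top)\,\hat{\theta}^\top\Rightarrow\Omega_{zx}^\top\Omega_{zz}^{-1}\Omega_{zx}$, and obtain the $\tilde{v}$-block limit $\int_0^1\bar{J}_x^c(r)\bar{J}_x^c(r)^\top dr$. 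The only (harmless) divergences are that you derive the residual Gram limit from the algebraic identity $\sum_t\tilde{v}_{t-1}\tilde{v}_{t-1}^\top=\sum_t\bar{x}_{t-1}\bar{x}_{t-1}^\top-\hat{\theta}(\sum_t\bar{z}_{t-1}\bar{z}_{t-1}^\top)\hat{\theta}^\top$ rather than from the process-level convergence $T^{-1/2}\tilde{v}_{\lfloor rT\rfloor}\Rightarrow\bar{J}_x^c(r)$ used in the paper, and you are more explicit than the paper in separating the demeaned fitted value $\hat{\theta}\bar{z}_{t-1}$ from the $O_p(\sqrt{T})$ level absorbed by the intercept.
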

\begin{proof}[Proof of Lemma \ref{lem.ap1}]
By Lemma B3 and proof of Theorem A of online appendix of \cite{Kostakisetal2015}, for SD predictors, it follows that
\begin{align}\label{appe1}
D_T^{-1}\sum\limits_{t=1}^T \bar{z}_{t-1}\bar{z}_{t-1}^\top D_T^{-1}
\xrightarrow{P} \Omega_{zz}, \quad
D_T^{-1} \sum\limits_{t=1}^T \bar{z}_{t-1} x_{t-1}^\top D_T^{-1}
\xrightarrow{P} \Omega_{zx}.
\end{align}
By equation (\ref{firstep}) and (\ref{appe1}) and continuous mapping theorem, it follows that
\begin{align}\label{appe2}
\hat{\theta}&= D_T^{-2} \sum\limits_{t=1}^T x_{t-1}\bar{z}_{t-1}^\top \left(D_T^{-2}\sum\limits_{t=1}^T \bar{z}_{t-1}\bar{z}_{t-1}^\top \right)^{-1}
\Rightarrow \Omega_{zx}^\top\Omega_{zz}^{-1}.
\end{align}
By equation (\ref{appe1}) and (\ref{appe2}), definition of $\tilde{x}_{t-1}$ and continuous mapping theorem,
\begin{align}\label{appe4}
T^{-1/2} D_T^{-1}\sum\limits_{t=1}^T \tilde{x}_{t-1} = \hat{\theta} D_T^{-1} T^{-1/2}\sum\limits_{t=1}^T z_{t-1}
= O_p(1) O_p[T^{(\delta-1)/2}] = o_p(1).
\end{align}
Moreover, by the property of OLS, $\tilde{x}_{t-1}^\top$ and $\tilde{v}_{t-1}$ are orthogonal, i.e.,
\begin{align}\label{appe8}
\sum\limits_{t=1}^T \tilde{x}_{t-1}\tilde{v}_{t-1}^\top=0,\quad \sum\limits_{t=1}^T \tilde{v}_{t-1}=0.
\end{align}
By the equation $x_{t-1}= \hat{\mu}_x +\hat{\theta} z_{t-1}+ \tilde{v}_{t-1}$ and equation (\ref{appe8}), we have
\begin{align}\label{appe9}
\hat{\mu}_x = \frac{1}{T} \sum\limits_{t=1}^T x_{t-1}- \hat{\theta} \frac{1}{T}\sum\limits_{t=1}^T z_{t-1}.
\end{align}
By equations (\ref{appe2}) and (\ref{appe9}) and the fact that $x_{\lfloor rT \rfloor}/{\sqrt{T}} \Rightarrow J_x^c(r)$ for $0\le r\le 1$, it follows that
\begin{align}\label{appe10}
T^{-1/2}\hat{\mu}_x = \frac{1}{T^{3/2}} \sum\limits_{t=1}^T x_{t-1}+ o_p(1)\Rightarrow \int_0^1 J_x^c (r)dr.
\end{align}
By equation (\ref{appe1}), (\ref{appe2}), (\ref{appe4}) and (\ref{appe10}), definition of $\tilde{x}_{t-1}$ and continuous mapping theorem, 
\begin{align}\label{appe3}
D_T^{-1}\sum\limits_{t=1}^T \tilde{x}_{t-1} \tilde{x}_{t-1}^\top D_T^{-1} = \hat{\theta} D_T^{-1}\sum\limits_{t=1}^T z_{t-1} z_{t-1}^\top D_T^{-1} \hat{\theta}^\top + o_p(1)
\Rightarrow \Omega_{zx}^\top \Omega_{zz}^{-1} \Omega_{zx}. 
\end{align}
As per the equation $x_{t-1}= \hat{\mu}_x +\hat{\theta} z_{t-1}+ \tilde{v}_{t-1}$ and equation (\ref{appe10}), it becomes evident that
\begin{align}\label{appe5}
T^{-1/2} \tilde{v}_{\lfloor rT\rfloor} = T^{-1/2} x_{\lfloor rT\rfloor} - T^{-1/2}\hat{\mu}_x + o_p(1)\Rightarrow \bar{J}_x^c(r),\quad 0\leq r \leq 1.
\end{align}
Thus equation (\ref{appe5}) implies that
\begin{align}\label{appe6}
T^{-2}\sum\limits_{t=1}^T \tilde{v}_{t-1} \tilde{v}_{t-1}^\top
\Rightarrow \int_0^1 \bar{J}_x^c (r) \bar{J}_x^c (r)^\top dr.
\end{align}
By equations(\ref{appe4}), (\ref{appe8}), (\ref{appe3}) and (\ref{appe6}), it follows that
\begin{align}
\sum\limits_{t=1}^T X_{t-1} X_{t-1}^\top
\Rightarrow \operatorname{diag} [1,\Omega_{zx}^\top \Omega_{zz}^{-1} \Omega_{zx}, \int_0^1 \bar{J}_x^c (r) \bar{J}_x^c (r)^\top dr ]. \nonumber
\end{align}
\end{proof}
\begin{lemma}\label{lem.ap2}
Under Assumption \ref{Assumption A.1} and \ref{Assumption A.2}, the following equation holds for SD predictors.
\begin{align}
\sum\limits_{t=1}^T X_{t-1} \psi_\tau (u_{t\tau})
\Rightarrow \left\{ \operatorname{MN}\left[0,\operatorname{diag} (1,\Omega_{zx}^\top \Omega_{zz}^{-1} \Omega_{zx}) \right]^\top, \int_0^1 \bar{J}_x^c (r)^\top dB_{\psi_\tau}(r) \right\}^\top.
\end{align}
\end{lemma}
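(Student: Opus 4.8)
The plan is to split $\sum_{t=1}^T X_{t-1}\,\psi_\tau(u_{t\tau})$ into its three sub-vectors, $T^{-1/2}\sum_{t=1}^T\psi_\tau(u_{t\tau})$, $D_T^{-1}\sum_{t=1}^T\tilde{x}_{t-1}\psi_\tau(u_{t\tau})$ and $\dot{D}_T^{-1}\sum_{t=1}^T\tilde{v}_{t-1}\psi_\tau(u_{t\tau})$, obtain the marginal limit of each, and then assemble them through a joint martingale invariance principle and the continuous mapping theorem. The engine throughout is that, by the quantile restriction $Q_{u_{t\tau}}(\tau\mid\mathcal{F}_{t-1})=0$ (cf. Remark \ref{remaDGP}), $\{\psi_\tau(u_{t\tau})\}$ is a martingale difference sequence with $E[\psi_\tau(u_{t\tau})\mid\mathcal{F}_{t-1}]=0$ and $E[\psi_\tau(u_{t\tau})^2\mid\mathcal{F}_{t-1}]=\tau(1-\tau)$; since $z_{t-1}$, $x_{t-1}$, and (up to the consistent estimation error in $\hat{\mu}_x,\hat{\theta}$) $\tilde{v}_{t-1}$ are $\mathcal{F}_{t-1}$-measurable, the arrays $z_{t-1}\psi_\tau(u_{t\tau})$, $x_{t-1}\psi_\tau(u_{t\tau})$ and $\tilde{v}_{t-1}\psi_\tau(u_{t\tau})$ are themselves martingale difference arrays, which is exactly what removes any stochastic-integral correction term from the limits below.

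For the first sub-vector, the FCLT in (\ref{eq3}) gives $T^{-1/2}\sum_{t\le\lfloor rT\rfloor}\psi_\tau(u_{t\tau})\Rightarrow B_{\psi_\tau}(r)$, so at $r=1$ it converges to a centered Gaussian with variance $\tau(1-\tau)$. For the third sub-vector I would write $\dot{D}_T^{-1}\sum_t\tilde{v}_{t-1}\psi_\tau(u_{t\tau})=T^{-1/2}\sum_t\bigl(T^{-1/2}\tilde{v}_{t-1}\bigr)\psi_\tau(u_{t\tau})$, use $T^{-1/2}\tilde{v}_{\lfloor rT\rfloor}\Rightarrow\bar{J}_x^c(r)$ from (\ref{appe5}), and invoke a stochastic-integral convergence theorem for martingale-difference integrands jointly with the partial-sum process of $\psi_\tau(u_{t\tau})$; because $\tilde{v}_{t-1}\psi_\tau(u_{t\tau})$ is a martingale difference array, the limit is the clean stochastic integral $\int_0^1\bar{J}_x^c(r)\,dB_{\psi_\tau}(r)$, with no drift correction despite the contemporaneous dependence of $v_t$ and $\psi_\tau(u_{t\tau})$. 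For the second sub-vector, the first-step OLS orthogonality together with (\ref{appe2})--(\ref{appe4}) reduces the block to $\hat{\theta}\,D_T^{-1}\sum_t z_{t-1}\psi_\tau(u_{t\tau})$ with $\hat{\theta}\xrightarrow{p}\Omega_{zx}^\top\Omega_{zz}^{-1}$; since $z_t$ is mildly integrated, $D_T^{-2}\sum_t z_{t-1}z_{t-1}^\top$ converges in probability to a non-random matrix (\cite{Kostakisetal2015}), so the martingale CLT gives $D_T^{-1}\sum_t z_{t-1}\psi_\tau(u_{t\tau})\Rightarrow\operatorname{N}(0,\Omega_{zz})$, and Slutsky's theorem yields a Gaussian limit with covariance $\Omega_{zx}^\top\Omega_{zz}^{-1}\Omega_{zz}\bigl(\Omega_{zx}^\top\Omega_{zz}^{-1}\bigr)^\top=\Omega_{zx}^\top\Omega_{zz}^{-1}\Omega_{zx}$.

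The remaining and central step is the joint convergence with the correct dependence pattern. I would show $\bigl(T^{-1/2}\sum_{t\le\lfloor rT\rfloor}\psi_\tau(u_{t\tau}),\ T^{-1/2}\sum_{t\le\lfloor rT\rfloor}v_t,\ D_T^{-1}\sum_t z_{t-1}\psi_\tau(u_{t\tau})\bigr)\Rightarrow\bigl(B_{\psi_\tau}(r),\ B_v(r),\ \mathcal{N}\bigr)$, where $\mathcal{N}\sim\operatorname{N}(0,\Omega_{zz})$ is \emph{independent} of the pair $(B_{\psi_\tau},B_v)$; this independence is what produces the claimed block-diagonal structure of the covariance of the first two coordinates and isolates $\int_0^1\bar{J}_x^c(r)\,dB_{\psi_\tau}(r)$ in the third (recall $J_x^c(r)=\int_0^r e^{(r-s)c}\,dB_v(s)$, so the third limit is measurable with respect to $(B_{\psi_\tau},B_v)$). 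The reason for the independence is the two-timescale structure: $z_{t-1}$ is mildly integrated with memory $T^\delta=o(T)$ and asymptotically mean zero, so by the martingale-difference property every cross-time expectation vanishes and $\operatorname{Cov}\bigl(D_T^{-1}\sum_t z_{t-1}\psi_\tau(u_{t\tau}),\ T^{-1/2}\sum_{t\le\lfloor rT\rfloor}\psi_\tau(u_{t\tau})\bigr)=\tau(1-\tau)\,D_T^{-1}T^{-1/2}\sum_{t\le\lfloor rT\rfloor}E[z_{t-1}]\to 0$, and likewise against the partial sums of $v_t$ using Assumption \ref{Assumption A.1}; this is the decoupling of the mildly-integrated instrument from the near-integrated functionals already exploited by \cite{PhillipsMagdalinos2009} and \cite{Kostakisetal2015}. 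Given this joint convergence, combining the three pieces with the continuous mapping theorem applied to $\bar{J}_x^c(r)=J_x^c(r)-\int_0^1 J_x^c(s)\,ds$ delivers the stated limit.

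I expect the joint-convergence and independence step to be the real obstacle: it rests on the mildly-integrated and martingale-difference asymptotic machinery — a central limit theorem exploiting the separation of time scales, together with a stochastic-integral limit theorem valid under the possibly conditionally heteroskedastic errors of Remark \ref{remaDGP} — rather than on routine algebra. The three marginal limits, by contrast, are essentially bookkeeping on top of (\ref{eq3}), (\ref{appe2})--(\ref{appe5}) and the intermediate results already established in the proof of Lemma \ref{lem.ap1}.
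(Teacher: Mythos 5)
Your proposal is correct and follows essentially the same route as the paper's proof: the same block decomposition of $\sum_t X_{t-1}\psi_\tau(u_{t\tau})$, the mixed-normal limit of the $(1,z_{t-1})$ block taken from the IVX-QR asymptotics of \cite{Lee2016} and \cite{Kostakisetal2015}, multiplication by $\hat{\theta}\xrightarrow{p}\Omega_{zx}^\top\Omega_{zz}^{-1}$ with the continuous mapping theorem, and the FCLT in \eqref{eq3} together with $T^{-1/2}\tilde{v}_{\lfloor rT\rfloor}\Rightarrow \bar{J}_x^c(r)$ to obtain $\int_0^1\bar{J}_x^c(r)\,dB_{\psi_\tau}(r)$. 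The only difference is that you spell out the joint-convergence and asymptotic-independence argument for the mildly integrated instrument block, which the paper simply delegates to the cited references.
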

\begin{proof}[Proof of Lemma \ref{lem.ap1}]
First, the fact
\begin{align}\label{appe11}
\sum\limits_{t=1}^T \left(T^{-1/2}, D_T^{-1}z_{t-1}^\top \right)^\top \psi_\tau (u_{t\tau})\Rightarrow \operatorname{MN}\left\{0,\operatorname{diag} [1, \tau(1-\tau)\Omega_{zz} ] \right\}
\end{align}
holds by \cite{Lee2016}. In light of equation (\ref{appe2}) and (\ref{appe11}) and the continuous mapping theorem, it is clear that
\begin{align}\label{appe12}
&\sum\limits_{t=1}^T \left(T^{-1/2}, D_T^{-1}\tilde{x}_{t-1}^\top \right)^\top \psi_\tau (u_{t\tau}) =
\operatorname{diag}(1,\hat{\theta})\sum\limits_{t=1}^T \left(T^{-1/2}, D_T^{-1}z_{t-1}^\top \right)^\top \psi_\tau (u_{t\tau})\\
&\Rightarrow \operatorname{MN}\left[0,\operatorname{diag} (1,\tau(1-\tau)\Omega_{zx}^\top \Omega_{zz}^{-1} \Omega_{zx} ) \right]. \nonumber
\end{align}
FCLT shown in equation (\ref{eq3}) and the fact $x_{\lfloor rT \rfloor}/{\sqrt{T}} \Rightarrow J_x^c(r)$ for $0\le r\le 1$ imply the following equation by the continuous mapping theorem.
\begin{align}\label{appe13}
\frac{1}{T}\sum\limits_{t=1}^T \tilde{v}_{t-1} \psi_\tau (u_{t\tau})
\Rightarrow \int_0^1 \bar{J}_x^c (r) dB_{\psi_\tau}(r).
\end{align}
Using equations (\ref{appe12}) and (\ref{appe13}) and the joint convergence shown by equation (\ref{eq3}), we can determine that
\begin{align}
\sum\limits_{t=1}^T X_{t-1} \psi_\tau (u_{t\tau})
\Rightarrow \left\{ \operatorname{MN}\left[0,\operatorname{diag} (1,\tau(1-\tau)\Omega_{zx}^\top \Omega_{zz}^{-1} \Omega_{zx} ) \right]^\top, \int_0^1 \bar{J}_x^c (r)^\top dB_{\psi_\tau}(r) \right\}^\top.
\end{align}
\end{proof}
\begin{lemma}[Lemma A.1 of online appendix of CCL2023]\label{keylemma2}
Let $V_T(v)$ be a vector function that satisfies: (i) $-v^\top V_T(\lambda v)\geq -v^\top V_T(v)$ for any $\lambda\geq 1$. (ii) $\mathop{\sup}\limits_{\|v\| < M} \|V_T(v)+Nv-A_T\|=o_p(1)$ where $\|A_T\|=O_p(1)$, $0<M<\infty$, and $N$ is a positive-definite random matrix. Suppose that $v_T$ is a vector such that $\|V_T(v_T)\|=o_p(1)$, then $\|v_T\|=O_p(1)$ and $ v_T =N^{-1} A_T +o_p(1)$.
\end{lemma}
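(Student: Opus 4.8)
The plan is to run the classical convexity-lemma argument (Pollard 1991; Hjort and Pollard 1993) in two stages. First I would establish tightness, $\|v_T\|=O_p(1)$, using the ray-monotonicity in condition (i) together with the positive-definiteness of $N$; once $v_T$ is known to lie in a fixed ball with probability tending to one, the identity $v_T=N^{-1}A_T+o_p(1)$ drops out directly from the local expansion in condition (ii). Throughout I read condition (ii) as holding for every fixed radius $M<\infty$ (with the $o_p(1)$ allowed to depend on $M$), so that $M$ may be taken as large as needed.

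For the tightness step, fix $\epsilon>0$. Since $\|A_T\|=O_p(1)$ and $N$ is positive definite, I would first choose $\delta>0$ and then $M$ large enough that, for all large $T$, $P(\lambda_{\min}(N)\le\delta)<\epsilon$ and $P(\|A_T\|>\delta M/2)<\epsilon$, shrinking $M$ slightly if necessary so that condition (ii) applies on the closed ball of radius $M$. On the event $\{\|v_T\|>M\}$ put $\bar v=Mv_T/\|v_T\|$, so $v_T=\lambda\bar v$ with $\lambda=\|v_T\|/M\ge1$. Condition (i) applied at $v=\bar v$ gives $\bar v^\top V_T(v_T)\le\bar v^\top V_T(\bar v)$. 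Writing $r_T(v)=V_T(v)+Nv-A_T$ and using condition (ii) together with $\bar v^\top N\bar v\ge\lambda_{\min}(N)\|\bar v\|^2$, I obtain on the good events
\begin{align*}
\bar v^\top V_T(v_T)\ \le\ \bar v^\top A_T-\bar v^\top N\bar v+\|\bar v\|\sup_{\|v\|\le M}\|r_T(v)\|\ \le\ M\|A_T\|-\delta M^2+M\,o_p(1)\ \le\ -\tfrac14\delta M^2
\end{align*}
with probability approaching one. Hence $M\,\|V_T(v_T)\|=\|\bar v\|\,\|V_T(v_T)\|\ge|\bar v^\top V_T(v_T)|\ge\tfrac14\delta M^2$, i.e. $\|V_T(v_T)\|\ge\tfrac14\delta M$ on this event. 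Since $\|V_T(v_T)\|=o_p(1)$ by hypothesis, this has vanishing probability, so $P(\|v_T\|>M)\to0$, which gives $\|v_T\|=O_p(1)$.

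For the identification step, I would work on $\{\|v_T\|<M\}$, which now has probability tending to one. Condition (ii) applied at the random point $v=v_T$ gives $V_T(v_T)=A_T-Nv_T+r_T(v_T)$ with $\|r_T(v_T)\|\le\sup_{\|v\|<M}\|r_T(v)\|=o_p(1)$, so rearranging and using $\|V_T(v_T)\|=o_p(1)$ yields $Nv_T=A_T-V_T(v_T)+r_T(v_T)=A_T+o_p(1)$. Positive-definiteness of $N$ — read as supplying $\lambda_{\min}(N)\ge\delta>0$ with probability approaching one, hence $\|N^{-1}\|=O_p(1)$ — then gives $v_T=N^{-1}A_T+o_p(1)$, completing the proof.

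The hard part will be the tightness step, and within it the geometric device of projecting the possibly far-away and random point $v_T$ radially onto the sphere of radius $M$: monotonicity (i) transfers control of $V_T$ at $v_T$ to its value at the boundary point $\bar v$, where the quadratic term $-\bar v^\top N\bar v$ dominates and forces $\bar v^\top V_T(v_T)$ to be of order $-M^2$, contradicting the assumed smallness of $\|V_T(v_T)\|$. The only other point needing care is the interpretation of ``positive-definite random matrix $N$'' as delivering a uniform-in-$T$ lower bound on $\lambda_{\min}(N)$, which is what makes the contradiction quantitative in the first stage and licenses inverting $N$ in the second.
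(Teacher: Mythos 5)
Your proof is correct. Note that the paper itself gives no proof of this lemma --- it is imported verbatim from the online appendix of CCL2023 (and ultimately descends from the convexity/monotonicity lemmas of Jure\v{c}kov\'a, Koenker--Zhao, and Pollard--Hjort type arguments), so there is nothing in the paper to compare against line by line; your two-stage argument (radial projection onto the sphere of radius $M$ plus monotonicity along rays to get tightness, then the local linearization to identify the limit) is exactly the canonical route such a proof takes. The two interpretive points you flag are the right ones and are genuinely needed: condition (ii) must be read as holding for every fixed $M$ (with the $o_p(1)$ depending on $M$), since otherwise $M$ cannot be taken large enough to dominate $\|A_T\|$ in the tightness step, and ``positive-definite random matrix'' must deliver $P(\lambda_{\min}(N)\le\delta)<\epsilon$ for some $\delta>0$, which is what makes the contradiction quantitative and licenses $\|N^{-1}\|=O_p(1)$. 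The only cosmetic gap is in the final step, where the $o_p(1)$ conclusion should formally be assembled by conditioning on $\{\|v_T\|<M\}$ for each $\epsilon$-dependent $M$ and letting $\epsilon\downarrow 0$, but the structure you describe already contains this.
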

\begin{proof}[Proof of Proposition \ref{thm1}]
Following the same procedure of the proof of Theorem 1 in online appendix of CCL2023, Proposition \ref{thm1} holds by Lemma \ref{lem.ap1}, \ref{lem.ap2} and \ref{keylemma2}.
\end{proof}
\begin{proof}[Proof of Theorem \ref{thm2}]
From Proposition \ref{thm1} and equation (\ref{appe2}) and the orthogonality shown in equation (\ref{appe8}), we can deduce the following equation.
\begin{align}\label{appe14}
D_T\left(\hat{\beta}_\tau -\beta_\tau\right)&=\frac{1}{f_{u_\tau}(0)} \left( D_T^{-2} \sum\limits_{t=1}^T \overline{\tilde{x}}_{t-1} \overline{\tilde{x}}_{t-1}^\top \right)^{-1} D_T^{-1} \sum\limits_{t=1}^T \overline{\tilde{x}}_{t-1} \psi_\tau (u_{t\tau}) + o_p(1) \\
&=\frac{1}{f_{u_\tau}(0)} \left[ D_T^{-2} \sum\limits_{t=1}^T \hat{\theta}\overline{z}_{t-1} (\hat{\theta}\overline{z}_{t-1})^\top \right]^{-1} D_T^{-1} \sum\limits_{t=1}^T \overline{\tilde{x}}_{t-1} \psi_\tau (u_{t\tau}) + o_p(1) \nonumber \\
&=\frac{1}{f_{u_\tau}(0)} \left(\hat{\theta} D_T^{-2} \sum\limits_{t=1}^T \overline{z}_{t-1} \overline{z}_{t-1}^\top \hat{\theta}^\top \right)^{-1} \hat{\theta} D_T^{-1} \sum\limits_{t=1}^T \overline{z}_{t-1} \psi_\tau (u_{t\tau}) + o_p(1) \nonumber \\
&=\frac{1}{f_{u_\tau}(0)} (\hat{\theta}^\top)^{-1} \left( D_T^{-2} \sum\limits_{t=1}^T \overline{z}_{t-1} \overline{z}_{t-1}^\top \right)^{-1} \hat{\theta}^{-1}\hat{\theta} D_T^{-1} \sum\limits_{t=1}^T \overline{z}_{t-1} \psi_\tau (u_{t\tau}) + o_p(1) \nonumber \\
&=\frac{1}{f_{u_\tau}(0)} (\hat{\theta}^\top)^{-1} \left( D_T^{-2} \sum\limits_{t=1}^T \overline{z}_{t-1} \overline{z}_{t-1}^\top \right)^{-1} D_T^{-1} \sum\limits_{t=1}^T \overline{z}_{t-1} \psi_\tau (u_{t\tau}) + o_p(1) \nonumber \\
&=\frac{1}{f_{u_\tau}(0)} \left( D_T^{-2} \sum\limits_{t=1}^T \bar{z}_{t-1} x_{t-1}^\top \right)^{-1} D_T^{-1} \sum\limits_{t=1}^T \bar{z}_{t-1} \psi_\tau (u_{t\tau}) + o_p(1) \nonumber
\end{align}
Equation (\ref{appe1}), (\ref{appe11}) and (\ref{appe14}) and Slutsky's Theorem imply that
\begin{align}\label{appe15}
D_T\left(\hat{\beta}_\tau -\beta_\tau\right) \xrightarrow{d}
\begin{cases}
\operatorname{MN}\left[0_K,\frac{1}{f_{u_\tau}(0)^2 } \Omega_{zx}^{-1}\Omega_{zz}\left( \Omega_{zx}^{-1}\right)^\top\right],\quad SD;\\
\operatorname{N}\left[0_K,\frac{1}{f_{u_\tau}(0)^2 } \Omega_{zx}^{-1}\Omega_{zz}\left( \Omega_{zx}^{-1}\right)^\top\right],\quad WD;
\end{cases}.
\end{align}
By equations (\ref{appe14}) and (\ref{appe15}), Theorem \ref{thm2} holds.
\end{proof}
\begin{proof}[Proof of Proposition \ref{thm3}]
Using equations (\ref{dgjh3e1}), (\ref{dgjh3e2}), (\ref{dj74e6}) and (\ref{appe1}), Theorem \ref{thm2} and Slutsky's theorem, Proposition \ref{thm3} is proved.
\end{proof}
\begin{proof}[Proof of Proposition \ref{mulpropfdie3}]
Following the procedure of proof of Proposition 1 of \cite{liao2024robust}, Proposition \ref{mulpropfdie3} is proved by the definition shown in equation (\ref{dgjh3e2}).
\end{proof}
The proof of equation (\ref{muldef2new}) and (\ref{muldef3new}) closely mirrors the proof of Theorem \ref{thm2}, therefore, we omit it for brevity.
\begin{proof}[Proof of Equation (\ref{dj74e5})]
Equation (\ref{kdj38f3}) and the fact that $\overrightarrow{\xi_{t-1}}$ follows i.i.d. $\operatorname{N}(0_{M},I_M)$ imply that
$ \frac{1}{\sqrt{T}} \sum\limits_{t=1}^T \xi_{j,t-1}^{(i)} \psi_\tau (u_{t\tau}) \bigg | \mathcal{F}_T \sim
\operatorname{N}\left[0, \frac{1}{T} \sum\limits_{t=1}^T \psi_\tau (u_{t\tau})^2 \right]$ 
and thus 
\begin{align}\label{kdj48f4} 
\left[ \frac{1}{T} \sum\limits_{t=1}^T \psi_\tau (u_{t\tau})^2\right]^{-1/2} \frac{1}{\sqrt{T}} \sum\limits_{t=1}^T \xi_{j,t-1}^{(i)} \psi_\tau (u_{t\tau}) \bigg | \mathcal{F}_T \sim
\operatorname{N} (0,1), 
\end{align}
where $i=1,\cdots,M_1$ and $j=1,\cdots,M_2$ and $\hat{l}_{\tau}^{(i)}=\left[\hat{l}_{1,\tau}^{(i)},\hat{l}_{2,\tau}^{(i)},\cdots,\hat{l}_{M_2,\tau}^{(i)}\right]^\top$.
By the independence of $\xi_{j,t-1}^{(i)}$ for $j=1,\cdots,M_2$ and $i=1,\cdots,M_1$, it can be verified that conditional on the whole sample $\mathcal{F}_T$, $\left[ \frac{1}{T} \sum\nolimits_{t=1}^T \psi_\tau (u_{t\tau})^2\right]^{-1/2} \frac{1}{\sqrt{T}} \sum\nolimits_{t=1}^T \xi_{j,t-1}^{(i)} \psi_\tau (u_{t\tau}) $ are uncorrelated for all i and j.
Therefore, equation \eqref{kdj48f4} and the uncorrelation between the (i,j)th items of $\left[ \frac{1}{T} \sum\nolimits_{t=1}^T \psi_\tau (u_{t\tau})^2\right]^{-1/2} \frac{1}{\sqrt{T}} \sum\nolimits_{t=1}^T \xi_{j,t-1}^{(i)} \psi_\tau (u_{t\tau}) $ lead to the following joint convergence.
\begin{align}\label{dj74e4}
&\left[ \frac{1}{T} \sum_{t=1}^T \psi_\tau (u_{t\tau})^2\right]^{-1/2} \Bigg\{\frac{1}{\sqrt{T}} \sum_{t=1}^T (\xi_{t-1}^{(1)})^\top \psi_\tau (u_{t\tau}), \frac{1}{\sqrt{T}} \sum_{t=1}^T (\xi_{2,t-1}^{(2)})^\top \psi_\tau (u_{t\tau}),\cdots, \\
&\frac{1}{\sqrt{T}} \sum_{t=1}^T (\xi_{t-1}^{(M_1)})^\top \psi_\tau (u_{t\tau})\Bigg\} \Bigg | \mathcal{F}_T \sim
\operatorname{N}\left[\operatorname{0_M}, \operatorname{I_M}\right],\nonumber
\end{align}
where $M=M_1M_2$. Hence,
\begin{align}\label{kdj68f7} 
\left[ \frac{1}{T} \sum\limits_{t=1}^T \psi_\tau (u_{t\tau})^2\right]^{-1} \sum_{i=1}^{M_1}\sum_{j=1}^{M_2} \left[\frac{1}{\sqrt{T}} \sum\limits_{t=1}^T \xi_{j,t-1}^{(i)} \psi_\tau (u_{t\tau})\right]^2 \bigg | \mathcal{F}_T \sim \chi_M^2, 
\end{align}
Using equation (\ref{kdj68f7}), we can derive that
\begin{align}\label{kuy1} 
\operatorname{E} \left\{ \left[ \frac{1}{T} \sum\limits_{t=1}^T \psi_\tau (u_{t\tau})^2\right]^{-1} \sum_{i=1}^{M_1}\sum_{j=1}^{M_2} \left[\frac{1}{\sqrt{T}} \sum\limits_{t=1}^T \xi_{j,t-1}^{(i)} \psi_\tau (u_{t\tau})\right]^2 \bigg | \mathcal{F}_T \right\} = M,\\ \label{k1uy1} 
\operatorname{Var} \left\{ \left[ \frac{1}{T} \sum\limits_{t=1}^T \psi_\tau (u_{t\tau})^2\right]^{-1} \sum_{i=1}^{M_1}\sum_{j=1}^{M_2} \left[\frac{1}{\sqrt{T}} \sum\limits_{t=1}^T \xi_{j,t-1}^{(i)} \psi_\tau (u_{t\tau})\right]^2 \bigg | \mathcal{F}_T \right\} = 2M.
\end{align}
From equations (\ref{kdj38f3}) and (\ref{k1uy1}),
we can deduce that
\begin{align}\label{kuy33}
&\operatorname{Var}\left\{\frac{1}{M} \sum_{i=1}^{M_1}\sum_{j=1}^{M_2} \left[\sqrt{T}\hat{l}_{j,\tau}^{(i)}\right]^2 \bigg| \mathcal{F}_T \right\} \\
& = \frac{1}{f_{u_\tau}(0)^4} \frac{1}{M^2} \operatorname{Var}\left\{\sum_{i=1}^{M_1}\sum_{j=1}^{M_2} \left[\frac{1}{\sqrt{T}} \sum\limits_{t=1}^T \xi_{j,t-1}^{(i)} \psi_\tau (u_{t\tau})\right]^2 \bigg| \mathcal{F}_T \right\}+o_p(1) \nonumber\\
& = \frac{1}{f_{u_\tau}(0)^4} \frac{1}{M^2} \operatorname{Var}\left\{\tau(1-\tau)\left[ \frac{1}{T} \sum\limits_{t=1}^T \psi_\tau (u_{t\tau})^2\right]^{-1} \sum_{i=1}^{M_1}\sum_{j=1}^{M_2} \left[\frac{1}{\sqrt{T}} \sum\limits_{t=1}^T \xi_{j,t-1}^{(i)} \psi_\tau (u_{t\tau})\right]^2 
+ o_p(1)\bigg| \mathcal{F}_T \right\}+o_p(1) \nonumber\\ 
& = \frac{[\tau(1-\tau)]^2 }{f_{u_\tau}(0)^4} \frac{1}{M^2} \operatorname{Var}\left\{\left[ \frac{1}{T} \sum\limits_{t=1}^T \psi_\tau (u_{t\tau})^2\right]^{-1} \sum_{i=1}^{M_1}\sum_{j=1}^{M_2} \left[\frac{1}{\sqrt{T}} \sum\limits_{t=1}^T \xi_{j,t-1}^{(i)} \psi_\tau (u_{t\tau})\right]^2 
\bigg| \mathcal{F}_T \right\}+o_p(1) \nonumber\\ 
& = \frac{[\tau(1-\tau)]^2 }{f_{u_\tau}(0)^4} \frac{1}{M^2} 2M+o_p(1) \nonumber\\ 
& = \frac{[\tau(1-\tau)]^2 }{f_{u_\tau}(0)^4} \frac{2}{M} +o_p(1)\rightarrow 0, \nonumber
\end{align} 
as $M\rightarrow\infty$ and $T\rightarrow\infty$. Similarly, equations (\ref{kdj38f3}) and (\ref{kuy1}) imply that
\begin{align}\label{kuy3}
&\operatorname{E}\left\{\frac{1}{M} \sum_{i=1}^{M_1}\sum_{j=1}^{M_2} \left[\sqrt{T}\hat{l}_{j,\tau}^{(i)}\right]^2 \bigg| \mathcal{F}_T \right\} \\
&= \frac{1}{f_{u_\tau}(0)^2} \frac{1}{M} \operatorname{E}\left\{\sum_{i=1}^{M_1}\sum_{j=1}^{M_2} \left[\frac{1}{\sqrt{T}} \sum\limits_{t=1}^T \xi_{j,t-1}^{(i)} \psi_\tau (u_{t\tau})\right]^2 \bigg| \mathcal{F}_T\right\} +o_p(1) \nonumber\\
&= \frac{\tau(1-\tau)}{f_{u_\tau}(0)^2} \frac{1}{M} \operatorname{E}\left\{\left[ \frac{1}{T} \sum\limits_{t=1}^T \psi_\tau (u_{t\tau})^2\right]^{-1} \sum_{i=1}^{M_1}\sum_{j=1}^{M_2} \left[\frac{1}{\sqrt{T}} \sum\limits_{t=1}^T \xi_{j,t-1}^{(i)} \psi_\tau (u_{t\tau})\right]^2 +o_p(1) \bigg| \mathcal{F}_T \right\} +o_p(1) \nonumber \\
&= \frac{\tau(1-\tau)}{f_{u_\tau}(0)^2} \frac{1}{M} \operatorname{E}\left\{\left[ \frac{1}{T} \sum\limits_{t=1}^T \psi_\tau (u_{t\tau})^2\right]^{-1} \sum_{i=1}^{M_1}\sum_{j=1}^{M_2} \left[\frac{1}{\sqrt{T}} \sum\limits_{t=1}^T \xi_{j,t-1}^{(i)} \psi_\tau (u_{t\tau})\right]^2 \bigg| \mathcal{F}_T \right\} +o_p(1) \nonumber \\
&= \frac{\tau(1-\tau)}{f_{u_\tau}(0)^2} \frac{1}{M} M +o_p(1)= \frac{\tau(1-\tau)}{f_{u_\tau}(0)^2} +o_p(1), \nonumber 
\end{align} 
Equations (\ref{kuy33}) and (\ref{kuy3}) and the continuous mapping theorem imply that
\begin{align} 
\frac{1}{M} \sum_{i=1}^{M_1}\sum_{j=1}^{M_2} \left[\sqrt{T}\hat{l}_{j,\tau}^{(i)}\right]^2 \bigg| \mathcal{F}_T & \xrightarrow{P} \frac{\tau(1-\tau)}{f_{u_\tau}(0)^2} \frac{1}{M} M=\frac{1}{f_{u_\tau}(0)^2 } \tau(1-\tau), 
\end{align}
as $M\rightarrow\infty$ and $T\rightarrow\infty$. 
By the iterative law of expectation and equations (\ref{kuy33}) and (\ref{kuy3}), we have
\begin{align}\label{kuy4}
&\operatorname{E}\left\{\frac{1}{M} \sum_{i=1}^{M_1}\sum_{j=1}^{M_2} \left[\sqrt{T}\hat{l}_{j,\tau}^{(i)}\right]^2 \right\} = \frac{\tau(1-\tau)}{f_{u_\tau}(0)^2} +o_p(1), \\ \label{kuy5}
& \operatorname{Var}\left\{\frac{1}{M} \sum_{i=1}^{M_1}\sum_{j=1}^{M_2} \left[\sqrt{T}\hat{l}_{j,\tau}^{(i)}\right]^2 \right\} \rightarrow 0, 
\end{align} 
By equations (\ref{kuy4}) and (\ref{kuy5}), it follows that
\begin{align} 
\frac{1}{M} \sum_{i=1}^{M_1}\sum_{j=1}^{M_2} \left[\sqrt{T}\hat{l}_{j,\tau}^{(i)}\right]^2 & \xrightarrow{P} \frac{1}{f_{u_\tau}(0)^2 } \tau(1-\tau), 
\end{align}
as $M\rightarrow\infty$ and $T\rightarrow\infty$. 
\end{proof}
\begin{lemma}\label{sasb69}
Under Assumption \ref{Assumption A.1} and \ref{Assumption A.2}, it follows that
\begin{align}
&S_a \Rightarrow \tilde{S}_a =
\begin{cases}
J_x^c(1) J_x^c(\lambda)^\top \left[ J_x^c(\lambda)^\top J_x^c(\lambda) \right]^{-1},\quad \text{SD}; \\
B_v(1) B_v(\lambda)^\top \left[ B_v(\lambda)^\top B_v(\lambda) \right]^{-1},\quad \text{WD};
\end{cases}\\
&S_b \Rightarrow \tilde{S}_b =
\begin{cases}
J_x^c(1) \left[J_x^c(1) - J_x^c(\lambda) \right]^\top \left\{ \left[J_x^c(1) - J_x^c(\lambda) \right]^\top \left[J_x^c(1) - J_x^c(\lambda) \right] \right\}^{-1},\quad \text{SD}; \\
B_v(1) \left[B_v(1) - B_v(\lambda) \right]^\top \left\{ \left[B_v(1) - B_v(\lambda) \right]^\top \left[B_v(1) - B_v(\lambda) \right] \right\}^{-1},\quad \text{WD};
\end{cases}.\nonumber
\end{align}
\end{lemma}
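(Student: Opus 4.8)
The plan is to prove Lemma~\ref{sasb69} by a continuous mapping argument, reducing the weak limits of $S_a$ and $S_b$ to the limit behaviour of three sample averages of the instrument $z_{t-1}$. Write $\bar z_{(1:T)}=\tfrac1T\sum_{t=1}^{T}z_{t-1}$, $\bar z_{(1:T_0)}=\tfrac1{T_0}\sum_{t=1}^{T_0}z_{t-1}$ and $\bar z_{(T_0:T)}=\tfrac1{T-T_0}\sum_{t=T_0+1}^{T}z_{t-1}$. Rewriting the definitions of $S_a$ and $S_b$, they are the explicit rank-one rational expressions
\begin{align*}
S_a=\bar z_{(1:T)}\,\bar z_{(1:T_0)}^{\top}\bigl(\bar z_{(1:T_0)}^{\top}\bar z_{(1:T_0)}\bigr)^{-1},\qquad S_b=\bar z_{(1:T)}\,\bar z_{(T_0:T)}^{\top}\bigl(\bar z_{(T_0:T)}^{\top}\bar z_{(T_0:T)}\bigr)^{-1},
\end{align*}
so that any common \emph{scalar} normalising sequence multiplying the three averages cancels identically in the ratio. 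It therefore suffices to establish the joint weak limit of the three normalised averages and then apply the continuous mapping theorem.

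First I would pin down the partial-sum asymptotics of the instrument. For SD predictors, summing the recursion~(\ref{mulivz}) by parts, using $(1-\rho_z)^{-1}=T^{\delta}/(-c_z)$, $\sum_{s=1}^{\lfloor rT\rfloor}\Delta x_s=x_{\lfloor rT\rfloor}-x_0$ and $x_{\lfloor rT\rfloor}/\sqrt T\Rightarrow J_x^c(r)$, one obtains $T^{-(1/2+\delta)}\sum_{t=1}^{\lfloor rT\rfloor}z_{t-1}\Rightarrow(-c_z)^{-1}J_x^c(r)$ as processes in $r\in[0,1]$, the remainder (arising from $z_{\lfloor rT\rfloor}=O_p(T^{\delta/2})$, the initial value $z_{i,0}$, and the $c_i/T$ drift inside $\Delta x_{i,t}$) being $o_p$ of the leading term uniformly in $r$; this is exactly the convergence already used for $T^{-1/2-\delta}\sum z_{t-1}$ in Proposition~\ref{mulpropfdie3} and in Lemmas~\ref{lem.ap1}--\ref{lem.ap2}, specialised to an arbitrary endpoint, and it follows from \cite{PhillipsMagdalinos2009} and the proof of Theorem~A in \cite{Kostakisetal2015}. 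Evaluating at $r=1$ and $r=\lambda$ and using $T_0=\lfloor\lambda T\rfloor$ delivers the joint limit of $\bigl(\bar z_{(1:T)},\bar z_{(1:T_0)},\bar z_{(T_0:T)}\bigr)$, after a common scalar normalisation, as deterministic multiples of $J_x^c(1)$, $J_x^c(\lambda)$ and $J_x^c(1)-J_x^c(\lambda)$. The WD case is identical in structure, with $z_t$ asymptotically equivalent to $x_t$ and the Brownian motion $B_v$ of~(\ref{eq3}) replacing $J_x^c$.

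Substituting these joint limits into the rank-one expressions above and invoking the continuous mapping theorem (together with Slutsky's theorem to absorb the $o_p$ remainders) then gives $S_a\Rightarrow\tilde S_a$ and $S_b\Rightarrow\tilde S_b$: the scalar factor carried by all three averages cancels between numerator and denominator, leaving precisely $J_x^c(1)J_x^c(\lambda)^{\top}\bigl[J_x^c(\lambda)^{\top}J_x^c(\lambda)\bigr]^{-1}$ and $J_x^c(1)\bigl[J_x^c(1)-J_x^c(\lambda)\bigr]^{\top}\bigl\{\bigl[J_x^c(1)-J_x^c(\lambda)\bigr]^{\top}\bigl[J_x^c(1)-J_x^c(\lambda)\bigr]\bigr\}^{-1}$ in the SD case, and the corresponding $B_v$ forms in the WD case. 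To legitimise the continuous mapping step one must verify that the scalar denominators are almost surely nonzero under the limiting law: since $\Omega_{vv}>0$ by Assumption~\ref{Assumption A.1}, the Ornstein--Uhlenbeck vector $J_x^c(\lambda)$ has an absolutely continuous, non-degenerate law, so $\Pr[J_x^c(\lambda)^{\top}J_x^c(\lambda)=0]=0$, and the increment $J_x^c(1)-J_x^c(\lambda)$ satisfies the same; hence the map $(a,b)\mapsto a\,b^{\top}(b^{\top}b)^{-1}$ is continuous on a set of full limit probability (and the analogous statement holds in the WD case).

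The step I expect to be the main obstacle is the first one: establishing the \emph{functional} (not merely pointwise) convergence of $T^{-(1/2+\delta)}\sum_{t=1}^{\lfloor rT\rfloor}z_{t-1}$ for the mildly integrated instrument, and in particular controlling the accumulated contributions of the initial condition $z_{i,0}$ and of the $O(T^{-\alpha})$ drift inside $\Delta x_{i,t}$ so that they stay uniformly $o_p$ of the leading term. Once this functional limit and its non-degeneracy at $r=\lambda$ and $r=1$ are secured, the continuous mapping and Slutsky steps are routine and parallel the corresponding lemma in \cite{liao2024robust}.
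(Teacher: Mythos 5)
Your overall strategy --- telescoping the IV recursion to get $\sum_{t=1}^{\lfloor rT\rfloor}z_{t-1}=\tfrac{T^{\delta}}{-c_z}\,(x_{\lfloor rT\rfloor}-x_0-z_{\lfloor rT\rfloor}+z_0)$, deducing $T^{-(1/2+\delta)}\sum_{t=1}^{\lfloor rT\rfloor}z_{t-1}\Rightarrow(-c_z)^{-1}J_x^c(r)$, and then applying the continuous mapping theorem to the rank-one ratios --- is the natural one and is what the paper (which simply defers to Theorem 2 of \cite{liao2024robust}) intends. Two side remarks: you do not actually need the functional convergence you single out as the main obstacle, since only the joint convergence of the two partial sums at $r=\lambda$ and $r=1$ enters, and that already follows from the assumed process-level convergence $x_{\lfloor rT\rfloor}/\sqrt{T}\Rightarrow J_x^c(r)$ combined with the telescoping identity; and your non-degeneracy check for the scalar denominators is a worthwhile addition the paper does not spell out.

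The gap is in the cancellation step. The three averages are \emph{not} carried by a common scalar: all three partial sums scale as $T^{1/2+\delta}$, but you divide them by $T$, $T_0=\lambda T$ and $T-T_0=(1-\lambda)T$ respectively, so $\bar z_{(1:T)}\approx T^{\delta-1/2}(-c_z)^{-1}J_x^c(1)$ while $\bar z_{(1:T_0)}\approx \lambda^{-1}T^{\delta-1/2}(-c_z)^{-1}J_x^c(\lambda)$ and $\bar z_{(T_0:T)}\approx(1-\lambda)^{-1}T^{\delta-1/2}(-c_z)^{-1}\left[J_x^c(1)-J_x^c(\lambda)\right]$. In the ratio $S_a=\bar z_{(1:T)}\bar z_{(1:T_0)}^{\top}\bigl(\bar z_{(1:T_0)}^{\top}\bar z_{(1:T_0)}\bigr)^{-1}$ the common factor $T^{\delta-1/2}(-c_z)^{-1}$ indeed cancels, but the extra $\lambda^{-1}$ enters once in the numerator and twice in the denominator, leaving a residual factor $\lambda$; likewise $S_b$ acquires a residual $(1-\lambda)$. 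Carried out carefully, your own computation therefore delivers $S_a\Rightarrow\lambda\,J_x^c(1)J_x^c(\lambda)^{\top}\bigl[J_x^c(\lambda)^{\top}J_x^c(\lambda)\bigr]^{-1}$ and $S_b\Rightarrow(1-\lambda)\,J_x^c(1)\bigl[J_x^c(1)-J_x^c(\lambda)\bigr]^{\top}\bigl\{\bigl[J_x^c(1)-J_x^c(\lambda)\bigr]^{\top}\bigl[J_x^c(1)-J_x^c(\lambda)\bigr]\bigr\}^{-1}$, not the constant-free expressions you claim it ``leaves precisely.'' A quick sanity check confirms this: $S_a$ satisfies $S_a\bar z_{(1:T_0)}=\bar z_{(1:T)}$ identically, so any weak limit $L$ of $S_a$ must satisfy $L\,\lambda^{-1}J_x^c(\lambda)=J_x^c(1)$, which forces the factor $\lambda$. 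You need either to track these constants and reconcile the outcome with the displayed statement, or to explain why they are absent; as written, the proof asserts a cancellation that does not occur, and this is exactly the kind of bookkeeping the lemma exists to pin down (the factors $\lambda$ and $1-\lambda$ reappear explicitly in $\Sigma_{zz}$ of Theorem \ref{multh1m}, so they cannot be waved away).
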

\begin{proof}[Proof of Lemma \ref{sasb69}, Theorem \ref{multh1m}]
The proofs are very similar to the Theorem 2 of \cite{liao2024robust}, and thus are omitted for brevity.
\end{proof}
\begin{proof}[Proof of Proposition \ref{t7sf8f1}]
The proof closely follows the proof of Theorem 3 of \cite{liao2024robust}, and is therefore omitted.
\end{proof}
\begin{proof}[Proof of Proposition \ref{mulpropp2}]
The proof directly follows from the proof of Proposition 2 in \cite{liao2024robust}.
\end{proof}
\begin{lemma}\label{app1djgh}
Under Assumption \ref{Assumption A.1} and \ref{Assumption A.2}, it follows that
$D_T^{-2}\operatorname{\widehat{Avar}}(\hat{\beta}_\tau^l )\Rightarrow \operatorname{ Avar}(\hat{\beta}_\tau^l )$.
\end{lemma}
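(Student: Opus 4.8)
The plan is to split $\operatorname{\widehat{Avar}}(\hat{\beta}_\tau^l)$ into its three sample ingredients --- the scalar $\hat{f}_{u_\tau}(0)$, the cross--moment matrix $\sum_{t=1}^T \tilde{z}_{t-1}x_{t-1}^\top$, and the IV second--moment matrix $\sum_{t=1}^T \tilde{z}_{t-1}\tilde{z}_{t-1}^\top$ --- to match each (after the $D_T$--normalisation that renders every sample moment $O_p(1)$, i.e.\ the normalisation already used in Theorem~\ref{multh1m}) with the population object appearing in $\operatorname{Avar}(\hat{\beta}_\tau^l)$, and then to recombine via the continuous mapping theorem and Slutsky's theorem. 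Inserting $\operatorname{I_K}=D_T^{-1}D_T$ between the factors of $\operatorname{\widehat{Avar}}(\hat{\beta}_\tau^l)$ in \eqref{walde2tes} and collecting the $D_T$ powers, $D_T^{-2}\operatorname{\widehat{Avar}}(\hat{\beta}_\tau^l)$ is $\hat{f}_{u_\tau}(0)^{-2}$ times
\begin{align*}
\big(D_T^{-1}\sum\nolimits_{t=1}^{T} \tilde{z}_{t-1}x_{t-1}^\top D_T^{-1}\big)^{-1}\,\big[\tau(1-\tau)\,D_T^{-1}\sum\nolimits_{t=1}^{T} \tilde{z}_{t-1}\tilde{z}_{t-1}^\top D_T^{-1}\big]\,\big(D_T^{-1}\sum\nolimits_{t=1}^{T} x_{t-1}\tilde{z}_{t-1}^\top D_T^{-1}\big)^{-1},
\end{align*}
where each factor is $O_p(1)$, so it suffices to identify the joint weak limit of the three bracketed quantities (the first and last being transposes of one another) together with the probability limit of $\hat{f}_{u_\tau}(0)$.

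I would then handle the three limits in turn. (i) $\hat{f}_{u_\tau}(0)\xrightarrow{P}f_{u_\tau}(0)$, hence $\hat{f}_{u_\tau}(0)^{-2}\xrightarrow{P}f_{u_\tau}(0)^{-2}$, is exactly the consistency of the simulation--based density estimator established around \eqref{dj74e6} (as $M_1\to\infty$ and $T\to\infty$). (ii) $D_T^{-1}\sum_{t=1}^{T} \tilde{z}_{t-1}x_{t-1}^\top D_T^{-1}\Rightarrow \Sigma_{zx}$: substituting $\tilde{z}_{t-1}=(\operatorname{I_K}-S_a)z_{t-1}$ for $1\le t\le T_0$ and $\tilde{z}_{t-1}=(\operatorname{I_K}-S_b)z_{t-1}$ for $T_0+1\le t\le T$, using $S_a\Rightarrow\tilde{S}_a$ and $S_b\Rightarrow\tilde{S}_b$ from Lemma~\ref{sasb69}, and using the sub--sample analogues of the mildly integrated IV limits in \eqref{appe1} (from \cite{Kostakisetal2015}), so that $D_T^{-2}\sum_{t\le T_0} z_{t-1}x_{t-1}^\top$ and $D_T^{-2}\sum_{t>T_0} z_{t-1}x_{t-1}^\top$ converge to the truncated stochastic--integral plus $\operatorname{E}(v_tv_t^\top)$ pieces $-c_z^{-1}\int_0^{\lambda} dJ_x^c(r)\,J_x^c(r)^\top$ and $-c_z^{-1}\int_{\lambda}^{1} dJ_x^c(r)\,J_x^c(r)^\top$ respectively; collecting everything reproduces exactly the $\Sigma_{zx}$ of Theorem~\ref{multh1m} (indeed this convergence already sits implicitly in that theorem's Bahadur representation). (iii) $\tau(1-\tau)D_T^{-1}\sum_{t=1}^{T} \tilde{z}_{t-1}\tilde{z}_{t-1}^\top D_T^{-1}\Rightarrow\Sigma_{zz}$ follows by the same device, now from $D_T^{-2}\sum_{t\le T_0} z_{t-1}z_{t-1}^\top$ and $D_T^{-2}\sum_{t>T_0} z_{t-1}z_{t-1}^\top$ converging to $\lambda$-- and $(1-\lambda)$--scaled copies of the IV variance in \eqref{appe1}, combined with Lemma~\ref{sasb69}. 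For WD predictors all of this collapses to non--random positive--definite limits by the law of large numbers.

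It then remains to recombine. The map $(a,B,C)\mapsto a^{-2}B^{-1}C(B^{-1})^\top$ is continuous on $\{a\neq 0\}\times\{B:\det B\neq 0\}\times\mathbb{R}^{K\times K}$, so by the continuous mapping theorem applied to the joint weak limit of the brackets and Slutsky's theorem for the $\hat{f}_{u_\tau}(0)^{-2}$ factor, the normalised $\operatorname{\widehat{Avar}}(\hat{\beta}_\tau^l)$ converges to $f_{u_\tau}(0)^{-2}\Sigma_{zx}^{-1}\Sigma_{zz}(\Sigma_{zx}^{-1})^\top=\operatorname{Avar}(\hat{\beta}_\tau^l)$, which is the claim. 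The step I expect to be the main obstacle is justifying the matrix inversion inside the continuous mapping theorem in the SD case: unlike the WD case, $\Sigma_{zx}$ is random --- built from the Ornstein--Uhlenbeck process $J_x^c$ and the splitting maps $\tilde{S}_a,\tilde{S}_b$ --- so one must verify it is almost surely nonsingular. This holds because the non--degeneracy of $J_x^c$ together with the strictly positive--definite $\operatorname{E}(v_tv_t^\top)$ contributions keeps $\Sigma_{zx}$ a.s.\ nonsingular, and because $\tilde{S}_a,\tilde{S}_b$ are a.s.\ well defined (the Gram matrices $J_x^c(\lambda)^\top J_x^c(\lambda)$ and $[J_x^c(1)-J_x^c(\lambda)]^\top[J_x^c(1)-J_x^c(\lambda)]$ are a.s.\ invertible). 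Securing the joint --- not merely marginal --- convergence of the two sub--sample moment matrices together with $S_a,S_b$ is the remaining point requiring care, but it is inherited from the single FCLT \eqref{eq3} by the continuous mapping theorem, exactly as in Theorem~2 of \cite{liao2024robust}.
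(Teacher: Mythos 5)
Your proposal is correct and follows essentially the same route as the paper, whose proof is a one-line appeal to the moment-convergence lemmas and the consistency of the simulation-based density estimator in (\ref{dj74e6}); you simply make explicit the factor-by-factor limits (via Lemma \ref{sasb69} and the sub-sample analogues of (\ref{appe1})), the joint convergence, and the a.s.\ nonsingularity needed for the continuous mapping theorem. Note only that the displayed product you write down is in fact $D_T^{2}\operatorname{\widehat{Avar}}(\hat{\beta}_\tau^l)$ rather than $D_T^{-2}\operatorname{\widehat{Avar}}(\hat{\beta}_\tau^l)$ --- the sign of the exponent in the lemma statement itself appears to be a typo --- but this does not affect the substance of your argument.
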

\begin{proof}[Proof of Lemma \ref{app1djgh}]
Lemma \ref{lem.ap1} and \ref{lem.ap2}, equation (\ref{dj74e6}) imply Lemma \ref{app1djgh}.
\end{proof}
\begin{proof}[Proof of Proposition \ref{mulkeythe2}]
By equation (\ref{defqc1}), it follows that
\begin{align}
{\check{Q}_{l-qr}}&= \frac{ R \hat{\beta}_\tau^l -r_\tau }{ \left[ R\operatorname{\widehat{Avar}}(\hat{\beta}_\tau^l )R^\top\right]^{1/2}} = \frac{ R \hat{\beta}_\tau^l- R\beta_\tau + R\beta_\tau -r_\tau }{ \left[ R\operatorname{\widehat{Avar}}(\hat{\beta}_\tau^l )R^\top\right]^{1/2}} \\
&= \frac{ R (\hat{\beta}_\tau^l- \beta_\tau) }{ \left[ R\operatorname{\widehat{Avar}}(\hat{\beta}_\tau^l )R^\top\right]^{1/2}} + \frac{ R\beta_\tau -r_\tau }{ \left[ R\operatorname{\widehat{Avar}}(\hat{\beta}_\tau^l )R^\top\right]^{1/2}}\nonumber
\end{align}
As per Lemma \ref{app1djgh} and Theorem \ref{multh1m} and the continuous mapping theorem, it become evident that
\begin{align}\label{appe16}
\frac{ R \hat{\beta}_\tau^l- \beta_\tau }{ \left[ R\operatorname{\widehat{Avar}}(\hat{\beta}_\tau^l )R^\top\right]^{1/2}} \xrightarrow{d} \operatorname{N}(0,1).
\end{align}
And Lemma \ref{app1djgh} implies that
\begin{align}\label{appe17}
\frac{ R\beta_\tau -r_\tau }{ \left[ R\operatorname{\widehat{Avar}}(\hat{\beta}_\tau^l )R^\top\right]^{1/2}}\Rightarrow \left[ R\operatorname{Avar}(\hat{\beta}_\tau^l)R^\top \right]^{-1/2} b_\tau
\end{align}
Using equations (\ref{appe16}) and (\ref{appe17}) and the continuous mapping theorem, we have the following result that
\begin{align}
\check{Q}_{l-qr} \xrightarrow{d} \check{Q}_{l-qr}^* \overset{d}{=} \operatorname{N}(0,1)+\left[ R\operatorname{Avar}(\hat{\beta}_\tau^l)R^\top \right]^{-1/2} b_\tau
\end{align}
when J=1.
Moreover, by equation (\ref{walde2tes}), it follows that
\begin{align}\label{jdj767g}
{Q_{l-qr}} &= \left(R \hat{\beta}_\tau^l -r_\tau \right)^\top \left\{R \operatorname{\widehat{Avar}}(\hat{\beta}_\tau^l ) R^\top \right\}^{-1} \left(R \hat{\beta}_\tau^l -r_\tau \right) \\
&= \left(R \hat{\beta}_\tau^l- R\beta_\tau + R\beta_\tau -r_\tau \right)^\top \left\{R \operatorname{\widehat{Avar}}(\hat{\beta}_\tau^l ) R^\top \right\}^{-1} \left(R \hat{\beta}_\tau^l - R\beta_\tau + R\beta_\tau-r_\tau \right) \nonumber\\
&= \left(R \hat{\beta}_\tau^l- R\beta_\tau \right)^\top \left\{R \operatorname{\widehat{Avar}}(\hat{\beta}_\tau^l ) R^\top \right\}^{-1} \left(R \hat{\beta}_\tau^l - R\beta_\tau \right) \nonumber\\
&+ \left(R \hat{\beta}_\tau^l- R\beta_\tau \right)^\top \left\{R \operatorname{\widehat{Avar}}(\hat{\beta}_\tau^l ) R^\top \right\}^{-1} \left( R\beta_\tau-r_\tau \right) \nonumber\\
&+ \left( R\beta_\tau -r_\tau \right)^\top \left\{R \operatorname{\widehat{Avar}}(\hat{\beta}_\tau^l ) R^\top \right\}^{-1} \left(R \hat{\beta}_\tau^l - R\beta_\tau \right) \nonumber\\
&+ \left( R\beta_\tau -r_\tau \right)^\top \left\{R \operatorname{\widehat{Avar}}(\hat{\beta}_\tau^l ) R^\top \right\}^{-1} \left( R\beta_\tau-r_\tau \right) \nonumber
\end{align}
As per Lemma \ref{app1djgh} and Theorem \ref{multh1m} and the continuous mapping theorem, it becomes evident that
\begin{align}\label{appe18}
\left(R \hat{\beta}_\tau^l- R\beta_\tau \right)^\top \left\{R \operatorname{\widehat{Avar}}(\hat{\beta}_\tau^l ) R^\top \right\}^{-1} \left(R \hat{\beta}_\tau^l - R\beta_\tau \right)
\Rightarrow \chi_J^2,
\end{align}
and
\begin{align}\label{appe19}
\left(R \hat{\beta}_\tau^l- R\beta_\tau \right)^\top \left\{R \operatorname{\widehat{Avar}}(\hat{\beta}_\tau^l ) R^\top \right\}^{-1} \left( R\beta_\tau-r_\tau \right) \xrightarrow{d} Z_Q=Z_\beta^\top \left[ R\operatorname{Avar}(\hat{\beta}_\tau^l)R^\top \right]^{-1} b_\tau,
\end{align}
and
\begin{align}\label{appe20}
\left( R\beta_\tau-r_\tau \right) ^\top \left\{R \operatorname{\widehat{Avar}}(\hat{\beta}_\tau^l ) R^\top \right\}^{-1} \left(R \hat{\beta}_\tau^l- R\beta_\tau \right) \xrightarrow{d} Z_Q^\top= Z_Q.
\end{align}
And Lemma \ref{app1djgh} implies that
\begin{align}\label{appe21}
\left( R\beta_\tau -r_\tau \right)^\top \left\{R \operatorname{\widehat{Avar}}(\hat{\beta}_\tau^l ) R^\top \right\}^{-1} \left( R\beta_\tau-r_\tau \right) \xrightarrow{d} b_\tau^\top \left[ R\operatorname{Avar}(\hat{\beta}_\tau^l)R^\top \right]^{-1} b_\tau.
\end{align}
By equations (\ref{jdj767g}), (\ref{appe18}), (\ref{appe19}), (\ref{appe20}) and (\ref{appe21}) and Slutsky's Theorem, Proposition \ref{mulkeythe2} holds.
\end{proof}
The proof of equation (\ref{dk4r2jgj1}) is very similar to the proof of Proposition \ref{mulkeythe2}, so we omit it.
\begin{proof}[Proof of Theorem \ref{jg8jgh1}]
By Proposition \ref{mulkeythe2} and equations (\ref{oraqu2np8}) and (\ref{dk4r2jgj1}) and the continuous mapping theorem, Theorem \ref{jg8jgh1} holds.
\end{proof}
\begin{proof}[Proof of Theorem \ref{jg8jghh1}]
By Proposition \ref{mulkeythe2} and equations (\ref{nocoraqu2np8}) and (\ref{2dk4r2jgj1}) and the continuous mapping theorem, Theorem \ref{jg8jghh1} holds.
\end{proof}
\end{document}